\DeclareMathOperator\supp{supp}
\newtheorem{theorem}{Theorem}
\newtheorem{lemma}{Lemma}
\newtheorem{definition}{Definition}
\newtheorem{corollary}{Corollary}
\newtheorem{remark}{Remark}
\newtheorem{proposition}{Proposition}
\newtheorem{example}{Example}
\begin{document}

\title{On Strong Secrecy for Multiple Access Channels with States and Causal CSI}

\author{%
\IEEEauthorblockN{Yiqi Chen\IEEEauthorrefmark{1}\IEEEauthorrefmark{2},
                Tobias Oechtering\IEEEauthorrefmark{2},
                Mikael Skoglund\IEEEauthorrefmark{2},
                and Yuan Luo\IEEEauthorrefmark{1}}\\
\IEEEauthorblockA{\IEEEauthorrefmark{1}%
              Shanghai Jiao Tong University,  
              Shanghai,
              China, 
              \{chenyiqi,yuanluo\}@sjtu.edu.cn}\\
\IEEEauthorblockA{\IEEEauthorrefmark{2}%
              KTH Royal Institute of Technology,  
              100 44 Stockholm,
              Sweden, 
              \{oech,skoglund\}@kth.se}
\thanks{Corresponding author: Yuan Luo. Part of the earlier results in this paper were accepted for presentation at the 2023 IEEE International Symposium on Information Theory (ISIT)[arXiv version: https://arxiv.org/abs/2305.01393].} 
}


\maketitle
\thispagestyle{empty}
\pagestyle{empty}
\begin{abstract}
  Strong secrecy communication over a discrete memoryless state-dependent multiple access channel (SD-MAC) with an external eavesdropper is investigated. The channel is governed by discrete memoryless and i.i.d. channel states and the channel state information (CSI) is revealed to the encoders in a causal manner. Inner and outer bounds are provided. To establish the inner bound, we investigate coding schemes incorporating wiretap coding and secret key agreement between the sender and the legitimate receiver. Two kinds of block Markov coding schemes are proposed. The first one is a new coding scheme using backward decoding and Wyner-Ziv coding and the secret key is constructed from a lossy description of the CSI. The other one is an extended version of the existing coding scheme for point-to-point wiretap channels with causal CSI. A numerical example shows that the achievable region given by the first coding scheme can be strictly larger than the second one. However, these two schemes do not outperform each other in general and there exists some numerical examples that in different channel models each coding scheme achieves some rate pairs that cannot be achieved by another scheme. Our established inner bound reduces to some best-known results in the literature as special cases. We further investigate some capacity-achieving cases for state-dependent multiple access wiretap channels (SD-MAWCs) with degraded message sets. It turns out that the two coding schemes are both optimal in these cases.
\end{abstract}

\begin{IEEEkeywords}
  Multiple access channel, wiretap channel, causal channel state information, strong secrecy.
\end{IEEEkeywords}

%
\IEEEpeerreviewmaketitle

\section{Introduction}
Secure communication over a discrete memoryless channel (DMC) was first studied in \cite{wyner1975wire} where the sender communicates to the legitimate receiver over the main channel in the presence of an external eavesdropper through a degraded version of the main channel. The model was further extended to more general case called broadcast channels with confidential messages in \cite{csiszar1978broadcast}. Following these landmark papers, secrecy capacity results of different generalized models have been reported in recent years. In \cite{liang2008multiple}, multiple access channel with confidential messages was investigated. Each sender can receive a noisy version of the input sequence from the other and is treated as an eavesdropper by another user. Capacity-equivocation regions and perfect secrecy capacity regions of these models were characterized. Multiple access wiretap channel with an external eavesdropper under strong secrecy constraint was considered in \cite{yassaee2010multiple} and an achievable region was provided. Strong secrecy of MAC with common messages/conference encoders was investigated in \cite{wiese2013strong}. Relay channels with confidential messages were studied in \cite{lai2008relay}\cite{oohama2007capacity}. Interference channels with confidential messages were studied in \cite{liu2008discrete}\cite{liang2009capacity}. For continuous alphabets, Gaussian wiretap channels were studied in \cite{leung1978gaussian}.

Coding for channels with random parameters was first studied in \cite{shannon1958channels}. The transition probability of the channel is governed by channel state, and the channel state information (CSI) is revealed to the encoder in a causal manner. For the case that CSI is revealed to the encoder in a noncausal way, \cite{gel1980coding} solved the problem using random binning technique. Although it was proved by Shannon that Shannon strategy is optimal for a point-to-point state-dependent DMC (SD-DMC), in \cite{lapidoth2012multiple}, the authors showed that Shannon strategy is suboptimal for state-dependent multiple access channels (SD-MAC). It was proved that using block Markov coding with Wyner-Ziv coding and backward decoding achieves a strictly larger achievable region than Shannon strategy in some cases. SD-MAC with independent states at each sender was investigated in \cite{lapidoth2012multiple2}. A more general channel model is that the channel states vary in an unknown manner, which is called arbitrarily varying channels (AVCs), and was first considered in \cite{blackwell1959capacity}. Arbitrarily varying channels with CSI at the encoder were studied in \cite{pereg2018arbitrarily}\cite{pereg2019arbitrarily}.

Secure communication over channels with CSI attracts a lot of attention recently. Existing results show that knowing CSI at legitimate users' side can improve the secrecy achievable rate. One widely used method to achieve secure communication is using wiretap codes\cite{wyner1975wire}, which introduces some additional local randomness in the encoding procedure and protects the messages using the statistical properties of the channel itself. Wiretap channel with noncausal CSI at the encoder was studied in \cite{chen2008wiretap} and lower and upper bounds of the secrecy capacity were presented by combining random binning and wiretap coding. The model was then revisited in \cite{dai2012some} and further studied in \cite{goldfeld2019wiretap} with a more stringent secrecy constraint. AVCs with secrecy constraints were investigated in \cite{chen2021strong}\cite{chen2022strong}. In \cite{koga2013information}, the authors investigated the minimum size of uniform random numbers that approximate a given output distribution through a channel and named it channel resolvability. It provides the base of a general framework of wiretap code design and was proved in \cite{bloch2013strong} that strong secrecy can be achieved based on channel resolvability. Such output distribution analysis based wiretap codes design was further extended in \cite{goldfeld2016arbitrarily}\cite{goldfeld2016semantic} and semantic security was achieved.

For a state-dependent channel, knowing CSI can help secure transmission over the channel since the system participants have some additional resources to construct a secret key and encrypt part of our messages by using Shannon one-time pad cipher \cite{shannon1949communication}. This combination of wiretap codes and secret key agreement has been used in recent works on state-dependent wiretap channels (SD-WTCs). In \cite{chia2012wiretap}, the authors addressed secure communication over wiretap channels with causal CSI at both the encoder and the decoder. The coding scheme uses block Markov coding and generates the secret key in each block upon observing the channel state sequence of the last block. Lower and upper bounds of the secrecy capacity were provided with a secrecy capacity result when the main channel is less noisy than the wiretap channel. The author of \cite{fujita2016secrecy} provided a lower bound of the physically degraded wiretap channels when causal CSI at the encoder and an upper bound when noncausal CSI at the encoder. The condition that the inner and outer bounds meet each other was also given. The authors of \cite{sasaki2019wiretap} extended the above mentioned works about wiretap channels with causal CSI at the encoder by considering strong secrecy constraint. A block Markov coding scheme with key agreement\cite{csiszar2011information} and forward decoding was adopted. Inner and outer bounds with some capacity-achieving cases were provided in the paper. The results were further strengthened in \cite{sasaki2021wiretap} to semantic secrecy constraint by using soft-covering method\cite{goldfeld2019wiretap}.

The main contribution of this paper is an achievable region and its corresponding coding schemes for the SD-MAC with causal CSI at encoders and an external eavesdropper. In addition to extending the coding scheme giving the state-of-the-art strong secrecy result for the SD-WTC\cite{sasaki2019wiretap} to multi-user case, we further propose a new coding scheme using lossy compression and block Markov coding with backward decoding. We notice that in \cite{sonee2014wiretap} a superposition coding scheme with lossy compression and backward decoding was proposed with a different key construction method aiming at a weak secrecy result. However, the secrecy analysis of the coding scheme is not convincing to us. More details are discussed in Remark \ref{rem: worst case assumption}.  The coding scheme in \cite{sasaki2019wiretap} is extended to the multi-user case. It uses state sequences to construct secret keys, and lossless compression is adopted to reconstruct state sequences at the decoder side. The motivation of our new coding scheme is the fact that the lossless compression step causes some additional cost at the decoder side and one can decrease this cost by using lossy compression and constructing secret keys based on the lossy descriptions of the state sequences. It was also proved in \cite{lapidoth2012multiple} that the Shannon strategy is not optimal for SD-MAC when causal CSI at both encoders and block Markov coding with lossy compression achieves a strictly larger achievable rate region. We show by different numerical examples that our new coding scheme achieves some rate pairs that cannot be achieved by the extended coding scheme of that in \cite{sasaki2019wiretap}, and the two coding schemes do not outperform each other in general. Several different capacity-achieving cases are also provided.

The rest of the paper is organized as follows. In Section II, we provide the notations and definitions of the channel model considered in this paper. In Section III, we summarize our capacity results, which include an inner bound and an outer bound. The inner bound consists of three different regions, each corresponding to a different coding scheme. The first two coding schemes (we refer to them as Coding scheme  1 and Coding scheme 2, respectively) combine wiretap coding and secret key agreement while the third coding scheme only uses wiretap coding. Explanations of these regions are also presented in the section. Sections IV and V provide coding schemes for the inner bound result. The coding scheme in Section IV is a block Markov coding with backward decoding, where the secret key is constructed based on the lossy description of the channel state sequence observed by the encoders according to the secret key agreement\cite{csiszar2000common}. Section V generalizes the coding scheme in \cite{sasaki2019wiretap} to a multi-user case. Section VI provides some numerical examples and applications of our capacity results. One first numerical example gives two channel models in which one coding scheme achieves some rate points that cannot be achieved by another scheme, the second example shows that the achievable region of Coding scheme 1 can be strictly larger than the region of Coding scheme 2, and the third example is a `state-reproducing' channel model where the inner and outer bounds meet each other for both Coding schemes  1 and 2. Then, four capacity-achieving cases including one-sender model, causal CSI at only one side, causal CSI at one side and strictly causal CSI at another side and the state-independent channel model are considered. Section VII concludes this paper.
\section{Definitions}
\subsection{Notations}\label{sec: notations}
Throughout this paper, random variables and corresponding sample values are denoted by capital letters and lowercase letters, e.g. $X$ and $x$. Sets are denoted by calligraphic letters. Capital and lowercase letters in boldface represent n-length random and sample sequences, respectively, e.g. $\boldsymbol{X}=(X_1,X_2,\dots,X_n)$ and $\boldsymbol{x}=(x_1,x_2,\dots,x_n)$. Let $\mathcal{X}^n$  be the n-fold Cartesian product of $\mathcal{X}$, which is the set of all possible $\boldsymbol{x}$. To denote substrings, let $X^{b}=(X_1,X_2,\dots,X_b)$ and $X^{[b+1]}=(X_{b+1},X_{b+2},\dots,X_n)$. We use $P_X$ to denote the probability mass functions (PMFs) of the random variable $X$ and $P_{XY},P_{X|Y}$ to denote the joint PMFs and conditional PMFs, respectively. The set of all PMFs over a given set $\mathcal{X}$ is written by $\mathcal{P}(\mathcal{X})$. For a sequence $\boldsymbol{x}$ generated i.i.d. according to some distribution $P_X$, we denote $P_X^n(\boldsymbol{x})=\prod_{i=1}^n P_X(x)$.

For given random variable $X$ with distribution $P_X$, we use $\mathbb{E}[X]=\sum_{x\in\mathcal{X}}P_X(x)\cdot x$ to represent its expectation. For two distributions $P_X,Q_X\in\mathcal{P}(\mathcal{X})$, the KL Divergence between $P_X$ and $Q_X$ is denoted by 
\begin{align*}
  D(P_X||Q_X)=\sum_{x\in\mathcal{X}}P_X(x)\log\frac{P_X(x)}{Q_X(x)}.
\end{align*}

For any distribution $P_X$ on $\mathcal{X}$, a sequence $\boldsymbol{x}\in\mathcal{X}^n$ is $\delta-$typical if 
  \begin{align*}
    \left| \frac{1}{n}N(x|\boldsymbol{x}) - P_X(x) \right| \leq \delta
  \end{align*}
  for any $x\in\mathcal{X}$, where $N(x|\boldsymbol{x})$ is the number of symbol $x$ in sequence $\boldsymbol{x}$. The set of such sequences is denoted by $T^n_{P_X,\delta}$. 

  For a conditional distribution $P_{Y|X}$, a sequence $\boldsymbol{y}\in\mathcal{Y}^n$ is $\delta-$conditional typical if
  \begin{align*}
    \left| \frac{1}{n}N(x,y|\boldsymbol{x},\boldsymbol{y}) - \frac{1}{n}N(x|\boldsymbol{x})P_{Y|X}(y|x) \right| \leq \delta
  \end{align*}
  for any $x\in\mathcal{X},y\in\mathcal{Y}$, where $N(x,y|\boldsymbol{x},\boldsymbol{y})$ is the number of symbol pair $(x,y)$ in $(\boldsymbol{x},\boldsymbol{y})$. The set of such sequences is denoted by $T^n_{P_{Y|X},\delta}[\boldsymbol{x}]$.
The definition of joint typical sequences $T^n_{P_{XY},\delta}$ follows similarly and further define set $T^n_{P_{XY},\delta}[\boldsymbol{x}]:= \{\boldsymbol{y}: (\boldsymbol{x},\boldsymbol{y})\in T^n_{P_{XY},\delta}\}$. Some useful properties of typical sequences are provided in Appendix \ref{app: properties of typical sequences}.
\subsection{Channel Model}
In this subsection, we give the definition of the channel model considered in this paper and the coding scheme used for the communication.
\begin{figure}[h]
  \centering
  \includegraphics[scale=0.7]{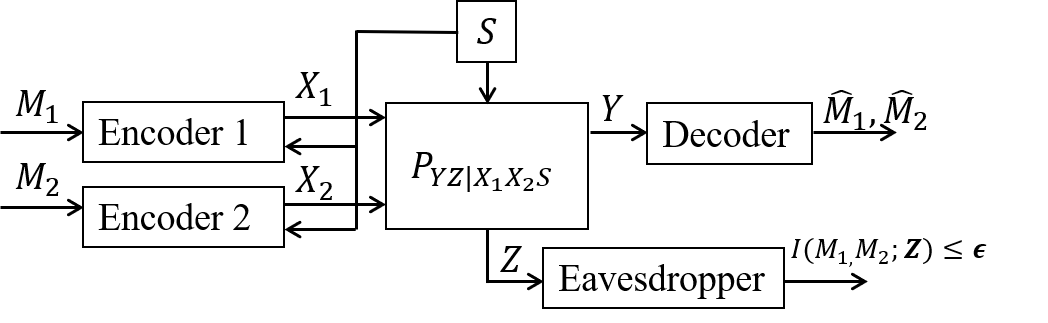}
  \caption{SD-MAWC with Causal CSI at Encoders.}
\end{figure}
\begin{definition}\label{def: channel model}
  A memoryless discrete multiple access wiretap channel with state is defined by a stochastic transition matrix $\mathcal{W}:\mathcal{X}_1\times\mathcal{X}_2\times\mathcal{S}\to\mathcal{Y}\times\mathcal{Z}$, where $\mathcal{X}_1$ and $\mathcal{X}_2$ are finite input alphabets, $\mathcal{Y}, \mathcal{Z}$ are finite output alphabets, $\mathcal{S}$ is finite state alphabet. The transition probability from input sequences $(\boldsymbol{x}_1,\boldsymbol{x}_2)$ to the output sequences $(\boldsymbol{y},\boldsymbol{z})$ given state sequence $\boldsymbol{s}$ is 
  \begin{align*}
    P^n_{YZ|X_1X_2S}(\boldsymbol{y},\boldsymbol{z}|\boldsymbol{x}_1,\boldsymbol{x}_2,\boldsymbol{s}) = \prod_{i=1}^n P_{YZ|X_1X_2S}(y_i,z_i|x_{1i},x_{2i},s_i),
  \end{align*}
  where $\boldsymbol{y}$ is the output of the main channel and $\boldsymbol{z}$ is the output of the wiretap channel. The channel state sequences are generated by a discrete memoryless source such that $P_S^n(\boldsymbol{s})=\prod_{i=1}^n P_S(s_i)$.
\end{definition}
\begin{remark}
  The model defined in Definition \ref{def: channel model} is an SD-MAC with common states. In \cite{lapidoth2012multiple2}, the authors considered the SD-MACs with double states that the channels are governed by two independent memoryless channel states, each of which is revealed to a different sender in a strictly causal/causal way. In this paper, we only consider the common state case.
\end{remark}
\begin{definition}
  A  code $(2^{nR_1},2^{nR_2},f^n_1,f^n_2,\phi)$ consists of message sets $\mathcal{M}_1=[1:2^{nR_1}],\mathcal{M}_2=[1:2^{nR_2}]$, sets of encoders $f_{i,j}:\mathcal{M}_i \times \mathcal{S}^j \to \mathcal{X}_i, i=1,2,j=1,\dots,n$ and a decoder $\phi:\mathcal{Y}^n\to\mathcal{M}_1\times\mathcal{M}_2$. To transmit message $M_i$, the sender sends codeword $\boldsymbol{X}_i$ with each component $X_{ij}$ generated according to $f_{i,j}(M_i,\boldsymbol{S}^j)$. The receiver receives $\boldsymbol{Y}$ and estimates the message by the decoder $(\hat{M}_1,\hat{M}_2)=\phi(\boldsymbol{Y})$. The average decoding error of a code is 
  \begin{align*}
    P_e = \frac{1}{|\mathcal{M}_1||\mathcal{M}_2|}\sum_{m_1\in\mathcal{M}}\sum_{m_2\in\mathcal{M}_2}Pr\{(M_1,M_2)\neq(\hat{M}_1,\hat{M}_2)|M_1=m_1,M_2=m_2\}.
  \end{align*}
The information leakage of the code is defined as 
\begin{align*}
  I(M_1,M_2;\boldsymbol{Z}),
\end{align*}
where $\boldsymbol{Z}$ is the output of the wiretap channel.
\end{definition}

\begin{definition}
  A rate pair $(R_1,R_2)$ is said to be achievable for the SD-MAWC with causal channel state information at encoders if for any $\epsilon>0$ there exists an $n_0$ such that for any $n>n_0$ there exists a code $(2^{nR_1},2^{nR_2},f^n_1,f^n_2,\phi)$ with
  \begin{align*}
    &P_e \leq \epsilon,\\
    &I(M_1,M_2;\boldsymbol{Z}) \leq \epsilon,
  \end{align*}
  where the second inequality is the strong secrecy constraint.
  The secrecy capacity of the SD-MAWC is the closure of all sets of achievable rate pairs.
\end{definition}

\section{Main result}\
This section presents the main capacity results of SD-MAWCs with causal CSI. Theorem \ref{the: mawc with casual csi} gives an inner bound of the secrecy capacity of SD-MAWCs with causal CSI, and an outer bound is provided in Theorem \ref{the: outer bound}. We use $U,U_1,U_2,V$ as auxiliary random variables.

We first introduce the secrecy achievable region of multiple access wiretap channels without channel state. Let $\mathcal{R}_{0i}(P_{U_1U_2X_1X_2}),i=1,2,3$ be sets of real number pairs $(R_1,R_2)$ such that
\begin{equation*}
  \begin{split}
    &\mathcal{R}_{01}(P_{U_1U_2X_1X_2})=\left\{ 
      \begin{aligned}
        &R_1>0,R_2>0;\\
        &R_1 \leq I(U_1;Y|U_2)-I(U_1;Z);\\
        &R_2 \leq I(U_2;Y|U_1)-I(U_2;Z);\\  
        &R_1+R_2 \leq I(U_1,U_2;Y)-I(U_1,U_2;Z);\\
      \end{aligned}
    \right.\\
    &\mathcal{R}_{02}(P_{U_1U_2X_1X_2})=\left\{ 
      \begin{aligned}
        &R_1=0,R_2>0;\\
        &R_2 \leq I(U_2;Y|U_1)-I(U_2;Z|U_1);\\  
      \end{aligned}
    \right.\\
    &\mathcal{R}_{03}(P_{U_1U_2X_1X_2})=\left\{ 
      \begin{aligned}
        &R_1>0,R_2=0;\\
        &R_1 \leq I(U_1;Y|U_2)-I(U_1;Z|U_2);\\  
      \end{aligned}
    \right.
  \end{split}
\end{equation*}
where the mutual information is according to the joint distribution $P_{U_1U_2X_1X_2YZ}=P_{U_1}P_{U_2}P_{X_1|U_1}P_{X_2|U_2}P_{YZ|X_1X_2}$.
Let $C^{WCSI}$ be the secrecy capacity region of multiple access wiretap channels without channel state. 
\begin{theorem}[\cite{molavianjazi2009secure}]\label{the: mawc strong secrecy}
  The secrecy capacity region of multiple access wiretap channels satisfies
  \begin{align*}
    \left( \bigcup_{P_{U_1U_2X_1X_2}} \mathcal{R}_{01}(P_{U_1U_2X_1X_2})\cup \mathcal{R}_{02}(P_{U_1U_2X_1X_2})\cup \mathcal{R}_{03}(P_{U_1U_2X_1X_2}) \right) \subseteq C^{WCSI}.
  \end{align*}
\end{theorem}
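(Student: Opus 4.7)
The plan is to prove achievability of each of the three regions by a superposition of wiretap coding with random binning at both encoders, combined with joint typicality decoding at the legitimate receiver and a channel resolvability (soft covering) argument at the eavesdropper. Since the three regions correspond to different choices of which user carries confidential information, it suffices to give the argument in detail for $\mathcal{R}_{01}$ and then specialize.

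For $\mathcal{R}_{01}$, I would fix a product distribution $P_{U_1}P_{U_2}P_{X_1|U_1}P_{X_2|U_2}$ and, independently for $i=1,2$, generate $2^{n(R_i+R_i')}$ codewords $u_i^n(m_i,k_i)$ i.i.d. according to $P_{U_i}^n$, where $m_i\in[1:2^{nR_i}]$ is the message index and $k_i\in[1:2^{nR_i'}]$ is the local randomness that serves as the wiretap randomizer. To send $m_i$, encoder $i$ draws $k_i$ uniformly and symbol-wise generates $X_i$ from $P_{X_i|U_i}$. The legitimate receiver performs joint typicality decoding, and standard MAC error analysis gives reliability provided
\begin{align*}
R_1+R_1'&\le I(U_1;Y|U_2)-\delta,\\
R_2+R_2'&\le I(U_2;Y|U_1)-\delta,\\
R_1+R_2+R_1'+R_2'&\le I(U_1,U_2;Y)-\delta.
\end{align*}

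For strong secrecy one invokes a two-user soft-covering lemma: if the randomization rates satisfy $R_1'\ge I(U_1;Z)+\delta$, $R_2'\ge I(U_2;Z)+\delta$, and $R_1'+R_2'\ge I(U_1,U_2;Z)+\delta$, then the eavesdropper's output distribution induced by the random codebook converges, uniformly in the message pair and with doubly exponential probability over the codebook draw, to $P_Z^n$ in total variation. I would establish this by conditioning on one codebook at a time, using single-user soft covering to bound $\|P_{Z^n\mid U_1^n}-P_{Z\mid U_1}^n\|$ at rate $R_2'>I(U_2;Z\mid U_1)$, then bounding $\|P_{Z\mid U_1}^n[U_1^n]-P_Z^n\|$ at rate $R_1'>I(U_1;Z)$, and finally taking a union bound over the $2^{n(R_1+R_2)}$ message pairs, which is absorbed by the exponential decay of the resolvability bound. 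Combining this with the identity $I(M_1,M_2;Z^n)\le \|P_{Z^n\mid M_1M_2}-P_{Z^n}\|\cdot \log|\mathcal{Z}|^n + h_2(\cdot)$ (or via Pinsker) yields $I(M_1,M_2;Z^n)\to 0$. Fourier--Motzkin elimination of $R_1',R_2'$ from the reliability and secrecy inequalities then delivers exactly $\mathcal{R}_{01}$.

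The corner regions $\mathcal{R}_{02}$ and $\mathcal{R}_{03}$ are handled by specialization. For $\mathcal{R}_{02}$ we set $R_1=0$, so user 1 uses its entire rate for randomization; then the joint resolvability condition splits so that user 1 already ``resolves'' $U_1^n$ at the eavesdropper (requiring $R_1'\ge I(U_1;Z)$ and $R_1'\le I(U_1;Y|U_2)$), and the only remaining secrecy requirement on user 2 becomes $R_2'\ge I(U_2;Z\mid U_1)+\delta$. Combined with the MAC reliability constraint $R_2+R_2'\le I(U_2;Y|U_1)-\delta$, this yields the single bound $R_2\le I(U_2;Y|U_1)-I(U_2;Z|U_1)$; the case $\mathcal{R}_{03}$ is symmetric.

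The main technical obstacle is the joint MAC soft-covering step, because the eavesdropper's output is a mixture over two independent random codebooks and one must control the variational distance jointly; the chained conditioning approach above reduces it to two applications of the single-user resolvability lemma, but the bookkeeping with the union bound over the $2^{n(R_1+R_2)}$ message pairs, and the verification that the doubly exponential concentration survives this union bound, is where the careful choice of $\delta$ matters. Everything else, namely the MAC reliability analysis and the Fourier--Motzkin elimination, is routine.
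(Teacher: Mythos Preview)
Your overall framework---independent wiretap codebooks with binning at each encoder, MAC joint-typicality decoding for reliability, and multi-user channel resolvability for strong secrecy, followed by Fourier--Motzkin---is correct and is exactly what the paper sketches (Theorem~\ref{the: mawc strong secrecy} is quoted from \cite{molavianjazi2009secure}, so the paper only gives a one-paragraph intuition).

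There is, however, a technical gap in how you propose to establish the two-user soft-covering lemma. You claim the three conditions $R_1'\ge I(U_1;Z)$, $R_2'\ge I(U_2;Z)$, $R_1'+R_2'\ge I(U_1,U_2;Z)$, but your chained argument (first resolve $Z$ given $U_1^n$, then resolve the averaged output) only yields the \emph{pair} $R_2'>I(U_2;Z\mid U_1)$ and $R_1'>I(U_1;Z)$, i.e.\ one corner of the resolvability region, not the full pentagon. The point $(R_1',R_2')$ on the dominant face with $I(U_1;Z)<R_1'<I(U_1;Z\mid U_2)$ is not covered by either ordering of the chain. One can try to recover the full region by taking the union over both orderings and arguing that the induced $(R_1,R_2)$ pentagons still cover $\mathcal{R}_{01}$, but that requires an extra argument you have not given, and it is not automatic.

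The paper handles the analogous step differently: its Lemma~\ref{lem: wiretap codes} (proved in Appendix~\ref{app: proof of wiretap codes}) bounds the KL divergence directly via a single Jensen step, after which the sum over codeword index pairs splits into four cases---$(l_1',l_2')=(m_1,m_2)$, $l_1'=m_1$ only, $l_2'=m_2$ only, and both different---producing the three mutual-information thresholds simultaneously. Replacing your chained reduction with that joint expansion (which is standard for MAC resolvability) fixes the gap cleanly and gives exponential decay of $D(P_{\boldsymbol{Z}\boldsymbol{S}|\mathcal{C}_1\mathcal{C}_2}\Vert P_{\boldsymbol{Z}\boldsymbol{S}})$ under exactly the three conditions you stated.
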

Theorem \ref{the: mawc strong secrecy} is derived by using traditional wiretap codes, where the codebook is divided into some subcodebooks, each corresponding to a message. To transmit a message, the senders randomly select codewords from the corresponding subcodebooks. This random selection introduces some additional randomness in the encoding procedure, which is not available at the receiver side, and hence confuses the eavesdropper.
Region $\mathcal{R}_{02}$ occurs when the size of Sender 1's codebook satisfies $\widetilde{R}_1 \leq I(U_1;Z)$. In this case, Sender 1 cannot transmit any message secretly and hence, the secure achievable rate is 0. Region $\mathcal{R}_{03}$ follows similarly. 

In the rest of this section we provide our capacity results, which are the union of three regions $\mathcal{R}_i,i=1,2,3$ achieved by different coding schemes, each consisting of three sub-regions $\mathcal{R}_{ij},j=1,2,3$. The relation between region $\mathcal{R}_{i1}$ and $\mathcal{R}_{i2},\mathcal{R}_{i3}$ is the same as that in Theorem \ref{the: mawc strong secrecy} depending on the use of wiretap coding. The following table describes the main differences between these regions. More detailed discussions and the explanation of parameters are given in the remarks after Theorem \ref{the: mawc with casual csi}.
\begin{table}[h]
  \centering
  \caption{Comparison between different achievable regions}
  \label{table: Comparison between different achievable regions}
  \begin{tabular}{|c|c|c|c|}
    \hline
      & Secret Key Construction Method 1 & Secret Key Construction Method 2 & No Secret Key \\
     \hline
     Wiretap Coding at Both Encoders & $\mathcal{R}_{11}$ &$\mathcal{R}_{21}$ &$\mathcal{R}_{31}$ \\
    \hline
    Wiretap Coding Not Available at Encoder 1 & $\mathcal{R}_{12}$ & $\mathcal{R}_{22}$  & $\mathcal{R}_{32}$  \\
    \hline
    Wiretap Coding Not Available at Encoder 2 & $\mathcal{R}_{13}$ & $\mathcal{R}_{23}$&$\mathcal{R}_{33}$ \\
    \hline
  \end{tabular}
\end{table}

Given $P_S$ and $P_{YZ|X_1X_2S}$, define region $\mathcal{R}_{1i}(P_{VUU_1U_2X_1X_2|S}),i=1,2,3$, as sets of non-negative real number pairs $(R_1,R_2)$ such that
\begin{equation*}
  \mathcal{R}_{11}(P_{VUU_1U_2X_1X_2|S})=\left\{
    \begin{aligned}
      &R_1 \leq \min\{I(U_1;Y|V,U,U_2)-I(U_1;Z|S,U)+R_{11},I(U_1;Y|V,U,U_2)\};\\
      &R_2 \leq \min\{I(U_2;Y|V,U,U_1)-I(U_2;Z|S,U)+R_{21},I(U_2;Y|V,U,U_1)\};\\  
      &R_1 + R_2 \leq \min\{R_{SUM}-I(U_1,U_2;Z|S,U)+R_{11}+R_{21},R_{SUM}\}\\
      &R_{11}+R_{21} \leq I(V;Y)-I(V;U,Z).
    \end{aligned}
  \right.
\end{equation*}
\begin{equation*}
  \begin{split}
    &\mathcal{R}_{12}(P_{VUU_1U_2X_1X_2|S})=\left\{
    \begin{aligned}
      &R_1 \leq \min\{I(U_1;Y|V,U,U_2),R_{11}\};\\
      &R_2 \leq \min\{I(U_2;Y|V,U,U_1)-I(U_2;Z|S,U,U_1)+R_{21},I(U_2;Y|V,U,U_1)\};\\
      &R_1 + R_2 \leq \min\{R_{SUM},R_{SUM} - I(U_2;Z|S,U,U_1) + R_{21}, I(U_2;Y|V,U,U_1)-I(U_2;Z|S,U,U_1)+R_{11}+R_{21}\}\\
      &R_{11}+R_{21}\leq I(V;Y)-I(V;Z,U,U_1),
    \end{aligned}
  \right. \\
  &\text{where $R_{SUM}=\min\{I(U_1,U_2;Y|V,U),I(V,U,U_1,U_2;Y)-I(V;S)\}.$}\\
  &\mathcal{R}_{13}(P_{VUU_1U_2X_1X_2|S})=\left\{
    \begin{aligned}
      &R_1 \leq \min\{I(U_1;Y|V,U,U_2)-I(U_1;Z|S,U,U_2)+R_{11},I(U_1;Y|V,U,U_2)\};\\
      &R_2 \leq \min\{I(U_2;Y|V,U,U_1),R_{21}\};\\
      &R_1 + R_2 \leq \min\{R_{SUM},R_{SUM} - I(U_1;Z|S,U,U_2) + R_{11}, I(U_1;Y|V,U,U_2)-I(U_1;Z|S,U,U_2)+R_{11}+R_{21}\}\\
      &R_{11}+R_{21}\leq I(V;Y)-I(V;Z,U,U_2),
    \end{aligned}
  \right. 
  \end{split}
\end{equation*}
where $R_{SUM}=\min\{I(U_1,U_2;Y|V,U),I(V,U,U_1,U_2;Y)-I(V;S)\}$ and joint distribution such that $P_SP_{VUU_1U_2X_1X_2YZ|S}=P_{S}P_{V|S}P_{U}P_{U_1|U}P_{U_2|U}$ $P_{X_1|UU_1S}P_{X_2|UU_2S}P_{YZ|X_1X_2S}$. The explanation of $(R_{11},R_{21})$ is given in Remark \ref{rem: explanation of R1}.

 Let $\mathcal{R}_1$ be the convex hull of 
\begin{align*}
  \bigcup_{\substack{P_{VUU_1U_2X_1X_2|S}}} \bigcup_{i=1}^3\mathcal{R}_{1i}(P_{VUU_1U_2X_1X_2|S}).
\end{align*}

Further define $\mathcal{R}_{2i}(P_{U_1U_2X_1X_2|S}),i=1,2,3,$ as sets of non-negative real number pairs $(R_1,R_2)$ such that
\begin{equation*}
  \mathcal{R}_{21}(P_{U_1U_2X_1X_2|S})=\left\{
    \begin{aligned}
      &R_1 \leq \min\{I(U_1;Y|U_2)-I(U_1;Z|S)+R_{11},I(U_1;Y|U_2)-R_{12}\};\\
      &R_2 \leq \min\{I(U_2;Y|U_1)-I(U_2;Z|S)+R_{21},I(U_2;Y|U_1)-R_{22}\};\\ 
      &R_1+R_2 \leq \min\{I(U_1,U_2;Y)-I(U_1,U_2;Z|S)+H(S|Z)-H(S|U_1U_2Y),I(U_1,U_2;Y)-H(S|U_1U_2Y)\};\\
      &R_{11}+R_{12}+R_{21}+R_{22} \leq H(S|Z),\\
      &R_{12}+R_{22} \geq H(S|Y,U_1,U_2).
    \end{aligned}
  \right.
\end{equation*}
\begin{equation*}
  \begin{split}
    &\mathcal{R}_{22}(P_{U_1U_2X_1X_2|S})=\left\{
    \begin{aligned}
      &R_1 \leq \min\{R_{11},I(U_1;Y|U_2)-R_{12}\};\\
      &R_2 \leq \min\{I(U_2;Y|U_1)-I(U_2;Z|S,U_1)+R_{21},I(U_2;Y|U_1)-R_{22}\};\\  
      &R_1 + R_2 \leq \min\{I(U_2;Y|U_1)-I(U_2;Z|S,U_1)+H(S|Z,U_1)-H(S|U_1U_2Y),I(U_1,U_2;Y)-H(S|U_1U_2Y)\}\\
      &R_{11}+R_{12}+R_{21}+R_{22} \leq H(S|Z,U_1),\\
      &R_{12}+R_{22} \geq H(S|Y,U_1,U_2).
    \end{aligned}
  \right.\\
  &\mathcal{R}_{23}(P_{U_1U_2X_1X_2|S})=\left\{
    \begin{aligned}
      &R_1 \leq \min\{I(U_1;Y|U_2)-I(U_1;Z|S,U_2)+R_{11},I(U_1;Y|U_2)-R_{12}\};\\
      &R_2 \leq \min\{R_{21},I(U_2;Y|U_1)-R_{22}\};\\  
      &R_1 + R_2 \leq \min\{I(U_1;Y|U_2)-I(U_1;Z|S,U_2)+H(S|Z,U_2)-H(S|U_1U_2Y),I(U_1,U_2;Y)-H(S|U_1U_2Y)\}\\
      &R_{11}+R_{12}+R_{21}+R_{22} \leq H(S|Z,U_2),\\
      &R_{12}+R_{22} \geq H(S|Y,U_1,U_2).
    \end{aligned}
  \right.
  \end{split}
\end{equation*}
where the joint distribution $P_SP_{U_1U_2X_1X_2YZ|S}=P_SP_{U_1}P_{U_2}P_{X_1|U_1S}P_{X_2|U_2S}P_{YZ|X_1X_2S}$. Similarly, let $\mathcal{R}_2$ be the convex hull of
\begin{align*}
  \bigcup_{\substack{P_{U_1U_2X_1X_2|S}}} \bigcup_{i=1}^3\mathcal{R}_{2i}(P_{U_1U_2X_1X_2|S}).
\end{align*}
Finally, define $\mathcal{R}_{3i}(P_{VUU_1U_2X_1X_2|S}),i=1,2,3,$ as the sets of non-negative real number pairs $(R_1,R_2)$ such that
\begin{equation*}
  \begin{split}
    &\mathcal{R}_{31}(P_{VUU_1U_2X_1X_2|S})=\left\{ 
      \begin{aligned}
        &R_1>0,R_2>0;\\
        &R_1 \leq I(U_1;Y|V,U,U_2)-I(U_1;Z);\\
        &R_2 \leq I(U_2;Y|V,U,U_1)-I(U_2;Z);\\  
        &R_1+R_2 \leq I(U_1,U_2;Y|V,U)-I(U_1,U_2;Z);\\
        &R_1+R_2 \leq I(V,U,U_1,U_2;Y)-I(U_1,U_2;Z)-I(V;S);
      \end{aligned}
    \right.\\
    &\mathcal{R}_{32}(P_{VUU_1U_2X_1X_2|S})=\left\{ 
      \begin{aligned}
        &R_1=0,R_2>0;\\
        &R_2 \leq I(U_2;Y|V,U,U_1)-I(U_2;Z|U_1);\\  
      \end{aligned}
    \right.\\
    &\mathcal{R}_{33}(P_{VUU_1U_2X_1X_2|S})=\left\{ 
      \begin{aligned}
        &R_1>0,R_2=0;\\
        &R_1 \leq I(U_1;Y|V,U,U_2)-I(U_1;Z|U_2);\\  
      \end{aligned}
    \right.
  \end{split}
\end{equation*}
where the joint distribution $P_SP_{VUU_1U_2X_1X_2YZ|S}=P_SP_{V|S}P_UP_{U_1|U}P_{U_2|U}P_{X_1|UU_1S}P_{X_2|UU_2S}P_{YZ|X_1X_2S}$. Let $\mathcal{R}_3$ be the convex hull of 
\begin{align*}
  \bigcup_{\substack{P_{VUU_1U_2X_1X_2|S}}} \bigcup_{i=1}^3\mathcal{R}_{3i}(P_{VUU_1U_2X_1X_2|S})
\end{align*}
\begin{theorem}\label{the: mawc with casual csi}
  Let $\mathcal{R}^{CSI-E}$ be the convex hull of $\mathcal{R}_1 \cup \mathcal{R}_2 \cup \mathcal{R}_3$.
  The secrecy capacity of multiple access wiretap channels with causal channel state information at encoders $C^{CSI-E}$ satisfies
  \begin{align*}
    \mathcal{R}^{CSI-E} \subseteq C^{CSI-E}.
  \end{align*}
\end{theorem}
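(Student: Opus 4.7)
The plan is to establish achievability by constructing three separate coding schemes, one for each of the three regions $\mathcal{R}_1$, $\mathcal{R}_2$, $\mathcal{R}_3$; the convex hull in the definition of $\mathcal{R}^{CSI-E}$ is then obtained by standard time-sharing across distributions and across schemes. Within each $\mathcal{R}_i$ the three sub-regions $\mathcal{R}_{i1}$, $\mathcal{R}_{i2}$, $\mathcal{R}_{i3}$ should arise from a single codebook construction by letting the wiretap-confusion rate at each encoder be strictly positive, zero at encoder~1, or zero at encoder~2 respectively; this is exactly the dichotomy already present in Theorem \ref{the: mawc strong secrecy}, and I expect it to reduce to three parallel bookkeeping cases inside one common argument.

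For $\mathcal{R}_1$ (Coding Scheme 1), I plan to use $B$-block Markov coding with backward decoding together with a Wyner--Ziv lossy description of the state. Generate a $\boldsymbol{v}$-codebook of size roughly $2^{nI(V;S)}$ according to $P_V$ and partition it into $2^{n(R_{11}+R_{21})}$ bins that play the role of Shannon one-time pads. In block $b$ the encoders, having both observed $\boldsymbol{S}_b$, select by joint-typicality covering some $\boldsymbol{v}_b$ jointly typical with $\boldsymbol{S}_b$, and the bin index of $\boldsymbol{v}_b$ is reserved as the pad for the block in which it will be decoded. Fresh messages $(M_1^{(b)},M_2^{(b)})$ are split into wiretap-confusion fragments (carried by enlarged $U_1,U_2$ codebooks as in Theorem \ref{the: mawc strong secrecy}) and fragments of rates $R_{11}$ and $R_{21}$ that are XORed with the appropriate pad; the superposition structure $P_U P_{U_1|U} P_{U_2|U}$ together with the Shannon strategy $X_{i,t}=f_i(U_t,U_{i,t},S_t)$ produces the channel inputs. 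Backward decoding first recovers $\boldsymbol{v}_b$ from the block in which its bin index is encoded, extracts the pad, and then jointly decodes $(\boldsymbol{u}_b,\boldsymbol{u}_{1,b},\boldsymbol{u}_{2,b})$; the decodability of $\boldsymbol{v}_b$ combined with the key secrecy requirement yields the constraint $R_{11}+R_{21}\leq I(V;Y)-I(V;U,Z)$, while the joint MAC decoding yields the single-user and $R_{SUM}$ constraints listed in $\mathcal{R}_{1i}$.

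For $\mathcal{R}_2$ (Coding Scheme 2) I would extend the single-user scheme of \cite{sasaki2019wiretap} to two users: block-Markov with forward decoding and no lossy compression, where the key is distilled directly from the previous block's i.i.d.\ state sequence via privacy amplification. The decoder losslessly reconstructs $\boldsymbol{S}$ using Slepian--Wolf side information carried at a joint rate satisfying $R_{12}+R_{22}\geq H(S|Y,U_1,U_2)$, while the surviving key rate $R_{11}+R_{21}\leq H(S|Z)$ (or $H(S|Z,U_j)$ in the sub-regions $\mathcal{R}_{22}$, $\mathcal{R}_{23}$ where encoder $j$ has no wiretap-confusion randomness) gives the remaining constraints. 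For $\mathcal{R}_3$, only pure wiretap coding is needed, with an auxiliary $V$ absorbing the state via the block-Markov Shannon-strategy construction of \cite{lapidoth2012multiple}; the secrecy analysis then reduces to that of Theorem \ref{the: mawc strong secrecy} applied block-wise.

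I expect the main obstacle to be the strong secrecy analysis $I(M_1,M_2;\boldsymbol{Z})\leq\epsilon$. Reliability and decodability follow from the standard covering and packing lemmas, but making the key budget $R_{11}+R_{21}\leq I(V;Y)-I(V;U,Z)$ tight requires a channel-resolvability (soft-covering) argument: after averaging over the wiretap-confusion indices, the key index, and the $\boldsymbol{v}$ selected in the covering step, the induced distribution on $\boldsymbol{Z}^n$ must be shown to lie within vanishing KL divergence of the i.i.d.\ target conditioned on the time-sharing codeword. Three features make this nontrivial in our setting: the confusion randomness is split across two users and must be combined with the key budget in a way that is simultaneously valid across $\mathcal{R}_{i1}$, $\mathcal{R}_{i2}$, $\mathcal{R}_{i3}$; the pad used in one block is transmitted through the channel in another block, so its leakage to $\boldsymbol{Z}$ must be controlled by a conditional resolvability lemma and then chained across the $B$ blocks by a telescoping total-variation bound so that the per-block contribution vanishes as $n\to\infty$; and the superposition structure $P_U P_{U_1|U}P_{U_2|U}$ together with the state dependence in $P_{X_i|UU_iS}$ must be handled carefully so that the i.i.d.\ structure required by soft covering survives the Shannon strategy. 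Once these pieces are in place, the outer time-sharing and convex-hull operation complete the argument.
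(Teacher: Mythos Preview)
Your proposals for $\mathcal{R}_2$ and $\mathcal{R}_3$ are essentially correct and match the paper. For $\mathcal{R}_1$, however, your secret-key mechanism is structurally wrong and would not yield the constraint $R_{11}+R_{21}\le I(V;Y)-I(V;U,Z)$.

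You partition the $\boldsymbol{v}$-codebook into $2^{n(R_{11}+R_{21})}$ bins and use the \emph{bin index} as the one-time pad. This leads to a dilemma: either the bin index is sent to the receiver (so that he can locate $\boldsymbol{v}_b$ inside a bin small enough for Wyner--Ziv decoding), in which case the eavesdropper may decode it too and the pad is compromised; or the bin index is not sent, in which case the receiver faces $2^{nI(V;S)}$ candidates for $\boldsymbol{v}_b$ and cannot decode it from $\boldsymbol{Y}_b$ alone since $I(V;S)>I(V;Y)$. The paper explicitly flags the first failure mode in Remark~\ref{rem: worst case assumption} as the flaw in the scheme of \cite{sonee2014wiretap}.

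The paper's construction separates the two roles. The $\boldsymbol{v}$-codebook is partitioned into $2^{nR_{K_0}}$ Wyner--Ziv bins with $R_{K_0}\approx I(V;S)-I(V;Y)$; the bin index $k_{0,b}$ is carried by the common codeword $\boldsymbol{u}(k_{0,b})$ (this is the role of $U$ that your plan leaves unspecified) and is \emph{assumed known to the eavesdropper} as a worst case. The secret key is then extracted not from $k_{0,b}$ but from the position $t_b$ of $\boldsymbol{v}_{b-1}$ \emph{within} its Wyner--Ziv bin, via a mapping $\kappa:\mathcal{T}\to[1:2^{nR_{K_1}}]$. The security of this key is established by a Csisz\'ar--K\"orner extractor argument (Lemma~\ref{lem: secret key}), not by soft covering: one shows $\mathbb{S}(\kappa(t_b)\mid \boldsymbol{Z},k_{0,b})\le\epsilon$, which is what produces the rate $I(V;Y)-I(V;Z,U)$. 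Soft covering (Lemma~\ref{lem: wiretap codes}) is used only for the wiretap-code portion $I(U_1,U_2;Z|S,U)$. The block-chaining leakage argument you describe is then run on $I(K_{0,b},K_{1,b};\boldsymbol{Z}^{b-1},\boldsymbol{U}^{b-1})$, using both lemmas together. Your plan needs this two-layer binning and the worst-case-on-$U$ assumption before the secrecy analysis can go through.
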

The achievability proof of region $\mathcal{R}_1$ is given in Section \ref{sec: coding scheme for R11}, and the achievability proof of region $\mathcal{R}_2$ is provided in Section \ref{codin scheme for R21}. The achievability of region $\mathcal{R}_3$ follows from setting the secret key rates of the two senders $R_{11}=R_{21}=0$ in region $\mathcal{R}_1$ and skipping \emph{Key Message Codebook Generation} in Section \ref{sec: coding scheme for R11} since the key rate is set to zero.

Regions $\mathcal{R}_{i},i=1,2,3$, are achieved by using three different coding schemes. We explain the difference between those regions and coding schemes in the following remarks. 

Throughout the paper, we refer to the \textbf{coding scheme for region $\mathcal{R}_i$ as Coding scheme $i,i=1,2,3.$}
\begin{remark}\label{remark: remark for three regions}
  The region in Theorem \ref{the: mawc with casual csi} is the convex hull of the union of three different regions, each consisting of three sub-regions. Here we explain the differences between them. According to the use of the secret key, regions are divided into two groups: $(\mathcal{R}_1,\mathcal{R}_2)$ and $\mathcal{R}_3$. In $\mathcal{R}_3$ only wiretap codes are used to protect the messages. Regions $\mathcal{R}_1$ and $\mathcal{R}_2$ correspond to two different ways to construct secret keys, one using Wyner-Ziv coding and one using Slepian-Wolf coding. Hence, the three different regions in Theorem \ref{the: mawc with casual csi} correspond to the different use of secret keys. The model considered in this paper is a multiple access channel, where the communication conditions of each sender may be different. For the case that Sender 1 cannot use wiretap coding to protect the messages, region $\mathcal{R}_{1}$ reduces to region $\mathcal{R}_{12}$, and for the case that Sender 2 cannot use wiretap coding, region $\mathcal{R}_{1}$ reduces to region $\mathcal{R}_{13}$. Regions $\mathcal{R}_{2}$ and $\mathcal{R}_3$ follow similarly.
\end{remark}
\begin{remark}\label{rem: explanation of R1}
Explanation of region $\mathcal{R}_1$: Region $\mathcal{R}_1$ is a union of three regions.
 The coding scheme for achieving $\mathcal{R}_1$ uses wiretap codes and secret key agreement. The region
\begin{equation*}
    \mathcal{R}_{11}'(P_{VUU_1U_2X_1X_2|S})=\left\{
      \begin{aligned}
        &R_1>0,R_2>0;\\
        &R_1 \leq I(U_1;Y|V,U,U_2);\\
        &R_2 \leq I(U_2;Y|V,U,U_1);\\  
        &R_1+R_2 \leq I(U_1,U_2;Y|V,U);\\
        &R_1+R_2 \leq I(V,U,U_1,U_2;Y)-I(V;S);
      \end{aligned}
    \right.
\end{equation*}
given $P_S$ and $P_{YZ|X_1X_2S}$ for joint distribution such that $P_{VUU_1U_2X_1X_2|S}=P_{V|S}P_UP_{U_1|U}P_{U_2|U}P_{X_1|UU_1S}P_{X_2|UU_2S}$ is an achievable region for multiple access channels with causal channel state information at encoders given in \cite{lapidoth2012multiple}. It is proved that such a region outperforms the region obtained by using Shannon strategy. The coding scheme for this region considers a block Markov coding with backward decoding. The auxiliary random variable $V$ is used to describe a lossy description of the channel state and also for constructing secret keys to encrypt messages, and $U$ represents a common message based on the state information from the last block observed by both encoders. This common message is used to convey the Wyner-Ziv coding index of the lossy description. The subtraction terms $I(U_1;Z|S,U),I(U_2;Z|S,U)$ and $I(U_1,U_2;Z|S,U)$ are the secrecy cost in the wiretap codes. The term $I(V;Y)-I(V;Z,U)$ is the gain of using a secret key constructed from $V$. The secret key is divided into two independent parts $(R_{11},R_{21})$ and assigned to the senders. We consider here the worst case assuming that the eavesdropper always has knowledge of $U$. Note that the model considered in this paper is a common state model, i.e. the state sequence is the same for Sender 1 and Sender 2. Hence, the secret key cannot be used to encrypt both senders' messages directly. Otherwise, the encryption will introduce dependence between the two independent messages which leads to more information leakage to the eavesdropper since in this case observing the encrypted message of Sender 2 may give some information about the secret key, which is also used to encrypt the message of Sender 1, and leads to the information leakage of Sender 1's message. Instead, we split the secret key into two independent parts and use them to encrypt senders' messages separately. Region $\mathcal{R}_{12}$ occurs in the case that the size of Sender 1's secret message codebook $2^{n\widetilde{R}_1}$ satisfies $\widetilde{R}_1 < I(U_1;Z|S,U)$. As we explained before, $I(U_1;Z|S,U)$ is the local randomness introduced at Sender 1 to confuse the eavesdropper. The constraint implies the size of the codebook is within the decoding ability of the eavesdropper. We put an additional $U$ in the condition by the worst-case assumption that the eavesdropper is capable to decode the common message encoded using $U$. Although similar to region $\mathcal{R}_{02}$ that Sender 1 cannot use wiretap codes to randomize its codewords in this case, Sender 1 can still achieve a positive secrecy achievable rate by using a secret key. Similarly, for the case that the size of Sender 2's secret message codebook $2^{n\widetilde{R}_2}$ satisfies $\widetilde{R}_2 < I(U_2;Z|S,U)$, region $\mathcal{R}_{13}$ is achievable.
\end{remark}

\begin{remark}\label{rem: worst case assumption}
  In this remark, we discuss our worst-case assumption that the eavesdropper always has the common message $U$. The common message is used to convey the Wyner-Ziv index of the lossy description of the state sequence from the last block. Hence, at Block $b$ the selection of the common message codeword is determined by the Wyner-Ziv index, denoted by $k_{b-1}$, of the lossy description $\boldsymbol{v}_{b-1}$ which describes the state sequence from Block $b-1$. To make sure each index has a corresponding codeword, the size of the common message codebook $2^{nR_0}$ satisfies $R_0>I(V;S)-I(V;Y)$ by Wyner-Ziv coding theorem. However, this one-to-one mapping between the index and the common message codeword makes it possible for the eavesdropper to decode the common codeword directly. By channel coding theorem, this is determined by the terms $I(V;S)-I(V;Y)$ and $I(U,Z)$, where it is in general unclear which term is larger. For the case that $I(U,Z)>I(V;S)-I(V;Y)$, the eavesdropper can find the unique $\boldsymbol{u}(k_{b-1})$ and also the Wyner-Ziv index. Thus, it is necessary to make the worst-case assumption that the eavesdropper always has the common message codeword\footnote{We think this argument is missing in \cite{sonee2014wiretap}, where the Wyner-Ziv index is used as the secret key to encrypt the message and conveyed by the cloud center codeword in their superposition coding scheme. As discussed in the remark, we think the eavesdropper may be able to decode this cloud center codeword and thereby have access to the secret key.}.
\end{remark}
\begin{remark}
  Explanation of region $\mathcal{R}_2$: Region $\mathcal{R}_{2}$ is a union of three regions, where regions $\mathcal{R}_{22}$ and $\mathcal{R}_{23}$ occur when one of the senders cannot use wiretap coding. The basic idea of the coding scheme is also a combination of wiretap codes and secret key agreement, which is similar to the coding scheme in \cite{sasaki2019wiretap}. Secret key splitting is still used here to encrypt senders' messages. However, region $\mathcal{R}_{2}$ considers a block Markov coding scheme with forward decoding. If we remove the secrecy constraint and the secret key agreement, the reduced region is in fact the achievable rate region of an SD-MAC with causal CSI at encoders using Shannon strategy, which is proved not optimal for the considered model \cite{lapidoth2012multiple}. Setting $U_1=\emptyset$, region $\mathcal{R}_2$ reduces to the secrecy achievable rate given in \cite{sasaki2019wiretap}. Regions $\mathcal{R}_1$ and $\mathcal{R}_2$ do not include each other in general. In Section \ref{sec: examples}, we provide a numerical example to show that in some cases, points falling into region $\mathcal{R}_1$ do not fall into region $\mathcal{R}_2$ and vice versa.
\end{remark}
\begin{remark}
  In \cite{lapidoth2012multiple}, it is proved that the achievable region for SD-MACs with strictly causal/causal CSI can be larger by an improved coding scheme using block Markov coding with Gasptar's results on the compression of correlated sources with side information\cite{gastpar2004wyner}. The encoder compresses the information about the state sequence and input sequence of the last block to help decoding. This improved coding scheme is also used for SD-MACs with double states and outperforms the Shannon strategy\cite{lapidoth2012multiple2}. However, the compression step compresses codewords from the last block, which are related to the transmitted messages, and the secret keys constructed based on these lossy descriptions are also related to the previous messages. This relation between the constructed secret keys and transmitted messages makes the information leakage can not be bounded. It is still an open problem to apply this improved coding scheme to SD-MAWCs with causal CSI at encoders.
\end{remark}
\begin{remark}
  Explanation of region $\mathcal{R}_3$: As before, region $\mathcal{R}_3$ is the union of three regions. The region is obtained by block Markov coding with backward decoding\cite{lapidoth2012multiple} and wiretap coding\cite{molavianjazi2009secure}. Regions $\mathcal{R}_{32}$ and $\mathcal{R}_{33}$ correspond to the case that one of the senders does not use wiretap coding. Note that in Coding scheme 3, there is no secret key agreement between the senders and the receiver. Hence, no message can be secretly transmitted if wiretap coding is not used.  If we remove the randomness in the state by setting $S$ to be deterministic, block coding with backward decoding is not necessary and region $\mathcal{R}_3$ reduces to Theorem \ref{the: mawc strong secrecy}.
\end{remark}
In Section \ref{sec: examples}, numerical examples show that some points falling in region $\mathcal{R}_1$ cannot be achieved by the coding scheme used for region $\mathcal{R}_2$ and vice versa. When $I(V;Y)-I(V;Z,U)<0$ or $H(S|Z)-H(S|UY)\leq 0$, the sender cannot construct any secret key, and only use wiretap codes to protect the messages. In this case, regions $\mathcal{R}_1$ and $\mathcal{R}_2$ are empty and block Markov coding with wiretap coding can be used to derive the achievable region $\mathcal{R}_3$. Further note that by setting $V=U=\emptyset$ and $S$ to be deterministic, Theorem \ref{the: mawc with casual csi} reduces to Theorem \ref{the: mawc strong secrecy}.

The next theorem states an outer bound of the SD-MACs with noncausal CSI (and also holds for the causal case since one can regard the causal case coding as a special case of the noncausal case by using state information in a causal manner. Hence, any code for the causal case can also be used for the noncausal case). Define region $\mathcal{O}(P_{UU_1U_2X_1X_2|S})$ as 
\begin{equation*}
  \mathcal{O}(P_{UU_1U_2X_1X_2|S}) = \left\{
    \begin{aligned}
      &R_1 \leq I(U_{1};Y|U_{2},U) - I(U_{1};Z|U_{2},U)\\
      &R_2 \leq I(U_{2};Y|U_{1},U) - I(U_{2};Z|U_{1},U)\\
      &R_1 + R_2 \leq I(U_{1},U_{2};Y|U) - I(U_{1},U_{2};Z|U),
    \end{aligned} 
  \right.
\end{equation*}
where the mutual information is according to the joint distribution $P_SP_{UU_1U_2X_1X_2|S}=P_SP_{UU_1U_2|S}P_{X_1|UU_1S}P_{X_2|UU_2S}$. Let $\mathcal{O}^{NCSI-E}:= \cup_{P_{UU_1U_2X_1X_2|S}}\mathcal{O}(P_{UU_1U_2X_1X_2|S})$
\begin{theorem}\label{the: outer bound}
  The secrecy capacity of the multiple access wiretap channel with noncausal CSI satisfies
  \begin{align*}
    C^{NCSI-E} \subseteq \mathcal{O}^{NCSI-E}.
  \end{align*}
\end{theorem}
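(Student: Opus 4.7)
The plan is a weak converse built on three standard ingredients: Fano's inequality, the strong secrecy constraint, and Csisz\'ar's sum identity. After establishing $n$-letter rate bounds of the form ``receiver information minus eavesdropper information,'' I would single-letterize and then identify auxiliary random variables compatible with the Markov structure $P_SP_{UU_1U_2|S}P_{X_1|UU_1S}P_{X_2|UU_2S}$ specified by $\mathcal{O}(P_{UU_1U_2X_1X_2|S})$. A useful preliminary observation is that the strong secrecy constraint $I(M_1,M_2;\boldsymbol{Z})\leq\epsilon$ immediately implies $I(M_j;\boldsymbol{Z}|M_{3-j})\leq\epsilon$ for $j=1,2$, since $I(M_{3-j};\boldsymbol{Z})\geq 0$.

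Concretely, from $P_e\leq\epsilon$, Fano's inequality gives $nR_j\leq I(M_j;\boldsymbol{Y}|M_{3-j})+n\epsilon_n$ and $n(R_1+R_2)\leq I(M_1,M_2;\boldsymbol{Y})+n\epsilon_n$ with $\epsilon_n\to 0$, using $M_1\perp M_2$. Subtracting the eavesdropper mutual informations at vanishing cost yields
\begin{align*}
nR_1&\leq I(M_1;\boldsymbol{Y}|M_2)-I(M_1;\boldsymbol{Z}|M_2)+n\epsilon_n+\epsilon,\\
nR_2&\leq I(M_2;\boldsymbol{Y}|M_1)-I(M_2;\boldsymbol{Z}|M_1)+n\epsilon_n+\epsilon,\\
n(R_1+R_2)&\leq I(M_1,M_2;\boldsymbol{Y})-I(M_1,M_2;\boldsymbol{Z})+n\epsilon_n+\epsilon.
\end{align*}
Csisz\'ar's sum identity then converts each right-hand side to per-letter form, e.g.\
\begin{align*}
I(M_1;\boldsymbol{Y}|M_2)-I(M_1;\boldsymbol{Z}|M_2)=\sum_{i=1}^n\bigl[I(M_1;Y_i|M_2,Y^{i-1},Z^{[i+1]})-I(M_1;Z_i|M_2,Y^{i-1},Z^{[i+1]})\bigr],
\end{align*}
and analogously for $R_2$ and the sum rate.

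The next step is the choice of single-letter auxiliaries. Since noncausal CSI allows $X_{ji}$ to depend on all of $\boldsymbol{S}$, the common auxiliary must carry the state at indices $\neq i$ so that the target factorization can be verified. A natural choice is $U_i:=(Y^{i-1},Z^{[i+1]},S^{i-1},S_{i+1},\dots,S_n)$, $U_{1i}:=(M_1,U_i)$, $U_{2i}:=(M_2,U_i)$; then $(U_i,U_{ji},S_i)$ contains $(M_j,\boldsymbol{S})$ and hence determines $X_{ji}=f_{j,i}(M_j,\boldsymbol{S})$, while given $(U_i,U_{1i},U_{2i},S_i)$ both inputs are deterministic, so $P_{X_1X_2|UU_1U_2S}=P_{X_1|UU_1S}P_{X_2|UU_2S}$ holds trivially. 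A uniform time-sharing variable $T\in\{1,\dots,n\}$ independent of everything else then induces the single-letter random variables $U:=(U_T,T)$, $U_j:=U_{jT}$, $X_j:=X_{jT}$, $S:=S_T$, $Y:=Y_T$, $Z:=Z_T$, over which the averaged per-letter bounds pass to the single-letter inequalities in $\mathcal{O}(P_{UU_1U_2X_1X_2|S})$ after letting $n\to\infty$ and $\epsilon_n,\epsilon\to 0$.

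The main obstacle will be reconciling the per-letter form produced by Csisz\'ar, which conditions only on $(M_{3-j},Y^{i-1},Z^{[i+1]})$, with the target quantity $I(U_{1i};Y_i|U_{2i},U_i)-I(U_{1i};Z_i|U_{2i},U_i)$, whose conditioning additionally includes $(S^{i-1},S_{i+1},\dots,S_n)$. Enlarging the conditioning by noncausal states is not monotone for either mutual-information term in isolation, so the argument must proceed in the spirit of the Gelfand--Pinsker converse: decompose each $I(M_j;Y_i|\cdot)$ and $I(M_j;Z_i|\cdot)$ via the chain rule to expose state-dependent residuals of the form $I(S^{i-1},S^{[i+1]};Y_i|M_j,\cdot)$ and $I(S^{i-1},S^{[i+1]};Z_i|M_j,\cdot)$, and then exploit $M_1,M_2\perp\boldsymbol{S}$ together with the i.i.d.\ property of $\boldsymbol{S}$ to telescope these residuals through a second application of Csisz\'ar's sum identity. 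Once this cancellation is in hand, the sum-rate bound is treated symmetrically by grouping $(M_1,M_2)$ as a single super-message with common auxiliary $U_i$ above, completing the proof.
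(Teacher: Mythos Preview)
Your overall strategy---Fano plus the strong-secrecy bound $I(M_1;\boldsymbol{Z}|M_2)\leq\epsilon$, followed by Csisz\'ar's sum identity and time-sharing---matches the paper exactly. The divergence is in the choice of auxiliaries. The paper takes $U_i=(Y_{i+1}^n,Z^{i-1})$ and $U_{ji}=(M_j,U_i)$ \emph{without} any state components; with this choice Csisz\'ar's identity already produces $I(U_{1i};Y_i|U_{2i},U_i)-I(U_{1i};Z_i|U_{2i},U_i)$ in one step, and the proof closes by verifying only the Markov chain $(U,U_1,U_2)-(X_1,X_2,S)-(Y,Z)$. The ``main obstacle'' you anticipate never arises in the paper's argument.

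Your motivation for loading $S^{i-1},S_{i+1}^n$ into $U_i$---so that $X_{ji}$ becomes a deterministic function of $(U_i,U_{ji},S_i)$ and the factorization $P_{X_1|UU_1S}P_{X_2|UU_2S}$ is forced---is a reasonable worry, and indeed the paper does not verify that product structure explicitly. But the price you pay is exactly the obstacle you flag, and the resolution you sketch does not go through as stated. Writing $A_i=(M_2,Y^{i-1},Z^{[i+1]})$, the discrepancy between the Csisz\'ar form and your target is
\[
\sum_{i=1}^n\Bigl[I\bigl(M_1;S^{\neq i}\,\big|\,A_i,Y_i\bigr)-I\bigl(M_1;S^{\neq i}\,\big|\,A_i,Z_i\bigr)\Bigr],
\]
where $S^{\neq i}=(S^{i-1},S_{i+1}^n)$. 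Because the argument $S^{\neq i}$ itself varies with $i$, this is not a Csisz\'ar telescope, and neither $M_1,M_2\perp\boldsymbol{S}$ nor the i.i.d.\ state structure forces it to vanish or to have a definite sign. That step is the genuine gap in your proposal. If you adopt the paper's state-free auxiliaries the single-letterization is immediate; if you insist on certifying the product structure on $(X_1,X_2)$, you will need a different device than the one you have outlined.
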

The proof is given in Appendix \ref{app: proof of outer bound}. The capacity of the model considered in this paper is still an open problem since there is a gap between the inner and outer bounds. In fact, the capacity of multiple access channels with causal/strictly causal CSI without secrecy constraint has also not yet been found\cite{lapidoth2012multiple}, and only capacity results of some special cases have been reported. For example, the capacity of deterministic MAC with strictly causal CSI at encoders was reported in \cite{li2012multiple}, and the capacity of causal asymmetric SD-MAC with degraded message sets was provided in \cite{somekh2008cooperative}. In Section \ref{sec: examples}, we identify some special cases where the secrecy capacity is achieved by considering degraded message sets, and our results recover some existing works in the literature.

\section{coding scheme for $\mathcal{R}_{1}$}\label{sec: coding scheme for R11}
In this section, we provide the coding scheme for region $\mathcal{R}_{1}$ and its reliability and security analysis. Block coding schemes are widely used for different types of discrete memoryless channels \cite{lapidoth2012multiple,chia2012wiretap,fujita2016secrecy,sasaki2019wiretap}, where the state sequence from the last block is used to encrypt part of the message in the current block and block coding also helps the reliable transmission by compressing the state and codewords from the last block\cite{lapidoth2012multiple}\cite{lapidoth2012multiple2}. We consider a block coding scheme with backward decoding, which has been used for SD-MACs with causal CSI at encoders \cite{lapidoth2012multiple}. For the secrecy part, the main difference between the coding scheme below and that in \cite{sasaki2019wiretap} is that instead of constructing the secret key from state sequences directly, we construct the secret key from a lossy description of the state sequence. 

The coding scheme with backward decoding uses $B+1$ blocks. In block $2\leq b \leq B+1$, upon observing the channel state sequence $\boldsymbol{s}_{b-1}$ in the last block $b-1$, the encoder finds a lossy description $\boldsymbol{v}_{b-1}$ of the state sequence and constructs a secret key from $\boldsymbol{v}_{b-1}$. In Blocks 1 and $B+1$, no meaningful message is transmitted. We further assume that $I(U_1,U_2;Y)>\mu>0$ for some positive number $\mu$, otherwise, as shown by \cite[(73b)]{lapidoth2012multiple}, the rates of both senders are zero. The codeword length of blocks 1 to $B$ is $n$ and the length of the last block is  
\begin{align}
  \label{eq: codeword length of last block}\widetilde{n}= n\frac{R_0}{\mu},
\end{align}
where $R_0>0$ is a positive real number such that $2^{nR_0}$ is the size of the auxiliary message codebook defined in the following paragraphs. The last block is only used to transmit the information of state sequence $\boldsymbol{s}_{B}$ and the setting of $\widetilde{n}$ enables the decoder to decode the information correctly. Assume $I(V;Y)-I(V;Z)>\tau_1> 0$ with Markov chain relation $V-S-(Y,Z)$. Otherwise, the secret key rate is zero and the resulting achievable region is $\mathcal{R}_{3}$. The senders cooperate with each other since both of them observe the same channel states, and send a common message to convey the lossy description of the state sequence they observe. Based on this lossy description of the channel state, the decoder is able to decode more messages and this leads to a larger achievable rate region.

The backward decoding follows the principles in \cite[Appendix G]{lapidoth2012multiple} with an additional secret key decryption due to the secret key agreement. The decoder first decodes Block $B+1$ to get the index of the state lossy description in Block $B$. In the subsequent decoding block $b,b\in[1:B]$, the decoder first finds the lossy description of the current state sequence and then performs joint typical decoding to decode the common and private messages with the help of this lossy description. The information about the lossy description of state sequence $\boldsymbol{s}_{b-1}$ is contained in the common message and hence, help the decoding in the next decoding block.

Before giving the coding scheme, we further present an intuitive explanation of the secret key agreement in the coding scheme. For simplicity, we only consider the encryption and decryption operations of Sender 1. In each block $b, b\in[2:B]$, upon observing the channel state sequence $\boldsymbol{s}_{b-1}$ and its lossy description $\boldsymbol{v}_{b-1}$, the sender generates a secret key $k_{1,b}$ by a mapping $\kappa: \mathcal{T} \to \mathcal{K}_1$, where $\mathcal{T}$ is the index set of $\boldsymbol{V}$ in Wyner-Ziv subcodebooks. This key is used to encrypt the message $m_{1,b}$ by computing $c_{1,b}=m_{1,b} \oplus k_{1,b}$, where $\oplus$ is the modulo sum. In the backward decoding step block $b, b\in[1:B]$, the decoder has the knowledge about the Wyner-Ziv index $k_{0,b+1}$ from the last block $b+1$. Now it is possible to find $\boldsymbol{v}_{b}$, which is the lossy description of the state sequence $\boldsymbol{s}_b$ in the current block. The decoder then reproduces the secret key $k_{1,b+1}$ used for block $b+1$ and decrypts the message $m_{1,b+1}$ by computing $m_{1,b+1}=c_{1,b+1} \ominus k_{1,b+1}$.

\begin{figure}
  \centering
  \includegraphics[scale=0.65]{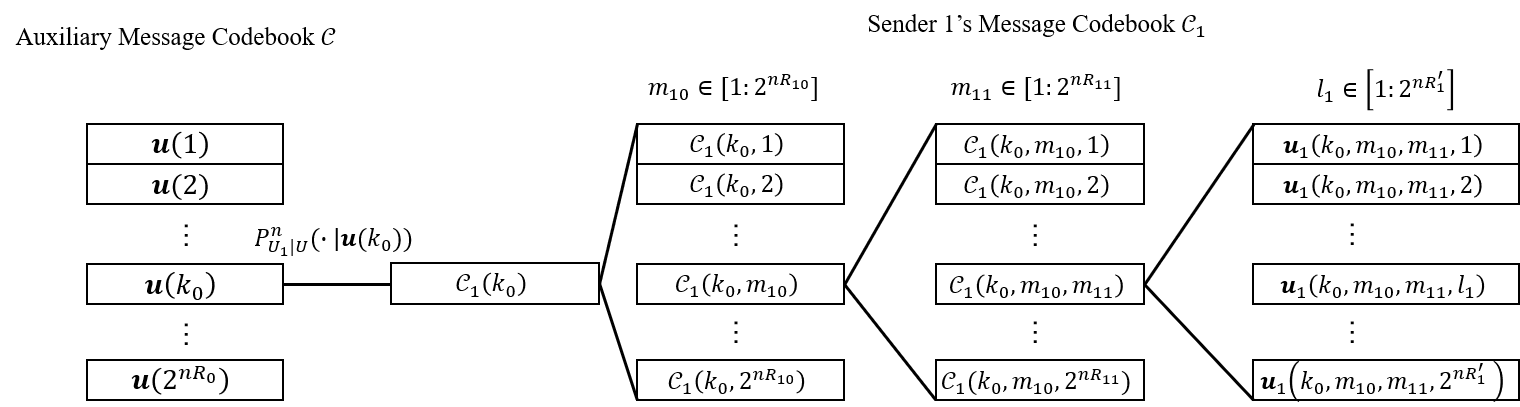}
  \caption{Codebook partition of Sender 1's message codebook: The encoder first chooses the common message $\boldsymbol{u}(k_0)$ by observing the state sequence from the last block and finding its Wyner-Ziv index $k_0$. Then, the encoder uses the codebook generated by this common message codeword and divides each message $M_1$ into two independent parts $(M_{10},M_{11})$. The codebook is partitioned into two-layer sub-codebooks. The first layer is indexed by message $M_{10}$ and the second layer is indexed by the encrypted version of message $M_{11}$.}
  \label{fig: coding scheme 1 codebook partition}
\end{figure}

Given $P_S$ and channel $P_{YZ|X_1X_2S}$, let $R_0, \widetilde{R}_1,\widetilde{R}_2,R_{10},R_{11},R_{20},R_{21},R_{K_1}$ be positive real numbers with constraints
\begin{align*}
  R_0 \geq I(V;S)-I(V;Y),&\\
  \widetilde{R}_1 \leq I(U_1;Y|V,U,U_2),&\\
  \widetilde{R}_2 \leq I(U_2;Y|V,U,U_1),&\\
  \widetilde{R}_1 - R_{10} > I(U_1;Z|S,U),&\\
  \widetilde{R}_2 - R_{20} > I(U_2;Z|S,U),&\\
  \widetilde{R}_1 + \widetilde{R}_2 - R_{10} - R_{20} > I(U_1,U_2;Z|S,U),&\\
  R_{K_1}=R_{11} + R_{21} \leq I(V;Y) - I(V;Z),&\\
  R_1 = R_{10}+R_{11},&\\
  R_{2}=R_{20}+R_{21}.&
\end{align*}
under fixed joint distribution $$P_{S}P_{V|S}P_UP_{U_1|U}P_{U_2|U}P_{X_1|UU_1S}P_{X_2|UU_2S}P_{YZ|X_1X_2S}.$$
\begin{figure}
  \centering
  \includegraphics[scale=0.8]{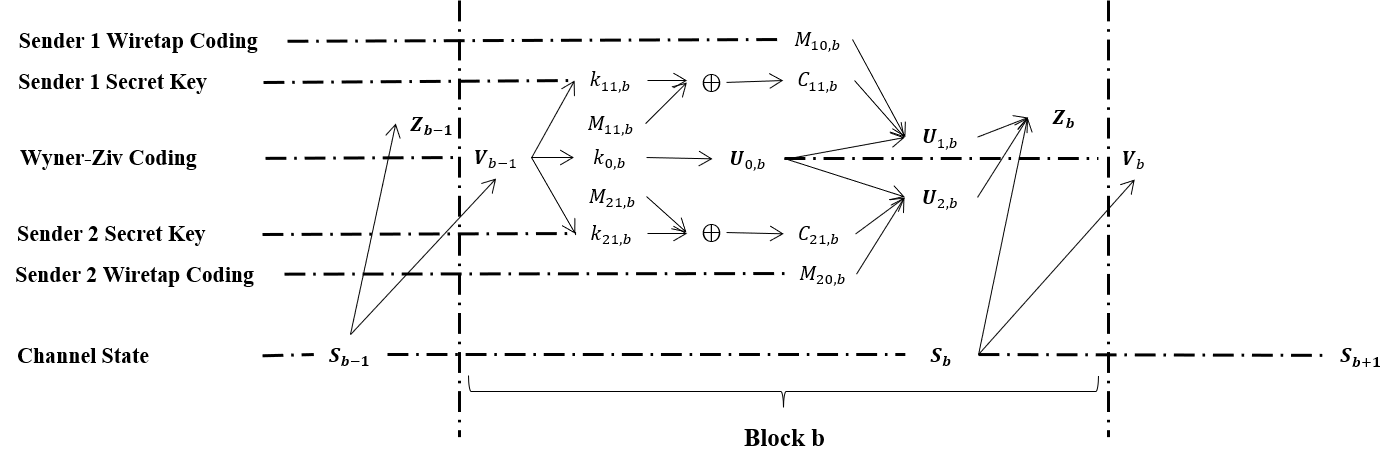}
  \caption{Encoding of Coding scheme  1. The arrows represent the dependent relation between random variables and the dashed lines represent the random variables connected on the same horizontal line.}
  \label{fig: coding shcheme 1 encoding}
\end{figure}
\begin{figure}
  \centering
  \includegraphics[scale=0.8]{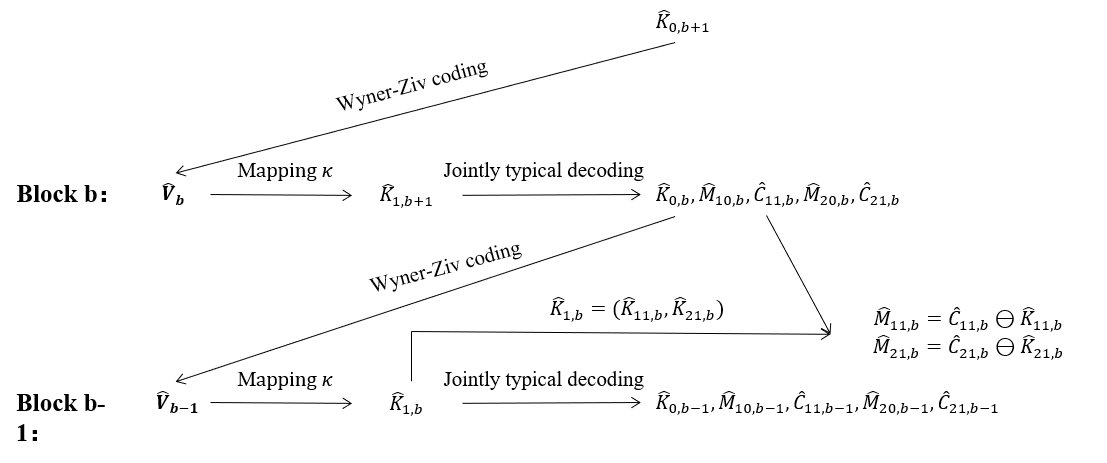}
  \caption{Backward decoding of the coefficients in Coding scheme  1.}
  \label{fig: coding shcheme 1 decoding}
\end{figure}
\emph{Key Message Codebook Generation: } Given $\tau>0$ and $\delta>0$, let $R_K=I(V;S)+\tau$. In each block $2\leq b \leq B+1$, the sender generates a codebook $\mathcal{C}_{K_b}=\{\boldsymbol{v}(l)\}_{l=1}^{2^{nR_{K}}}$ consists of $2^{nR_K}$ codewords, each i.i.d. generated according to distribution $P_V$ such that $P_V(v)=\sum_{s\in\mathcal{S}}P_S(s)P_{V|S}(v|s)$ for any $v\in\mathcal{V}$. Partition the codebook $\mathcal{C}_{K_b}$ into $2^{nR_{K_0}}$ sub-codebooks $\mathcal{C}_{K_b}(k_{0,b})$, where $k_{0,b}\in[1:2^{nR_{K_0}}]$ and $R_{K_0}=I(V;S)-I(V;Y)+2\tau$. Let $\mathcal{T}$ be the index set of codewords in each subcodebook $\mathcal{C}_{K_b}(k_{0,b})$ such that $|\mathcal{T}|=|\mathcal{C}_{K_b}(k_{0,b})|$ for any $k_{0,b}\in[1:2^{nR_{K_0}}]$. For each codebook $\mathcal{C}_{K_b},$ construct a secret key mapping $\kappa:\mathcal{T}\to [1:2^{nR_{K_1}}].$ Denote the resulted secret key by $K_{1,b}$.

\emph{Auxiliary Message Codebook Generation: } For each block $b$, generate auxiliary message codebook $\mathcal{C}_b=\{\boldsymbol{u}(m_0)\}_{m_0=1}^{2^{nR_0}}$ i.i.d. according to distribution $P_{U}$, where $R_0=I(V;S)-I(V;Y)+3\tau$.  

\emph{Message Codebook Generation:}
\begin{enumerate}
  \item Blocks $b\in[1:B]$: For each $m_0$, generate codebook $\mathcal{C}_{1,b}(m_0)=\{\boldsymbol{u}_1(m_0,l)\}_{l=1}^{2^{n\widetilde{R}_1}}$ containing $2^{n\widetilde{R}_1}$ codewords, each i.i.d. generated according to distribution $P_{U_1|U}$. Partition each $\mathcal{C}_{1,b}(m_0)$ into $2^{nR_{10}}$ subcodebooks $\mathcal{C}_{1,b}(m_0,m_{10})$, where $m_{10}\in[1:2^{nR_{10}}].$ For each subcodebook $\mathcal{C}_{1,b}(m_0,m_{10})$, partition it into two-layer subcodebooks $\mathcal{C}_{1,b}(m_0,m_{10},m_{11})=\{\boldsymbol{u}_1(m_0,m_{10},m_{11},l_1)\}_{l_1=1}^{2^{nR_1'}},$ where $m_{11}\in[1:2^{nR_{11}}],R_1' := \widetilde{R}_1-R_{10}-R_{11}$. Likewise, generate codebook $\mathcal{C}_{2,b}=\{\boldsymbol{u}_2(m_0,l)\}_{l=1}^{2^{n\widetilde{R}_2}}$ with codewords i.i.d. generated according to $P_{U_2|U}$, and then partition it into two-layer sub-codebooks $\mathcal{C}_{2,b}(m_0,m_{20},m_{21})=\{\boldsymbol{u}_2(m_0,m_{20},m_{21},l_2)\}_{l_2=1}^{2^{nR_2'}}$, where $m_{20}\in[1:2^{nR_{20}}],m_{21}\in[1:2^{nR_{21}}],R_2':=\widetilde{R}_2-R_{20}-R_{21}$. The codebook partition is also presented in Fig. \ref{fig: coding scheme 1 codebook partition}.
  \item Block $B+1$. For $k=1,2$, generate codebooks $\mathcal{C}_{k,B+1}$ as above with codeword length $\widetilde{n}$ defined as in \eqref{eq: codeword length of last block}.
\end{enumerate}


The above codebooks are all generated randomly and independently. Denote the set of random codebooks in each block $b$ by $\bar{\textbf{C}}_b$.

\emph{Encoding: } The encoding process is illustrated in Fig. \ref{fig: coding shcheme 1 encoding}.
\begin{enumerate}
  \item Block 1. Setting $m_{0,1}=m_{10,1}=m_{20,1}=m_{11,1}=m_{21,1}=1$, the encoder $j$ picks an index $l_j\in[1:2^{nR_j'}]$ uniformly at random, $j=1,2$. The codeword $\boldsymbol{x}_j$ is generated by $(\boldsymbol{u}(1),\boldsymbol{u}_j(1,1,1,l_j),\boldsymbol{s})$ according to $P^n_{X_j|UU_jS}(\boldsymbol{x}_j|\boldsymbol{u},\boldsymbol{u}_j,\boldsymbol{s})=\prod_{i=1}^n P_{X_j|UU_jS}(x_{ji}|u_i,u_{ji},s_i),j=1,2$. Here we omit the indices of the codewords. 
  \item Blocks $b\in[2:B]$. Upon observing the state sequence $\boldsymbol{s}_{b-1}$ in the last block, the encoders find a sequence $\boldsymbol{v}_{b-1}$ such that $(\boldsymbol{s}_{b-1},\boldsymbol{v}_{b-1})\in T^n_{P_{SV},\delta}$ and set $m_{0,b}=k_{0,b}$, where $k_{0,b}$ is the index of subcodebook $\mathcal{C}_{k_{b-1}}(k_{0,b})$ containing $\boldsymbol{v}_{b-1}$. We also write sequence $\boldsymbol{v}_{b-1}$ as $\boldsymbol{v}(k_{0,b},t_b)$ if $\boldsymbol{v}_{b-1}$ is the $t_b$-th sequence in sub-codebook $\mathcal{C}_{K_{b-1}}(k_{0,b})$. Generate the secret key $k_{1,b}=\kappa(t_b)$ and then split it into two independent parts $(k_{11,b},k_{21,b})\in[1:2^{nR_{11}}]\times[1:2^{nR_{21}}]$. To transmit message $m_{1,b}$, Encoder $1$ splits it into two independent parts $(m_{10,b},m_{11,b})$ and computes $c_{11,b} = m_{11,b} \oplus k_{11,b} \pmod{2^{nR_{11}}}$. The encoder  selects an index $l_{1,b}\in[1:2^{nR_1'}]$ uniformly at random and generates the codeword $\boldsymbol{x}_1$ by $P^n_{X_1|UU_1S}(\boldsymbol{x}_1|\boldsymbol{u}(k_{0,b}),\boldsymbol{u}_1(k_{0,b},m_{10,b},c_{11,b},l_{1,b}),\boldsymbol{s})$.
  Likewise, the codeword $\boldsymbol{x}_2$ for Sender 2 is generated by $P^n_{X_2|UU_2S}(\boldsymbol{x}_2|\boldsymbol{u}(k_{0,b}),\boldsymbol{u}_2(k_{0,b},m_{20,b},c_{21,b},l_{2,b}),\boldsymbol{s}),$ where $c_{21,b} = m_{21,b} \oplus k_{21,b} \pmod{2^{nR_{21}}}$.
  \item Block $B+1$. Upon observing the state sequence $\boldsymbol{s}_{B}$ in the last block, the encoder finds a sequence $\boldsymbol{v}(k_{0,B+1},t_{B+1})$ such that $(\boldsymbol{s}_{B},\boldsymbol{v}(k_{0,B+1},t_{B+1}))\in T^n_{P_{SV},\delta}$. The encoders then set $m_{0,B+1}=m_{10,B+1}=m_{11,B+1}=m_{20,B+1}=m_{21,B+1}=1$ and generate codewords $\boldsymbol{x}_1$ and $\boldsymbol{x}_2$ according to distributions $P^n_{X_1|UU_1S}(\boldsymbol{x}_1|\boldsymbol{u}(k_{0,B+1}),\boldsymbol{u}_1(k_{0,B+1},1,1,1),\boldsymbol{s})$ and $P^n_{X_2|UU_2S}(\boldsymbol{x}_2|\boldsymbol{u}(k_{0,B+1}),\boldsymbol{u}_2(k_{0,B+1},1,1,1),\boldsymbol{s})$.

\end{enumerate}

\emph{Backward Decoding: } The backward decoding process of the coefficients is illustrated in Fig. \ref{fig: coding shcheme 1 decoding}.
\begin{enumerate}
  \item Block $B+1$. The decoder looks for a unique $\hat{k}_{0,B+1}$ such that $(\boldsymbol{u}_1(\hat{k}_{0,B+1},1,1,1),\boldsymbol{u}_2(\hat{k}_{0,B+1},1,1,1),\boldsymbol{y}_{B+1})\in T^n_{P_{U_1U_2Y},\delta}$.
  \item Blocks $b\in[1:B]$. The decoder has the knowledge about $\hat{k}_{0,b+1}$ from the last block. It tries to find a unique $\boldsymbol{v}_{b}=\boldsymbol{v}(\hat{k}_{0,b+1},\hat{t}_{b+1})$ such that $(\boldsymbol{v}(\hat{k}_{0,b+1},\hat{t}_{b+1}),\boldsymbol{y}_b)\in T^n_{P_{VY},\delta}$. Using $(\hat{k}_{11,b+1},\hat{k}_{21,b+1})=\kappa(\hat{t}_{b+1})$, the decoder now computes $\hat{m}_{11,b+1}=\hat{c}_{11,b+1} \ominus \hat{k}_{11,b+1} \pmod{2^{nR_{11}}}$ and $\hat{m}_{21,b+1}=\hat{c}_{21,b+1} \ominus \hat{k}_{21,b+1} \pmod{2^{nR_{21}}}$. For block $b,b\in[2:B]$, with the help of $\boldsymbol{v}(\hat{k}_{0,b+1},\hat{t}_{b+1})$, the decoder looks for a unique tuple $(\hat{k}_{0,b},\hat{m}_{10,b},\hat{m}_{20,b},\hat{c}_{11,b},\hat{c}_{21,b},\hat{l}_{1,b},\hat{l}_{2,b})$ such that 
  \begin{align*}
    &(\boldsymbol{v}_b,\boldsymbol{u}(\hat{k}_{0,b}),\boldsymbol{u}_1(\hat{k}_{0,b},\hat{m}_{10,b},\hat{c}_{11,b},\hat{l}_{1,b}),\boldsymbol{u}_2(\hat{k}_{0,b},\hat{m}_{20,b},\hat{c}_{21,b},\hat{l}_{2,b}),\boldsymbol{y}_b)\in T^n_{P_{VUU_1U_2Y},\delta}.
  \end{align*}
  \item Block 1. The messages transmitted in Block 1 are dummy messages. Hence, the decoding of Block 1 need not be performed.
\end{enumerate}


\textbf{Error Analysis: } The proof of the reliability is similar to that in \cite[Appendix G]{lapidoth2012multiple}. To bound the decoding error probability of the coding scheme, we define the following events.
\begin{align*}
  &\textbf{E}_{1,b} = \{(\boldsymbol{V}_{b},\boldsymbol{S}_{b})\in T^n_{P_{SV},\delta} \text{for some $\boldsymbol{V}_b\in \mathcal{C}_{K_b}$}\},\\
  &\textbf{E}_{2,B+1}(k_{0,B+1}) = \{(\boldsymbol{U}_1(k_{0,B+1},1,1,1),\boldsymbol{U}_2(k_{0,B+1},1,1,1),\boldsymbol{Y}_{B+1})\in T^n_{P_{U_1U_2Y},\delta}\},\\
  &\textbf{E}_{3,b}(t_{b+1}) = \{(\boldsymbol{V}(k_{0,b+1},t_{1,b+1}),\boldsymbol{Y}_{b})\in T^n_{P_{VY},\delta}\}\\
  &\textbf{E}_{4,b}(k_{0,b},\widetilde{m}_{1,b},\widetilde{m}_{2,b}) = \{(\boldsymbol{V}(k_{0,b+1},t_{1,b+1}),\boldsymbol{U}(k_{0,b}),\boldsymbol{U}_1(\widetilde{m}_{1,b}),\boldsymbol{U}_2(\widetilde{m}_{2,b}),\boldsymbol{Y}_{b})\in T^n_{P_{VUU_1U_2Y},\delta}\}
\end{align*}
where $\widetilde{m}_{i,b}=(k_{0,b},m_{i0,b},c_{i1,b},l_{i,b}),i=1,2$.
Further define
\begin{align*}
  &\textbf{D}_{1} = \bigcap_{b=1}^{B} \textbf{E}_{1,b},\\
  &\textbf{D}_{2,B+1} =  \textbf{E}_{2,B+1}(k_{0,B+1})^c \bigcup_{k_{0,B+1}'\neq k_{0,B+1}} \textbf{E}_{2,b}(k_{0,B+1}'),\\
  &\textbf{D}_{3,b} =  \textbf{E}_{3,b}(t_{b+1})^c \bigcup_{t_{b+1}'\neq t_{b+1}}\textbf{E}_{3,b}(t_{b+1}'),\\
  &\textbf{D}_{4,b} =  \textbf{E}_{4,b}(k_{0,b},\widetilde{m}_{1,b},\widetilde{m}_{2,b})^c \bigcup_{(k_{0,b}',\widetilde{m}_{1,b}',\widetilde{m}_{2,b}')\neq (k_{0,b},\widetilde{m}_{1,b},\widetilde{m}_{2,b})} \textbf{E}_{1,b}(k_{0,b}',\widetilde{m}_{1,b}',\widetilde{m}_{2,b}'),
\end{align*}

For event $\textbf{D}_1,$ note that $Pr\{\textbf{D}_1^c\}\leq \cup_{b=1}^{B-1}Pr\{\textbf{E}_{1,b}^c\}$. By covering lemma\cite{el2011network}, it follows that $Pr\{\textbf{D}_1^c\}\to 0$ as $n\to\infty$ when $R_K = I(S;V)+\tau.$
It also follows that $Pr\{\textbf{D}_{2,B+1}\}\to 0$ as $n\to\infty$ given $R_0 < \frac{\widetilde{n}}{n}I(U_1,U_2;Y)$.

Further note that $Pr\{\textbf{D}_{3,b}\}\leq Pr\{ \textbf{E}_{3,b}(t_{b+1})^c\} + \bigcup_{t_{1,b+1}'\neq t_{1,b+1}}Pr\{\textbf{E}_{3,b}(t_{b+1}')\}$. Applying Markov Lemma\cite{thomas2006elements} yields $Pr\{\textbf{E}_{3,b}(t_{b+1})^c\}\to 0$ as $n\to \infty$. Note that in Block $b$ the decoder has the knowledge about $k_{0,b+1}$ from the decoding of Block $b+1$, so it tries to find a unique $t_{b+1}$ in $\mathcal{C}_{K_{b+1}}$ such that $(\boldsymbol{v}(k_{0,b+1},t_{b+1}),\boldsymbol{y}_b)\in T^n_{P_{VY},\delta}$. By Wyner-Ziv Theorem\cite{wyner1976rate}, it follows that $\bigcup_{t_{0,b+1}' \neq t_{0,b+1}}Pr\{\textbf{E}_{3,b}(t_{0,b+1}')\}\to 0$ since 
\begin{align*}
  R_K - R_{K_0} = I(V;Y) - \tau.
\end{align*}

Finally, by standard joint typicality argument\cite{el2011network}, we have $Pr\{D_{4,b}\}\to 0$ as $n\to\infty$ if 
\begin{align*}
  \widetilde{R}_1 &\leq I(U_1;Y|V,U,U_2),\\
  \widetilde{R}_2 &\leq I(U_2;Y|V,U,U_1),\\
  R_0 &\leq I(U,U_1,U_2;Y|V),\\
  R_0 + \widetilde{R}_1 &\leq I(U,U_1,U_2;Y|V),\\
  R_0 + \widetilde{R}_2 &\leq I(U,U_1,U_2;Y|V),\\
  \widetilde{R}_1 + \widetilde{R}_2 &\leq I(U_1,U_2;Y|V,U),\\
  R_0 + \widetilde{R}_1 + \widetilde{R}_2 &\leq I(U,U_1,U_2;Y|V).
\end{align*}

\textbf{Security Analysis: }In the above coding scheme, each message $m_1$ is split into two independent parts $(m_{10},m_{11})$, where the first part $m_{10}$ is protected by using wiretap codes and the statistical property of the channel itself, and another part $m_{11}$ is protected by secret key agreement between the sender and the legitimate receiver. To show the proposed coding scheme can achieve strong secrecy, we first analyze the performance of the wiretap codes. Let $P^n_{YS|U\mathcal{C}\mathcal{C}_1\mathcal{C}_2}$ be distribution induced by given codebooks $\mathcal{C},\mathcal{C}_1$ and $\mathcal{C}_2$ such that $\mathcal{C}_1$ and $\mathcal{C}_2$ are generated according to distributions $P_{U_1|U}^n(\cdot|\boldsymbol{u})$ and $P_{U_2|U}^n(\cdot|\boldsymbol{u})$, respectively for $\boldsymbol{u}\in\mathcal{C}$ and
\begin{align*}
  P^n_{YS|U\mathcal{C}\mathcal{C}_1\mathcal{C}_2}(\boldsymbol{y},\boldsymbol{s}|\boldsymbol{u})&=\frac{1}{|\mathcal{C}_1||\mathcal{C}_2|}\sum_{i=1}^{|\mathcal{C}_1|}\sum_{j=1}^{|\mathcal{C}_2|}P^n_{YS|UU_1U_2}(\boldsymbol{y},\boldsymbol{s}|\boldsymbol{u},\boldsymbol{u}_1(i),\boldsymbol{u}_2(j))\\
  &=\frac{1}{|\mathcal{C}_1||\mathcal{C}_2|}\sum_{i=1}^{|\mathcal{C}_1|}\sum_{j=1}^{|\mathcal{C}_2|}P^n_{Y|UU_1U_2S}(\boldsymbol{y}|\boldsymbol{u},\boldsymbol{u}_1(i),\boldsymbol{u}_2(j),\boldsymbol{s})P_S^n(\boldsymbol{s}).
\end{align*}
Further define distribution $P^n_{YS|U}$ on $\mathcal{Y}^n\times\mathcal{S}^n$ such that
\begin{align*}
  P^n_{YS|U}(\boldsymbol{y},\boldsymbol{s}|\boldsymbol{u})=\sum_{\boldsymbol{u}_1\in\mathcal{U}^n_1}\sum_{\boldsymbol{u}_2\in\mathcal{U}^n_2}P^n_{U_1|U}(\boldsymbol{u}_1|\boldsymbol{u})P^n_{U_2|U}(\boldsymbol{u}_2|\boldsymbol{u})P^n_{YS|U,U_1,U_2}(\boldsymbol{y},\boldsymbol{s}|\boldsymbol{u},\boldsymbol{u}_{1},\boldsymbol{u}_{2}).
\end{align*}
\begin{lemma}\label{lem: wiretap codes}
  Let $(\textbf{C},\textbf{C}_1,\textbf{C}_2)$ be a pair of random codebooks and $P_{UU_1U_2}=P_{U}P_{U_1|U}P_{U_2|U}$ be the joint distribution such that
  \begin{align*}
    &Pr\{\textbf{C}=\mathcal{C}\} = \prod_{l=1}^{2^{nR_0}}\prod_{i=1}^n P_U(u_i(l)),\\
    &Pr\{\textbf{C}_1=\mathcal{C}_1(l),\textbf{C}_2=\mathcal{C}_2(l)|\textbf{C}=\mathcal{C}\}=\prod_{i=1}^2\prod_{l_i=1}^{2^{nR_i}}\prod_{j=1}^{n}P_{U_i|U}(u_{ij}(l_i)|\boldsymbol{u}(l)),
  \end{align*}
for some $\boldsymbol{u}(l)$ in $\mathcal{C}$, where $R_1,R_2$ are positive real numbers such that
\begin{align*}
  &R_1 > I(U_1;Z|S,U),\\
  &R_2 > I(U_2;Z|S,U),\\
  &R_1 + R_2 > I(U_1,U_2;Z|S,U).
\end{align*}
Let $(\boldsymbol{U}_1(l,l_1),\boldsymbol{U}_2(l,l_2))$ be randomly selected sequences and $\boldsymbol{Z}^n$ be the output sequence such that
\begin{align*}
  &Pr\{\boldsymbol{U}(l)=\boldsymbol{u}(l),\boldsymbol{U}_1(l,l_1)=\boldsymbol{u}_1(l,l_1),\boldsymbol{U}_2(l,l_2)=\boldsymbol{u}_2(l,l_2),\boldsymbol{Z}=\boldsymbol{z},\boldsymbol{S}=\boldsymbol{s}|\mathcal{C},\mathcal{C}_1,\mathcal{C}_2\}\\
  &=\frac{1}{|\mathcal{C}_1||\mathcal{C}_2|}\bar{P}(l)\prod_{i=1}^n P_{Z|UU_1U_2S}(z_i|u_i,u_{1i},u_{2i},s_i)P_S(s_i),
\end{align*}
where $\bar{P}$ is some distribution on $\mathcal{C}$.
Then, it follows that
\begin{align*}
  \mathbb{E}_{\textbf{C}\textbf{C}_1\textbf{C}_2}[D(P^n_{ZS|U\textbf{C}\textbf{C}_1\textbf{C}_2}||P^n_{ZS|U})] \to 0
\end{align*}
exponentially fast, where $D(P||Q)$ is KL divergence.
\end{lemma}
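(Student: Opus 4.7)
The plan is to reduce the lemma to a two-user channel resolvability (soft-covering) result and then apply it pointwise in the state sequence. First I would exploit the independence of $\boldsymbol{S}$ from the codebooks: because $P^n_{ZS|U\textbf{C}\textbf{C}_1\textbf{C}_2}(\boldsymbol{z},\boldsymbol{s}|\boldsymbol{u}) = P_S^n(\boldsymbol{s})\cdot P^n_{Z|US\textbf{C}\textbf{C}_1\textbf{C}_2}(\boldsymbol{z}|\boldsymbol{u},\boldsymbol{s})$ and the target factors analogously, the chain rule yields
\begin{align*}
D(P^n_{ZS|U\textbf{C}\textbf{C}_1\textbf{C}_2}\,\|\,P^n_{ZS|U}) = \mathbb{E}_{\boldsymbol{S}}\bigl[D\bigl(P^n_{Z|US\textbf{C}\textbf{C}_1\textbf{C}_2}(\cdot|\boldsymbol{u},\boldsymbol{S})\,\big\|\,P^n_{Z|US}(\cdot|\boldsymbol{u},\boldsymbol{S})\bigr)\bigr].
\end{align*}
After taking expectation over the codebooks it thus suffices to bound the conditional KL divergence uniformly in $(\boldsymbol{u},\boldsymbol{s})$ and average.

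Next I would observe that for each fixed $(\boldsymbol{u},\boldsymbol{s})$ the induced distribution $P^n_{Z|US\textbf{C}\textbf{C}_1\textbf{C}_2}$ is exactly the $|\mathcal{C}_1|\cdot|\mathcal{C}_2|$-term empirical average of products $P^n_{Z|UU_1U_2S}(\cdot|\boldsymbol{u},\boldsymbol{u}_1(i),\boldsymbol{u}_2(j),\boldsymbol{s})$ over codewords drawn i.i.d.\ from $P^n_{U_1|U}(\cdot|\boldsymbol{u})$ and $P^n_{U_2|U}(\cdot|\boldsymbol{u})$, while $P^n_{Z|US}$ is their exact expectation. This is precisely the two-user MAC resolvability setting, and the three rate conditions $R_1 > I(U_1;Z|S,U)$, $R_2 > I(U_2;Z|S,U)$, and $R_1+R_2 > I(U_1,U_2;Z|S,U)$ match exactly those required by the multi-user soft-covering lemma. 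Invoking it in the style of Cuff's resolvability argument, extended to the MAC case, yields a bound of the form $e^{-n\alpha}$ for some $\alpha>0$ determined by the rate gaps, uniformly in $(\boldsymbol{u},\boldsymbol{s})$. Averaging over $\boldsymbol{S}\sim P_S^n$, the selector distribution $\bar P(l)$, and the outer codebook $\textbf{C}$ then preserves the exponential decay, completing the claim.

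The hard part will be proving the exponential MAC soft-covering bound itself rather than just vanishing convergence. The standard approach is to decompose $\log\tfrac{P^n_{Z|US\textbf{C}\textbf{C}_1\textbf{C}_2}}{P^n_{Z|US}}$ according to the three information densities associated with $U_1$, $U_2$, and $(U_1,U_2)$, restrict attention to their respective joint-typical sets, and use Chernoff–Hoeffding concentration on each set while controlling the atypical contribution through an absolute-continuity constant plus standard large-deviation estimates. The delicate point is the cross-term between the two codebooks: fixing $\boldsymbol{u}_1(i)\in\mathcal{C}_1$ and averaging only over $\mathcal{C}_2$ introduces a correlation between the two empirical averages that must be controlled uniformly, and this is exactly where the sum-rate condition $R_1+R_2 > I(U_1,U_2;Z|S,U)$ becomes essential. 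I would isolate this combinatorial estimate as a separate technical lemma and relegate its detailed Chernoff computation to an appendix.
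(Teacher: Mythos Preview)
Your reduction via the chain rule to a state-conditioned MAC resolvability problem is valid, and the three rate conditions you identify are the correct ones. However, your claim that the resulting soft-covering bound holds ``uniformly in $(\boldsymbol{u},\boldsymbol{s})$'' is not quite right: once you fix $(\boldsymbol{u},\boldsymbol{s})$, the effective channel at time $i$ is $P_{Z|U_1U_2}(\cdot\mid\cdot,\cdot;u_i,s_i)$ with inputs drawn from $P_{U_1|U=u_i}\times P_{U_2|U=u_i}$, and the relevant rate threshold becomes the empirical average $\tfrac{1}{n}\sum_i I(U_1;Z\mid U{=}u_i,S{=}s_i)$, which coincides with $I(U_1;Z\mid S,U)$ only for typical $(\boldsymbol{u},\boldsymbol{s})$. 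This is easy to patch---restrict to typical $(\boldsymbol{u},\boldsymbol{s})$ and bound the atypical contribution by $n\log|\mathcal{Z}|$ times an exponentially small probability---but it should be stated explicitly rather than hidden under ``uniformly''.

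The paper's proof takes a different and somewhat lighter route that avoids both the state-conditioning step and any Chernoff--Hoeffding concentration. It keeps $(Z,S)$ together throughout and, after writing the KL divergence as an expectation over $(m,m_1,m_2)$ and the codebooks, applies Jensen's inequality to pull the expectation $\mathbb{E}_{\backslash(m,m_1,m_2)}$ over the remaining codewords \emph{inside} the logarithm. This collapses the three cross-terms (those with $l_1'\neq m_1$, $l_2'\neq m_2$, or both) directly into ratios of the form $2^{-nR_1}P_{ZS|UU_1}/Q_{ZS|U}$, $2^{-nR_2}P_{ZS|UU_2}/Q_{ZS|U}$, and $2^{-n(R_1+R_2)}P_{ZS|UU_1U_2}/Q_{ZS|U}$, plus a constant $1$, with no concentration argument needed. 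A single typical/atypical split on the full tuple $(U,U_1,U_2,S,Z)$ then bounds the logarithm on the typical set by information-density estimates of the form $2^{-n(R_j-I(U_j;Z,S|U)-\epsilon)}$, and the exponentially small atypical probability handles the remainder. Your Chernoff-based route would also work, but it is heavier machinery for the same conclusion; the Jensen trick is what lets the paper dispatch the lemma in a page rather than isolating a separate concentration lemma.
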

The lemma is proved by a similar technique of channel resolvability\cite{helal2020cooperative}. We give the proof in Appendix \ref{app: proof of wiretap codes} for completeness. Define the security index $\mathbb{S}$ of random variable $K$ against $Z$\cite{csiszar2011information} as $\mathbb{S}(K|Z)=\log|\mathcal{K}|-H(K|Z)$, where $\mathcal{K}$ is the range of $K$. The secret key is constructed by the following lemma.

\begin{lemma}\label{lem: secret key}
  Suppose $V-S-(YZ)$ forms a Markov chain.
  Let $\mathcal{C}_{\mathcal{V}}=\{\boldsymbol{v}(i,j)\}_{i\in[1:2^{nR_1}],j\in[1:2^{nR_2}]}$ be a codebook containing $2^{n\widetilde{R}}$ codewords, where $\widetilde{R}=I(V;S)+\tau, R_1=I(V;S)-I(V;Y)+2\tau,R_2=I(V;Y)-\tau$, $\tau>0$. There exists a mapping $\kappa:[1:2^{nR_2}]\to\{1,...,k\}$, where $k=I(V;Y)-I(V;Z)$ such that
  \begin{align*}
    \mathbb{S}(\kappa(J)|\boldsymbol{Z},I) \leq \epsilon
  \end{align*}
  for some $\epsilon>0$ that can be arbitrarily small. In fact, by setting the size of the codebook as above, we can further construct a partition on the codebook such that
  \begin{align*}
    \mathbb{S}(I|\boldsymbol{Z}) \leq \epsilon.
  \end{align*}
\end{lemma}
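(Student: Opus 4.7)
The plan is to establish both claims via the channel resolvability / soft-covering technique, mirroring the approach used in the proof of Lemma 1, applied to the Markov chain $V-S-(Y,Z)$. Because the marginal $P_V$ used to generate the codebook agrees with the $V$-marginal of $P_{SV}$, drawing $\boldsymbol{V}\sim P_V^n$ and passing it through $P_{ZS|V}$ yields the correct joint distribution $P_{VSZ}^n$ for the resolvability analysis, and the Markov chain collapses $P_{Z|VS}$ to $P_{Z|S}$. Thus we may regard $V\to Z$ as a memoryless channel with input $P_V$ and output $P_Z$.

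For the second claim $\mathbb{S}(I|\boldsymbol{Z})\leq\epsilon$, note that $I$ is uniform on $[1:2^{nR_1}]$ by codebook construction, so $\mathbb{S}(I|\boldsymbol{Z})=I(I;\boldsymbol{Z})$. I would specialize Lemma 1 to the single-input channel $P_{Z|V}$ (trivialize $U,U_1$ and set $U_2\leftarrow V$, or equivalently rerun the short resolvability proof directly). Each bin $\mathcal{C}_{\mathcal{V}}(i)$ contains $2^{nR_2}$ i.i.d. codewords with $R_2=I(V;Y)-\tau>I(V;Z)$, so the induced output distribution $P^n_{\boldsymbol{Z}|I,\textbf{C}_{\mathcal{V}}}(\cdot|i)$ is exponentially close to $P_Z^n$ in expected KL divergence. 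Averaging over $I$ and expanding $I(I;\boldsymbol{Z}|\textbf{C}_{\mathcal{V}})$ as a conditional KL divergence against $P_Z^n$ converts this into an exponential bound on $I(I;\boldsymbol{Z}|\textbf{C}_{\mathcal{V}})$, and Markov's inequality then produces a deterministic codebook for which the security-index bound holds.

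For the key claim, I would construct $\kappa$ by assigning each index $j\in[1:2^{nR_2}]$ uniformly at random to one of $k=\lfloor 2^{n(I(V;Y)-I(V;Z)-\tau')}\rfloor$ values for some $\tau'>\tau$. For every fixed key value $\ell$ and bin index $i$, the preimage $\{\boldsymbol{v}(i,j):\kappa(j)=\ell\}$ is conditionally a random sub-codebook of rate $R_2-\tfrac{1}{n}\log k = I(V;Z)+(\tau'-\tau)>I(V;Z)$. Applying the same soft-covering inequality to this sub-codebook yields an exponentially small value of $\mathbb{E}\bigl[D\bigl(P^n_{\boldsymbol{Z}|I,\kappa(J),\textbf{C}_{\mathcal{V}},\kappa}\,\big\|\,P_Z^n\bigr)\bigr]$, where the expectation is over the joint codebook and key assignment. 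Expanding $I(\kappa(J);\boldsymbol{Z},I)$ in terms of this KL divergence, and using a standard Chernoff argument on the random binning to show that $\kappa(J)$ is itself $\epsilon$-close to uniform, gives $\mathbb{S}(\kappa(J)|\boldsymbol{Z},I)\leq\epsilon$. A final derandomization via Markov's inequality jointly over all the required inequalities fixes $(\textbf{C}_{\mathcal{V}},\kappa)$ deterministically.

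The main obstacle will be the careful bookkeeping in this doubly-random construction: the same codebook must simultaneously support the reliability analysis of the outer coding scheme, the bin-index secrecy $\mathbb{S}(I|\boldsymbol{Z})\leq\epsilon$, and the post-$\kappa$ key secrecy $\mathbb{S}(\kappa(J)|\boldsymbol{Z},I)\leq\epsilon$. One therefore needs to combine several exponential bounds through a union-type Markov argument, and verify that the strict inequalities $R_2>I(V;Z)$ and $R_2-\tfrac{1}{n}\log k>I(V;Z)$ both hold with the parameterization $R_2=I(V;Y)-\tau$ and $\tfrac{1}{n}\log k=I(V;Y)-I(V;Z)-\tau'$ by choosing $\tau'>\tau>0$ small. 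A subsidiary point is that $\mathbb{S}(\kappa(J)|\boldsymbol{Z},I)=\log k-H(\kappa(J)|\boldsymbol{Z},I)$ requires controlling both the near-independence of $\kappa(J)$ from $(\boldsymbol{Z},I)$ and the near-uniformity of its marginal; the latter is handled by classical concentration bounds for uniformly random partitions, while the former is the content of the resolvability argument above.
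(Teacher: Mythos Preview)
Your route via soft-covering is genuinely different from the paper's, which instead invokes the Csisz\'ar--K\"orner extractor lemma (their Lemma~17.3) applied directly to the family of operational conditionals $\{P_{\boldsymbol{V}|\boldsymbol{Z}=\boldsymbol{z}}\}$ induced by the joint-typicality encoder $f(\boldsymbol{S})$, together with pointwise typicality bounds, and then converts the resulting near-equipartition into an exact one via a lemma of He--El~Gamal. No resolvability statement is used at all.

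Your argument, however, has a real gap in how it models the operational joint law. You write that ``$I$ is uniform on $[1:2^{nR_1}]$ by codebook construction'' and then invoke Lemma~1 on each bin as if, conditioned on $I=i$, the codeword $\boldsymbol{V}$ were drawn uniformly from bin $i$ and $\boldsymbol{Z}$ produced from it through the memoryless channel $P_{Z|V}$. That is not the scheme: $\boldsymbol{S}\sim P_S^n$ is drawn first, the encoder sets $\boldsymbol{V}=f(\boldsymbol{S})$ by a joint-typicality search over the whole codebook, $(I,J)$ are merely the indices of that deterministic choice, and $\boldsymbol{Z}$ is produced from $\boldsymbol{S}$ (not from $\boldsymbol{V}$) through $P_{Z|S}^n$. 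For a fixed codebook $I$ need not be uniform, the conditional law of $\boldsymbol{V}$ given $I=i$ need not be uniform over bin $i$, and Lemma~1 does not apply to $P^n_{\boldsymbol{Z}|I,\mathcal{C}_{\mathcal{V}}}(\cdot\,|\,i)$ as stated. The same issue recurs in your key step, where you treat $\{\boldsymbol{v}(i,j):\kappa(j)=\ell\}$ as a sub-codebook from which a codeword is uniformly selected and passed through $P_{Z|V}$.

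The gap is repairable, but not for free: one standard fix is to replace the joint-typicality encoder by a likelihood encoder and invoke a source-side soft-covering lemma to show that the operational law of $(I,J,\boldsymbol{S},\boldsymbol{Z})$ is close in total variation to the idealized ``reverse'' law in which $(I,J)$ is uniform, $\boldsymbol{V}=\boldsymbol{v}(I,J)$, and $(\boldsymbol{S},\boldsymbol{Z})\sim P^n_{SZ|V}(\cdot\,|\,\boldsymbol{V})$; only then can your bin-wise resolvability argument be run. The paper's extractor-lemma route sidesteps this entirely by working with whatever conditionals $P_{\boldsymbol{V}|\boldsymbol{Z}}$ the joint-typicality encoder actually induces, which is precisely what that lemma is designed to handle.
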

The proof is given in Appendix \ref{app: proof of secret key}. When applying this lemma to the coding scheme, $\boldsymbol{Z}$ is the output of the wiretap channel, $\mathcal{C}_{\mathcal{V}}$ is the key message codebook $\mathcal{C}_K$ in \emph{Key Message Codebook Generation},  $I$ and $J$ index the lossy description $\boldsymbol{V}_{b-1}(I,J)$ of the state sequence $\boldsymbol{S}_{b-1}$ from the last block.
\begin{remark}
  The first part of the lemma is the direct part of Theorem 17.21 in \cite{csiszar2011information}, where the sender and the legitimate receiver try to generate a key and keep it secret from the eavesdropper who can observe the channel output and an index $I$. The lemma gives the secret key with rate $R=I(V;Y)-I(V;Z)$. Setting the random variable $V=S$ in Lemma \ref{lem: secret key}, the randomness extracted is $H(S|Z)-H(S|Y)$. Compared to the randomness extraction in \cite[Lemma 17.5]{csiszar2011information}, the additional term $H(S|Y)$ is due to the assumption that the eavesdropper always observes index $I$.
\end{remark}
\begin{remark}\label{remark: codebook markov chain}
  Note that in block $b$, $\boldsymbol{Z}^{b-1}$ affects $\boldsymbol{Z}_b$ by $K_{0,b}$ and $K_{1,b}$ and inversely $\boldsymbol{Z}_b$ affects $\boldsymbol{Z}^{b-1}$ by $K_{0,b}$ and $C_{11,b}$. The key $K_{0,b}$ is used to determine the auxiliary sequence $\boldsymbol{U}(K_{0,b})$ and then generates the message codebook $\textbf{C}_1(K_{0,b}),\textbf{C}_2(K_{0,b})$. Hence, given codebook $\bar{\textbf{C}}^{B+1}$ used in $B+1$ blocks, $\boldsymbol{Z}^{b-1}$ affects $\boldsymbol{Z}_b$ only by $K_{1,b}$
\end{remark}
Note that no meaningful message is sent in Block 1 and Block B+1. Hence, we set $M_{1,1}=M_{2,1}=M_{1,B+1}=M_{2,B+1}=\text{Const}$. To bound the information leakage, let $M^{B+1}=[M_1,M_2,\dots,M_{B+1}],M^{[b+1]}=[M_{b+1},M_{b+2},\dots,M_{B+1}]$, $\bar{\textbf{C}}^{B+1}$ be the codebooks used in $B+1$ blocks. It follows that
\begin{align*}
  &I(M_1^{B+1},M_2^{B+1};\boldsymbol{Z}^{B+1}|\bar{\textbf{C}}^{B+1})\\
  &\leq I(M_1^{B+1},M_2^{B+1};\boldsymbol{Z}^{B+1},\boldsymbol{U}^{B+1}|\bar{\textbf{C}}^{B+1})\\
  &=\sum_{b} I(M_{1,b},M_{2,b};\boldsymbol{Z}^{B+1},\boldsymbol{U}^{B+1}|M_1^{[b+1]},M_2^{[b+1]},\bar{\textbf{C}}^{B+1})\\
  &\overset{(a)}{\leq} \sum_{b} I(M_{1,b},M_{2,b};\boldsymbol{Z}^{B+1},\boldsymbol{U}^{B+1}|\boldsymbol{S}_b,M_1^{[b+1]},M_2^{[b+1]},\bar{\textbf{C}}^{B+1})\\
  &\overset{(b)}{=}\sum_{b} I(M_{1,b},M_{2,b};\boldsymbol{Z}^{b},\boldsymbol{U}^{b}|\boldsymbol{S}_b,\bar{\textbf{C}}^{B+1}),
\end{align*}
where $(a)$ follows by the independence between $(M_{1,b},M_{2,b})$ and $\boldsymbol{S}_b$, $(b)$ follows by the independence between $(M_{1,b},M_{2,b})$ and $(M_1^{[b+1]},M_2^{[b+1]},\boldsymbol{Z}^{[b+1]},\boldsymbol{U}^{[b+1]})$ given $(\boldsymbol{Z}^b,\boldsymbol{U}^b,\boldsymbol{S}_b)$.

To bound $I(M_{1,b},M_{2,b};\boldsymbol{Z}^{b},\boldsymbol{U}^{b}|\boldsymbol{S}_{b},\bar{\textbf{C}}^{B+1})$, it follows that
\begin{align*}
  &I(M_{1,b},M_{2,b};\boldsymbol{Z}^{b},\boldsymbol{U}^{b}|\boldsymbol{S}_{b},\bar{\textbf{C}}^{B+1})\\
  &=I(M_{1,b},M_{2,b};\boldsymbol{Z}^{b-1},\boldsymbol{U}^{b-1}|\boldsymbol{S}_{b},\bar{\textbf{C}}^{B+1}) + I(M_{1,b},M_{2,b};\boldsymbol{Z}_{b},\boldsymbol{U}_{b}|\boldsymbol{Z}^{b-1},\boldsymbol{U}^{b-1},\boldsymbol{S}_{b},\bar{\textbf{C}}^{B+1})\\
  &=I(M_{1,b},M_{2,b};\boldsymbol{Z}_{b},\boldsymbol{U}_{b}|\boldsymbol{Z}^{b-1},\boldsymbol{U}^{b-1},\boldsymbol{S}_{b},\bar{\textbf{C}}^{B+1})\\
  &=\underbrace{I(M_{10,b},M_{20,b};\boldsymbol{Z}_{b},\boldsymbol{U}_{b}|\boldsymbol{Z}^{b-1},\boldsymbol{U}^{b-1},\boldsymbol{S}_{b},\bar{\textbf{C}}^{B+1})}_{=:I_1}+\underbrace{I(M_{11,b},M_{21,b};\boldsymbol{Z}_{b},\boldsymbol{U}_{b}|\boldsymbol{Z}^{b-1},\boldsymbol{U}^{b-1},\boldsymbol{S}_{b},M_{10,b},M_{20,b},\bar{\textbf{C}}^{B+1})}_{=:I_2}.
\end{align*}
The above $I_1$ can be bounded by 
\begin{align*}
  I_1 &\leq I(M_{10,b},M_{20,b},\boldsymbol{Z}^{b-1},\boldsymbol{U}^{b-1};\boldsymbol{Z}_{b},\boldsymbol{U}_{b}|\boldsymbol{S}_{b},\bar{\textbf{C}}^{B+1})\\
  &= I(M_{10,b},M_{20,b};\boldsymbol{Z}_{b},\boldsymbol{U}_{b}|\boldsymbol{S}_{b},\bar{\textbf{C}}^{B+1}) + I(\boldsymbol{Z}^{b-1},\boldsymbol{U}^{b-1};\boldsymbol{Z}_{b},\boldsymbol{U}_{b}|\boldsymbol{S}_{b},M_{10,b},M_{20,b},\bar{\textbf{C}}^{B+1})\\
  &\leq I(M_{10,b},M_{20,b};\boldsymbol{Z}_{b},\boldsymbol{U}_{b}|\boldsymbol{S}_{b},\bar{\textbf{C}}^{B+1}) + I(\boldsymbol{Z}^{b-1},\boldsymbol{U}^{b-1};\boldsymbol{Z}_{b},\boldsymbol{U}_{b},\boldsymbol{S}_{b},M_{10,b},M_{20,b},K_{1,b}|\bar{\textbf{C}}^{B+1})\\
  &\overset{(a)}{=}\underbrace{I(M_{10,b},M_{20,b};\boldsymbol{Z}_{b},\boldsymbol{U}_{b}|\boldsymbol{S}_{b},\bar{\textbf{C}}^{B+1})}_{=:I_{11}} + \underbrace{I(\boldsymbol{Z}^{b-1},\boldsymbol{U}^{b-1};K_{1,b},K_{0,b}|\bar{\textbf{C}}^{B+1})}_{=:I_{12}},
\end{align*}
where $(a)$ follows by the Markov chain $(\boldsymbol{Z}^{b-1},\boldsymbol{U}^{b-1}) - (K_{0,b},K_{1,b},\bar{\textbf{C}}^{B+1}) - (\boldsymbol{Z}_{b},\boldsymbol{S}_{b},M_{10,b},M_{20,b})$.
Now by Lemma \ref{lem: wiretap codes},
\begin{align}
  I_{11}&=I(M_{10,b},M_{20,b};\boldsymbol{Z}_{b},\boldsymbol{U}_{b}|\boldsymbol{S}_{b},\bar{\textbf{C}}^{B+1})\\
  &\overset{(a)}{=}I(M_{10,b},M_{20,b};\boldsymbol{Z}_{b},\boldsymbol{S}_{b}|\boldsymbol{U}_{b},\bar{\textbf{C}}^{B+1})\notag\\
  &=\sum_{m_{10,b}}\sum_{m_{20,b}}Pr\{M_{10,b}=m_{10,b},M_{20,b}=m_{20,b}\} \mathbb{E}_{\textbf{C}\textbf{C}_1\textbf{C}_2}\left[D(P_{\boldsymbol{Z}\boldsymbol{S}|\boldsymbol{U}\textbf{C}\textbf{C}_1\textbf{C}_2m_{10}m_{20}}||P_{\boldsymbol{Z}\boldsymbol{S}|\boldsymbol{U}\textbf{C}\textbf{C}_1\textbf{C}_2})\right]\notag\\
  &\overset{(b)}{\leq}\sum_{m_{10,b}}\sum_{m_{20,b}}Pr\{M_{10,b}=m_{10,b},M_{20,b}=m_{20,b}\} \mathbb{E}_{\textbf{C}\textbf{C}_1\textbf{C}_2}\left[D(P_{\boldsymbol{Z}\boldsymbol{S}|\boldsymbol{U}\textbf{C}\textbf{C}_1\textbf{C}_2m_{10}m_{20}}||P_{\boldsymbol{Z}\boldsymbol{S}|\boldsymbol{U}})\right]\notag\\
  \label{neq: I_11}&=\mathbb{E}_{M_{10}M_{20}}\mathbb{E}_{\textbf{C}\textbf{C}_1\textbf{C}_2}\left[D(P_{\boldsymbol{Z}\boldsymbol{S}|\boldsymbol{U}\textbf{C}\textbf{C}_1\textbf{C}_2M_{10}M_{20}}||P_{\boldsymbol{Z}\boldsymbol{S}|\boldsymbol{U}})\right]\overset{(c)}{\to} 0
\end{align}
where $(a)$ follows by the independence between $(M_{10,b},M_{20,b})$ and $\boldsymbol{S}_{b}$ given $\bar{\textbf{C}}^{B+1}$, $(b)$ follows by adding $D(P_{\boldsymbol{Z}\boldsymbol{S}|\boldsymbol{U}\textbf{C}\textbf{C}_1\textbf{C}_2}||P_{\boldsymbol{Z}\boldsymbol{S}|\boldsymbol{U}})\geq 0$ to the equation above it, 
$(c)$ is follows from Lemma \ref{lem: wiretap codes}. 

Applying Lemma \ref{lem: secret key} to our coding scheme by replacing $\boldsymbol{Z}$ by $(\boldsymbol{Z},\boldsymbol{U})$ and giving the role of $\widetilde{R},R_1,R_2,k$ to $R_K,R_{K_0},R_K-R_{K_0},R_{K_1}$ respectively yields
\begin{align*} 
  I(K_{1,b};K_{0,b}|\bar{\textbf{C}}^{B+1})\leq I(K_{1,b};\boldsymbol{Z}_{b-1},\boldsymbol{U}_{b-1},K_{0,b}|\bar{\textbf{C}}^{B+1})\leq\mathbb{S}(K_{1,b}|\boldsymbol{Z}_{b-1},\boldsymbol{U}_{b-1},K_{0,b})\leq \epsilon,\\
  I(K_{0,b};\boldsymbol{Z}_{b-1},\boldsymbol{U}_{b-1}|\bar{\textbf{C}^{B+1}})\leq \mathbb{S}(K_{0,b}|\boldsymbol{Z}_{b-1},\boldsymbol{U}_{b-1})\leq \epsilon
\end{align*}
where $\mathbb{S}$ is the security index and hence,
$I(K_{0,b},K_{1,b};\boldsymbol{Z}_{b-1},\boldsymbol{U}_{b-1})\leq 2\epsilon$
for some $\epsilon>0$. Now by the same recursion argument as in \cite{sasaki2019wiretap}, it follows that
\begin{align}
  &I(K_{0,b},K_{1,b};\boldsymbol{Z}^{b-1},\boldsymbol{U}^{b-1}|\bar{\textbf{C}}^{B+1})\notag\\
  &=I(K_{0,b},K_{1,b};\boldsymbol{Z}_{b-1},\boldsymbol{U}_{b-1}|\bar{\textbf{C}}^{B+1}) + I(K_{0,b},K_{1,b};\boldsymbol{Z}^{b-2},\boldsymbol{U}^{b-2}|\bar{\textbf{C}}^{B+1},\boldsymbol{Z}_{b-1},\boldsymbol{U}_{b-1})\notag\\
  &\leq I(K_{0,b},K_{1,b};\boldsymbol{Z}_{b-1},\boldsymbol{U}_{b-1}|\bar{\textbf{C}}^{B+1}) + I(K_{0,b},K_{1,b},K_{0,b-1},K_{1,b-1};\boldsymbol{Z}^{b-2},\boldsymbol{U}^{b-2}|\bar{\textbf{C}}^{B+1},\boldsymbol{Z}_{b-1},\boldsymbol{U}_{b-1})\notag\\
  &\overset{(a)}{=}I(K_{0,b},K_{1,b};\boldsymbol{Z}_{b-1},\boldsymbol{U}_{b-1}|\bar{\textbf{C}}^{B+1}) + I(K_{0,b-1},K_{1,b-1};\boldsymbol{Z}^{b-2},\boldsymbol{U}^{b-2}|\bar{\textbf{C}}^{B+1},\boldsymbol{Z}_{b-1},\boldsymbol{U}_{b-1})\notag\\
  &\overset{(b)}{\leq} I(K_{0,b},K_{1,b};\boldsymbol{Z}_{b-1},\boldsymbol{U}_{b-1}|\bar{\textbf{C}}^{B+1}) + I(K_{0,b-1},K_{1,b-1};\boldsymbol{Z}^{b-2},\boldsymbol{U}^{b-2}|\bar{\textbf{C}}^{B+1})\notag.
\end{align}
where $(a)$ follows by the independence between $(K_{0,b},K_{1,b})$ and $(\boldsymbol{Z}^{b-2},\boldsymbol{U}^{b-2})$ given $(\bar{\textbf{C}}^{B+1},\boldsymbol{Z}_{b-1},\boldsymbol{U}_{b-1},K_{0,b-1},K_{1,b-1})$, $(b)$ follows by the Markov chain $(\boldsymbol{Z}^{b-2},\boldsymbol{U}^{b-2})\to (\bar{\textbf{C}}^{B+1},K_{0,b-1},K_{1,b-1})\to(\boldsymbol{Z}_{b-1},\boldsymbol{U}_{b-1})$. 
Hence,
\begin{align}
  I(K_{0,b},K_{1,b};\boldsymbol{Z}^{b-1},\boldsymbol{U}^{b-1}|\bar{\textbf{C}}^{B+1}) &\leq \sum_{b=1}^{B+1} I(K_{0,b},K_{1,b};\boldsymbol{Z}_{b-1},\boldsymbol{U}_{b-1}|\bar{\textbf{C}}^{B+1})\notag\\
  \label{neq: I_12}&\leq 2(B+1)\epsilon.
\end{align}

Combining \eqref{neq: I_11} and \eqref{neq: I_12} gives
\begin{align*}
  I_{1}\leq (2B+3)\epsilon.
\end{align*}
To bound $I_{2}$, it follows that
\begin{align*}
  I_{2}&=I(M_{11,b},M_{21,b};\boldsymbol{Z}_{b},\boldsymbol{U}_b|\boldsymbol{Z}^{b-1},\boldsymbol{U}^{b-1},\boldsymbol{S}_{b},M_{10,b},M_{20,b},\bar{\textbf{C}}^{B+1})\\
  &\leq I(M_{11,b},M_{21,b},\boldsymbol{Z}^{b-1},\boldsymbol{U}^{b-1};\boldsymbol{Z}_{b},\boldsymbol{U}_b|\boldsymbol{S}_{b},M_{10,b},M_{20,b},\bar{\textbf{C}}^{B+1})\\
  &=I(M_{11,b},M_{21,b};\boldsymbol{Z}_{b},\boldsymbol{U}_b|\boldsymbol{S}_{b},M_{10,b},M_{20,b},\bar{\textbf{C}}^{B+1}) + I(\boldsymbol{Z}^{b-1},\boldsymbol{U}^{b-1};\boldsymbol{Z}_{b},\boldsymbol{U}_b|\boldsymbol{S}_{b},M_{10,b},M_{20,b},\bar{\textbf{C}}^{B+1},M_{11,b},M_{21,b})\\
  &\leq I(M_{11,b},M_{21,b};\boldsymbol{Z}_{b},\boldsymbol{U}_b,C_{11,b},C_{21,b}|\boldsymbol{S}_{b},M_{10,b},M_{20,b},\bar{\textbf{C}}^{B+1}) + I(\boldsymbol{Z}^{b-1},\boldsymbol{U}^{b-1};\boldsymbol{Z}_{b},\boldsymbol{U}_b|\boldsymbol{S}_{b},M_{10,b},M_{20,b},\bar{\textbf{C}}^{B+1},M_{11,b},M_{21,b})\\
  &=\underbrace{I(M_{11,b},M_{21,b};C_{11,b},C_{21,b}|\boldsymbol{S}_{b},M_{10,b},M_{20,b},\bar{\textbf{C}}^{B+1})}_{=:I_{21}} + \underbrace{I(M_{11,b},M_{21,b};\boldsymbol{Z}_{b},\boldsymbol{U}_b|\boldsymbol{S}_{b},M_{10,b},M_{20,b},\bar{\textbf{C}}^{B+1},C_{11,b},C_{21,b})}_{=:I_{22}}\\
  &\quad\quad\quad\quad\quad +\underbrace{I(\boldsymbol{Z}^{b-1},\boldsymbol{U}^{b-1};\boldsymbol{Z}_{b},\boldsymbol{U}_b|\boldsymbol{S}_{b},M_{10,b},M_{20,b},\bar{\textbf{C}}^{B+1},M_{11,b},M_{21,b})}_{=:I_{23}}.
\end{align*}
Then, it follows that
\begin{align*}
  I_{21}&= I(M_{11,b},M_{21,b};C_{11,b},C_{21,b}|\boldsymbol{S}_{b},M_{10,b},M_{20,b},\bar{\textbf{C}}^{B+1})\\
  &=I(M_{11,b},M_{21,b};C_{11,b},C_{21,b}|\bar{\textbf{C}}^{B+1})\\
  &= H(C_{11,b},C_{21,b}|\bar{\textbf{C}}^{B+1}) - H(C_{11,b},C_{21,b}|\bar{\textbf{C}}^{B+1},M_{11,b},M_{21,b})\\
  &\overset{(a)}{\leq} H(C_{11,b},C_{21,b}) - H(K_{11,b},K_{21,b}|\bar{\textbf{C}}^{B+1})\\
  &\overset{(b)}{\leq} H(C_{11,b},C_{21,b}) - H(K_{11,b},K_{21,b}|K_{0,b})\\
  &\overset{(c)}{=}D(P_{K_1|K_0} || P_C) \leq \mathbb{S}(K_{1,b}|\boldsymbol{Z},K_{0,b})\leq \epsilon
\end{align*}
where $(a)$ follows from the fact that $C_{i1,b}=M_{i1,b}\oplus K_{i1,b}$,and $M_{i1,b}$ is independent of $K_{i1,b},i=1,2$, $(b)$ follows by the Markov chain $K_{1,b}-K_{0,b}-\bar{\textbf{C}}^{B+1}$ and $K_{1,b}=(K_{11,b},K_{21,b})$, $(c)$ follows by the fact that $(C_{11,b},C_{21,b})$ is uniformly distributed on $[1:2^{nR_{11}}]\times[1:2^{nR_{21}}]$ and $P_C$ is a uniform distribution on $[1:2^{nR_{11}}]\times[1:2^{nR_{21}}]$.

Then, we bound $I_{22}$ as follows.
\begin{align*}
  I_{22}&=I(M_{11,b},M_{21,b};\boldsymbol{Z}_{b},\boldsymbol{U}_{b}|\boldsymbol{S}_{b},M_{10,b},M_{20,b},\bar{\textbf{C}}^{B+1},C_{11,b},C_{21,b})\\
  &=I(K_{11,b},K_{21,b};\boldsymbol{Z}_{b},\boldsymbol{U}_{b}|\boldsymbol{S}_{b},M_{10,b},M_{20,b},\bar{\textbf{C}}^{B+1},C_{11,b},C_{21,b})\\
  &\leq I(K_{11,b},K_{21,b};\boldsymbol{Z}_{b},\boldsymbol{U}_{b},K_{0,b}|\boldsymbol{S}_{b},M_{10,b},M_{20,b},\bar{\textbf{C}}^{B+1},C_{11,b},C_{21,b})\\
  &\overset{(a)}{=}I(K_{11,b},K_{21,b};K_{0,b}|\boldsymbol{S}_{b},M_{10,b},M_{20,b},\bar{\textbf{C}}^{B+1},C_{11,b},C_{21,b})\\
  &=I(K_{1,b};K_{0,b}|\bar{\textbf{C}}^{B+1})\leq \epsilon,
\end{align*}
where $(a)$ follows by the Markov chain $K_{1,b}-K_{0,b}-(\boldsymbol{Z}_b,\boldsymbol{U}_b)$ given $\boldsymbol{S}_{b},M_{10,b},M_{20,b},\bar{\textbf{C}}^{B+1},C_{11,b}$.

The last term $I_{23}$ is bounded by 
\begin{align*}
  I_{23} &= I(\boldsymbol{Z}^{b-1},\boldsymbol{U}^{b-1};\boldsymbol{Z}_{b},\boldsymbol{U}_{b}|\boldsymbol{S}_{b},M_{10,b},M_{20,b},\bar{\textbf{C}}^{B+1},M_{11,b},M_{21,b})\\
  &\leq I(\boldsymbol{Z}^{b-1},\boldsymbol{U}^{b-1};K_{0,b},K_{1,b},\boldsymbol{Z}_{b},\boldsymbol{U}_{b}|\boldsymbol{S}_{b},M_{10,b},M_{20,b},\bar{\textbf{C}}^{B+1},M_{11,b},M_{21,b})\\
  &\overset{(a)}{=}I(\boldsymbol{Z}^{b-1},\boldsymbol{U}^{b-1};K_{0,b},K_{1,b}|\boldsymbol{S}_{b},M_{10,b},M_{20,b},\bar{\textbf{C}}^{B+1},M_{11,b},M_{21,b})\\
  &=I(\boldsymbol{Z}^{b-1},\boldsymbol{U}^{b-1};K_{0,b},K_{1,b}|\bar{\textbf{C}}^{B+1}) \overset{(b)}{\leq} 2(B+1)\epsilon,
\end{align*}
where $(a)$ follows by the independence between $(\boldsymbol{Z}^{b-1},\boldsymbol{U}^{b-1})$ and $(\boldsymbol{Z}_{b},\boldsymbol{U}_{b})$ given $\boldsymbol{S}_{b},M_{10,b},M_{20,b},\bar{\textbf{C}}^{B+1},M_{11,b},K_{0,b},K_{1,b}$, $(b)$ follows by formula \eqref{neq: I_12}. Now, the information leakage is bounded by 
\begin{align*}
  I(M_1^{B+1},M_2^{B+1};\boldsymbol{Z}^{B+1},\bar{\textbf{C}}^{B+1})\leq(B+1)*(I_1+I_2)\leq (B+1)(4B+7)\epsilon.
\end{align*}
Using Fourier–Motzkin Elimination yields the achievable region $\mathcal{R}_{11}$.
 
By symmetry, it is sufficient to discuss regions $\mathcal{R}_{12}$ because region $\mathcal{R}_{13}$ follows similarly. The basic idea of region $\mathcal{R}_{12}$ is the same as the region $\mathcal{R}_2$ in \cite{molavianjazi2009secure}, where only one confidential message is protected by the wiretap code. The coding scheme is almost the same as that for $\mathcal{R}_{11}$ except that we set $R_{10}=0$. In this case, Lemma \ref{lem: wiretap codes} is not sufficient to protect the message of Sender 2 since the eavesdropper may have the information of the sequence transmitted by Sender 1.

\begin{lemma}\label{lem: wiretap codes 2}
  Let $(\textbf{C},\textbf{C}_1,\textbf{C}_2)$ be random codebooks generated as in Lemma \ref{lem: wiretap codes}, and $(\boldsymbol{U},\boldsymbol{U}_1,\boldsymbol{U}_2,\boldsymbol{S},\boldsymbol{Y})$ be random variables with joint distribution as in Lemma \ref{lem: wiretap codes}. If 
  \begin{align*}
    R_2 > I(U_2;Y|U,U_1,S),
  \end{align*}
  it follows that
  \begin{align*}
    \mathbb{E}_{\textbf{C}\textbf{C}_1\textbf{C}_2}[D(P^n_{YS|UU_1\textbf{C}\textbf{C}_1\textbf{C}_2}||P^n_{YS|UU_1})] \to 0.
  \end{align*}
\end{lemma}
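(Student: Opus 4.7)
The plan is to recognize this as a conditional channel resolvability statement and reduce it to the same technique that proves Lemma \ref{lem: wiretap codes}. Because $U_2$ is drawn from $P_{U_2|U}$ and is therefore independent of $(U_1,S)$ given $U$, one has $I(U_2;Y|U,U_1,S)=I(U_2;YS|U,U_1)$, which is exactly the standard resolvability threshold for approximating $P^n_{YS|UU_1}(\cdot,\cdot|\boldsymbol{u},\boldsymbol{u}_1)$ by the empirical average over a random codebook of $2^{nR_2}$ codewords drawn i.i.d.\ from $P^n_{U_2|U}(\cdot|\boldsymbol{u})$. Thus the hypothesis $R_2>I(U_2;Y|U,U_1,S)$ furnishes the necessary rate surplus for the ``effective channel'' from $U_2$ to $(Y,S)$ parameterized by $(U,U_1)$.

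I would proceed by conditioning on fixed realizations $\textbf{C}=\mathcal{C}$ and $\textbf{C}_1=\mathcal{C}_1$ and on a jointly typical pair $(\boldsymbol{u},\boldsymbol{u}_1)\in\mathcal{C}\times\mathcal{C}_1$, and then analyzing the inner conditional expectation
\[
\mathbb{E}_{\textbf{C}_2}\!\left[D\bigl(P^n_{YS|UU_1\mathcal{C}\mathcal{C}_1\textbf{C}_2}(\cdot,\cdot|\boldsymbol{u},\boldsymbol{u}_1)\,\big\|\,P^n_{YS|UU_1}(\cdot,\cdot|\boldsymbol{u},\boldsymbol{u}_1)\bigr)\right].
\]
By construction the codewords of $\textbf{C}_2$ are i.i.d.\ according to $P^n_{U_2|U}(\cdot|\boldsymbol{u})$ and independent of $\boldsymbol{u}_1$, so this inner quantity is precisely a single-user resolvability KL divergence. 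It can be bounded by the second-moment / truncated-likelihood-ratio argument from the proof of Lemma \ref{lem: wiretap codes} given in Appendix \ref{app: proof of wiretap codes}, with $(U,U_1)$ now playing the combined role that $U$ played there: one partitions the $(\boldsymbol{y},\boldsymbol{s})$ outcomes into the set typical for $P_{UU_1U_2YS}$ given $(\boldsymbol{u},\boldsymbol{u}_1)$ and its complement, and uses the rate gap $R_2-I(U_2;Y|U,U_1,S)>0$ to drive the typical contribution to zero exponentially, while the probability of the atypical set is itself exponentially small. Taking the outer expectation over $(\textbf{C},\textbf{C}_1)$ and noting that atypical pairs $(\boldsymbol{u},\boldsymbol{u}_1)$ occur with exponentially small probability completes the argument.

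The main obstacle, as in Lemma \ref{lem: wiretap codes}, is that the likelihood ratio inside the KL divergence is unbounded, so a naive Chernoff or Hoeffding bound cannot be applied directly; the standard remedy is to truncate the ratio at a suitable threshold, concentrate the truncated sum, and separately account for the truncated mass through a typicality-based tail bound. Beyond this, because $U_1$ only enters the conditioning and is not a randomization source in this lemma, the argument is mechanically the same as that of Lemma \ref{lem: wiretap codes}, and I expect the only additional care needed is in bookkeeping the conditional typical set relative to the enlarged conditioning variable $(U,U_1)$.
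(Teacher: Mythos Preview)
Your proposal is correct and matches the paper's approach: the paper itself omits the proof, stating only that it ``is similar to that of Lemma \ref{lem: wiretap codes},'' and your reduction---absorbing $U_1$ into the conditioning so that $(U,U_1)$ jointly plays the role $U$ played in Lemma \ref{lem: wiretap codes}, and then running the same resolvability/typicality argument on the single remaining random codebook $\textbf{C}_2$---is exactly that similarity made explicit. Your observation that $I(U_2;Y|U,U_1,S)=I(U_2;Y,S|U,U_1)$ via the conditional independence of $U_2$ and $(U_1,S)$ given $U$ is the right way to see that the rate condition matches the standard resolvability threshold.
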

The proof is similar to that of Lemma \ref{lem: wiretap codes} and is omitted here.
Applying Fourier–Motzkin Elimination and replacing $(Z,U)$ with $(Z,U,U_1)$ results in the desired region $\mathcal{R}_{12}$. Note that in $\mathcal{R}_{12}$, the gain of using the secret key changes to $I(V;Y)-I(V;Z,U,U_1)$. The additional $U_1$ in the mutual information term is because we do not use wiretap coding to randomize the input sequence $\boldsymbol{U}_1$, and hence the eavesdropper may have the information of the sequence. 

\section{coding scheme for $\mathcal{R}_{2}$}\label{codin scheme for R21}
In this section, we give the coding scheme for region $\mathcal{R}_{2}$. The coding scheme considered here is a block Markov coding scheme with forward decoding. We omit the error and security analysis since it is a combination of standard joint typical argument\cite{el2011network} and the secret key construction argument in \cite{sasaki2019wiretap}. Given $P_S$ and $P_{YZ|X_1X_2S}$, we consider real numbers $\widetilde{R}_1,\widetilde{R}_2,R_{10},R_{20},R_{11},R_{21},R_{12},R_{22}$ such that
\begin{equation}\label{ineq: constrints for coding scheme 2}
  \begin{split}
    &\widetilde{R}_1 \leq I(U_1;Y|U_2),\\
  &\widetilde{R}_2 \leq I(U_2;Y|U_1),\\
  &\widetilde{R}_1 + \widetilde{R}_2 \leq I(U_1,U_2;Y),\\
  &\widetilde{R}_1 - R_{10} > I(U_1;Z|S,U_2),\\
  &\widetilde{R}_2 - R_{20} > I(U_2;Z|S,U_1),\\
  &\widetilde{R}_1 + \widetilde{R}_2- R_{10}- R_{20}> I(U_1,U_2;Z|S),\\
  &R_{11} + R_{21} \geq H(S|Y,U_1,U_2),\\
  &R_{11} + R_{21} + R_{21} + R_{22} \leq H(S|Z).
  \end{split}
\end{equation}
under joint distribution $P_SP_{U_1}P_{U_2}P_{X_1|U_1S}P_{X_2|U_2S}P_{YZ|X_1X_2S}$.

\emph{Key Message Codebook Generation: } The secret key is generated in the same way as in the construction in \cite{sasaki2019wiretap} by using Slepian-Wolf Theorem\cite{slepian1973noiseless} and Csisz\'ar-K\"orner's key construction lemma \cite[Lemma 17.5]{csiszar2011information}. The state sequence $\boldsymbol{s}_{b-1}$ from block $b-1$ generates a pair of key $(k_{2,b},k_{1,b})\in[1:2^{n(R_{12}+R_{22})}]\times[1:2^{n(R_{11}+R_{21})}]$ by a pair of mapping $\sigma\times\kappa$ with $R_{11},R_{21}, R_{12},R_{22}$ defined in \eqref{ineq: constrints for coding scheme 2} such that the security index
\begin{align*}
  &\mathbb{S}(\sigma(\boldsymbol{S}_{b-1})\kappa(\boldsymbol{S}_{b-1})|\boldsymbol{Z}_{b-1})\leq \epsilon
\end{align*} 
for some arbitrarily small $\epsilon>0$.

\emph{Message Codebook Generation:} For each block $b$, let $\mathcal{C}_{1b}=\{\boldsymbol{u}_1(l)\}_{l=1}^{2^{n\widetilde{R}_1}}$ be a codebook consisting of $2^{n\widetilde{R}_1}$ codewords, each i.i.d. generated according to distribution $P_{U_1}$, where $\widetilde{R}_1>0$ is defined in \eqref{ineq: constrints for coding scheme 2}. 
For each codebook $\mathcal{C}_{1b}$, partition it into $2^{nR_{10}}$ subcodebooks $\mathcal{C}_{1b}(m_{10})$, where $m_{10}\in[1:2^{nR_{10}}]$. For each subcodebooks $\mathcal{C}_{1b}(m_{10})$, partition it into $2^{nR_{11}}$ two-layer subcodebooks $\mathcal{C}_{1b}(m_{10},m_{11})$, where $m_1\in[1:2^{nR_{11}}]$. For each subcodebooks $\mathcal{C}_{1b}(m_{10},m_{11})$, partition it into $2^{nR_{12}}$ three-layer subcodebooks $\mathcal{C}_{1b}(m_{10},m_{11},m_{12})=\{\boldsymbol{u}_1(m_{10},m_{11},m_{12},l_1)\}_{l_1=1}^{2^{nR_1'}}$, where $m_{12}\in[1:2^{nR_{12}}],R_1':= \widetilde{R}_1 - R_{10}-R_{11}-R_{12}$. 

Likewise, for codebook $\mathcal{C}_{2b}=\{\boldsymbol{u}_2(l)\}_{l=1}^{2^{n\widetilde{R}_2}}$ with codewords i.i.d. generated according to $P_{U_2}$, partition it into three-layer subcodebook $\mathcal{C}_{2b}(m_{20},m_{21},m_{22})$, where $m_{20}\in[1:2^{nR_{20}}],m_{21}\in[1:2^{nR_{21}}],m_{22}\in[1:2^{nR_{22}}]$.

The above codebooks are all generated randomly and independently. Denote the set of random codebooks in each block $b$ by $\bar{\textbf{C}}_b$.

\emph{Encoding: }  In the first block, setting $m_{10,1}=m_{20,1}=m_{11,1}=m_{21,1}=m_{12,1}=m_{22,1}=1$, Encoder 1 picks an index $l_1\in[1:2^{nR_1'}]$ uniformly at random and Encoder 2  picks an index $l_2\in[1:2^{nR_2'}]$ uniformly at random. The codeword $\boldsymbol{x}_1$ is generated according to $\prod_{i=1}^n P_{X_1|U_1S}(x_{1i}|u_{1i}(1,1,1,l_1),s_i)$ and  $\boldsymbol{x}_2$ is generated according to $\prod_{i=1}^n P_{X_2|U_2S}(x_{2i}|u_{2i}(1,1,1,l_2),s_i)$.

For block $2\leq b\leq B$, upon observing the state sequence $\boldsymbol{s}_{b-1}$ in the last block, the encoders generate the secret key $k_{1,b}=(k_{11,b},k_{21,b})$ and Slepian-Wolf index $k_{2,b}=(k_{12,b},k_{22,b})$ by mappings $\kappa$ and $\sigma$. To transmit message $m_{1,b}$, Encoder $1$ splits it into two independent parts $(m_{10,b},m_{11,b})$ and computes $c_{11,b} = m_{11,b} \oplus k_{11,b} \pmod{2^{nR_{11}}}$. Encoder 1 picks an index $l_1\in[1:2^{nR_1'}]$ uniformly at random and generates the codeword $\boldsymbol{x}_1$ by
\begin{align*}
  P^n_{X_1|U_1S}(\boldsymbol{x}_1|\boldsymbol{u}_1(m_{10,b},c_{11,b},k_{12,b},l_{1,b}),\boldsymbol{s}_b)=\prod_{i=1}^n P_{X_1|U_1S}(x_{1i}|u_{1i}(m_{10,b},c_{11,b},k_{12,b},l_{1,b}),s_{i,b}).
\end{align*}
Similarly, the codeword $\boldsymbol{x}_2$ for Sender 2 is generated by 
\begin{align*}
  P^n_{X_2|U_2S}(\boldsymbol{x}_2|\boldsymbol{u}_2(m_{20,b},c_{21,b},k_{22,b},l_{2,b}),\boldsymbol{s}_b)=\prod_{i=1}^n P_{X_2|U_2S}(x_{2i}|u_{2i}(m_{20,b},c_{21,b},k_{22,b},l_{2,b}),s_{i,b}).
\end{align*}

\emph{Decoding: } 
In Block $b, 1\leq b \leq B$, the decoder looks for a unique tuple $(\hat{m}_{10,b},\hat{m}_{20,b},\hat{c}_{11,b},\hat{c}_{21,b},\hat{k}_{12,b},\hat{k}_{22,b},\hat{l}_{1,b},\hat{l}_{2,b})$ such that 
\begin{align*}
  (\boldsymbol{u}_1(\hat{m}_{10,b},\hat{c}_{11,b},\hat{k}_{12,b},\hat{l}_{1,b}),\boldsymbol{u}_2(\hat{m}_{20,b},\hat{c}_{21,b},\hat{k}_{22,b},\hat{l}_{2,b}),\boldsymbol{y}_b)\in T^n_{P_{U_1U_2Y},\delta}
\end{align*}
for some $\delta>0$. In Block $b, 2\leq b \leq B$, the decoder then estimates $\hat{\boldsymbol{s}}_{b-1}$ according to 
\begin{align*}
  (\hat{k}_{12,b},\hat{k}_{22,b},\boldsymbol{u}_1(\hat{m}_{10,b-1},\hat{c}_{11,b-1},\hat{k}_{12,b-1},\hat{l}_{1,b-1}),\boldsymbol{u}_2(\hat{m}_{20,b-1},\hat{c}_{21,b-1},\hat{k}_{22,b-1},\hat{l}_{2,b-1}),\boldsymbol{y}_{b-1})
\end{align*}
and finds the secret key $(\hat{k}_{11,b},\hat{k}_{21,b})$ by $\kappa(\hat{\boldsymbol{s}}_{b-1})$.  Now the decoder computes $\hat{m}_{11,b}=\hat{c}_{11,b} \ominus \hat{k}_{11,b} \pmod{2^{nR_{11}}}$ and $\hat{m}_{21,b}=\hat{c}_{21,b} \ominus \hat{k}_{21,b} \pmod{2^{nR_{21}}}$. 

By standard joint typical argument, the decoding error is arbitrarily small if 
\begin{align*}
  &\widetilde{R}_1 \leq I(U_1;Y|U_2),\\
  &\widetilde{R}_2 \leq I(U_2;Y|U_1),\\
  &\widetilde{R}_1 + \widetilde{R}_2 \leq I(U_1,U_2;Y).
\end{align*}
Now applying Lemma \ref{lem: wiretap codes} with $\textbf{C}=\emptyset$ and the secrecy analysis in \cite{sasaki2019wiretap} completes the proof. 
Similar to the proof of region $\mathcal{R}_{1}$, regions $\mathcal{R}_{22}$ and $\mathcal{R}_{23}$ are derived by setting $R_{10}=0$ and $R_{20}=0$ in the above coding scheme, respectively.
\section{Examples and applications}\label{sec: examples}
In this section, we investigate some special cases of the channel model considered in this paper and apply the previous capacity results to these channel models.
The first example shows that in some channel models, achievable rate points falling in region $\mathcal{R}_1$ may not fall in region $\mathcal{R}_2$ and vice versa, and the second numerical example with degraded message sets shows region $\mathcal{R}_1$ can be strictly larger than $\mathcal{R}_2$.

Then, some capacity-achieving cases are discussed. We first consider the case that one of the senders is removed and the channel reduces to a point-to-point wiretap channel with causal CSI, which is studied in \cite{sasaki2019wiretap}. In this case, Coding scheme 2 reduces to the coding scheme used in \cite{sasaki2019wiretap}, and was proved to be optimal when the wiretap channel is a degraded version of the main channel and the CSI is also revealed to the legitimate receiver. In Section \ref{sec: one sender case}, we prove that our Coding scheme  1 is also optimal in this case. In Section \ref{sec: degraded message sets}, SD-MAWCs with degraded message sets and three types of channel state information are considered: causal CSI at one sender, causal CSI at one sender and strictly causal CSI at another sender, channel independent of the states. We find that both coding schemes are possible to be optimal when the CSI is revealed to the decoder. In Example \ref{example: state reproduced}, a numerical example shows that both coding schemes can be optimal if the decoder can reproduce the CSI itself.

\begin{example}\label{example: R1 > R_2}
  Consider a SD-MAWC with causal CSI at encoders where $\mathcal{X}_1=\mathcal{X}_2=\{0,1\},\mathcal{Y}=\mathcal{S}=\{0,1\}$, the channel model is described by
  \begin{equation*}
    \begin{split}
      &Y=\left\{
        \begin{aligned}
        X_1, \quad\quad\quad\quad \text{if}\;S=0,\\
        X_2, \quad\quad\quad\quad \text{if}\;S=1.\\
      \end{aligned}
        \right.\\
    &Z=X_2.
    \end{split}
  \end{equation*}
  In this channel model, rate pair $(\min\{1-p,1-h(p)\},0)$ is achieved by Coding scheme  1 but cannot be achieved by Coding scheme 2. If the main channel is described by 
  \begin{align*}
    Y=X_1\oplus X_2 \oplus S,
  \end{align*}
  then rate pair $(0,h(p))$ is achieved by Coding scheme 2 but cannot be achieved by Coding scheme  1.
\end{example}
\begin{proof}
  Setting $p=Pr\{S=1\}$. We first consider region $\mathcal{R}_1$. Set variables as follows,
  \begin{align*}
    V=S, U_2 = \emptyset, X_1=U_1, X_2=U
  \end{align*}
with $(U,U_1)$ and $S$ being independent and $U \backsim Bernoulli(\frac{1}{2})$ and $U_1 \backsim Bernoulli(\frac{1}{2})$. It follows that $I(U_1;Z|S,U)=I(U_1;Z|S,U,U_2)=I(U_1,U_2;Z|S,U)=0$. We further have
\begin{align*}
  &I(U_1;Y|V,U,U_2)=I(U_1;Y|S,U)= H(Y|S,U) = Pr\{S=1\}H(Y|U,S=1) = 1-p,\\
  &I(V,U,U_1,U_2;Y)-I(V;S) = H(Y)- H(S) = 1-h(p),
\end{align*}
where $h(p)=-p\log{p}-(1-p)\log(1-p)$. Note that the mutual information term $I(V;Y)=I(S;Y)\geq 0$. Hence, a rate pair $(R_1,0)$ satisfying $R_1 = \min\{1-p,1-h(p)\}$ is achievable. Now consider region $\mathcal{R}_2$. Restricting $R_2=0$ and assigning all secret key rate to Sender 1, it follows that
\begin{align*}
  I(U_1;Z|U_2,S)= I(U_1;X_2|U_2,S)=0, H(S|Z)=H(S|Z,U_1)=H(S),
\end{align*}
and the maximum rate that can be achieved by Sender 1 will not be greater than $I(U_1;Y|U_2)$.
Hence, it is sufficient to prove $I(U_1;Y|U_2)<\min\{1-p,1-h(p)\}$, where $I(U_1;Y|U_2)$ is in fact the constraint on $R_1$ in the achievable rate region of MAC with causal CSI using Shannon strategy. To see this, note that $I(U_1;Y|U_2)$ is a convex function of conditional distribution $P_{Y|U_1U_2}$, where $P_{Y|U_1U_2}(y|u_1,u_2)=\sum_{x_1,x_2,s}P_{Y|X_1X_2S}(y|x_1,x_2,s)P_{X_1|U_1S}(x_1|u_1,s)$ $P_{X_2|U_2S}(x_2|u_2,s)P_S(s)$ is a linear function of $P_{X_1|U_1S}$ and $P_{X_2|U_2S}$. Hence, it is also a convex function of $P_{X_1|U_1S}$ and $P_{X_2|U_2S}$, where the maximum is achieved in extreme points. Thus there is no loss of generality to replace the distribution $P_{X_1|U_1S}$ and $P_{X_2|U_2S}$ with deterministic functions $x_1(u_1,s)$ and $x_2(u_2,s)$, respectively. It is shown by Example 4 in \cite{lapidoth2012multiple} that $I(U_1;Y|U_2)<\min\{1-p,1-h(p)\}$ holds when $p$ is sufficiently large.

For the second channel model, consider the case that $R_1=0$. For Coding scheme  1, it follows that 
\begin{align*}
  I(U_2;Z|U,S) = I(U_2;X_2|U,S) \overset{(a)}{=} I(U_2;X_2|U,S,U_1) 
\end{align*}
where $(a)$ follows by the fact that $(U_2,X_2)$ is independent of $U_1$ given $(U,S)$. We further have 
\begin{align}
  I(U_2;Y|V,U,U_1) &= H(U_2|V,U,U_1) - H(U_2|V,U,U_1,Y)\notag\\
  &\overset{(a)}{\leq} H(U_2|S,U,U_1) - H(U_2|V,U,U_1,Y,S)\notag\\
  &\overset{(b)}{=}H(U_2|S,U,U_1) - H(U_2|U,U_1,Y,S) \notag\\
  \label{neq: example ine}&= I(U_2;Y|S,U,U_1)\overset{(c)}{=}I(U_2;X_1\oplus X_2|S,U,U_1),
\end{align}
where $(a)$ follows by the fact that  $U_2$ is independent to $U_1$ given $U$ and conditions decrease entropy, $(b)$ follows from the fact that $V$ is independent to anything else given $S$, $(c)$ follows by substituting the channel model into the mutual information. Applying the chain rule of mutual information yields
\begin{align*}
  I(U_2;X_1+X_2,X_2|S,U,U_1) &= I(U_2;X_2|S,U,U_1) + I(U_2;X_1\oplus X_2|S,U,U_1,X_2)\\
  &=I(U_2;X_2|S,U,U_1) + I(U_2;X_1|S,U,U_1,X_2)\\
  &\overset{(a)}{=}I(U_2;X_2|S,U,U_1)\\
  &=I(U_2;X_1\oplus X_2|S,U,U_1) + I(U_2;X_2|S,U,U_1,X_1\oplus X_2)
\end{align*}
where $(a)$ holds since $U_2$ is independent to $X_1$ given $(S,U,U_1,X_2)$. By the nonnegativity of the mutual information, we have $I(U_2;Z|U,S) =I(U_2;X_2|S,U,U_1)\geq I(U_2;X_1+X_2|S,U,U_1)=I(U_2;Y|S,U,U_1)$. Hence, region $\mathcal{R}_1$ reduces to $\mathcal{R}_{13}$. Now assign all the secret key rate to Sender 2, the achievable rate of Sender 2 satisfies
\begin{align*}
  R_2 &\leq I(V;Y) - I(V;Z,U,U_2) \\
  &=I(V;Y)-I(V;X_2,U,U_1)\\
  &\overset{(a)}{\leq}I(V;Y)\overset{(b)}{\leq} I(S;Y) = H(S) - H(S|Y)\overset{(c)}{\leq} h(p),
\end{align*}
where $(a)$ follows by the nonnegativity of mutual information, $(b)$ follows by data process inequality, the equality in $(c)$ holds when $V=S$ and $S$ can be determined by $Y$, which means $X_1\oplus X_2$ take some fixed numbers. Now consider the case that $V=S$ and $X_1\oplus X_2$ are fixed. Without loss generality, assume $X_1\oplus X_2=0$. The second constraint on Sender 2 in Coding scheme 1 is 
\begin{align*}
  I(U_2;Y|S,U,U_1) = H(Y|S,U,U_1) - H(Y|S,U,U_1,U_2) \overset{(a)}{=}0,
\end{align*}
where $(a)$ holds since $Y$ is determined by $S$ when $X_1\oplus X_2=0$ is fixed. Hence, the rate of Sender 2 is $R_2=0$ in this case. We conclude that $R_2 < h(p)$ using Coding scheme  1 for this model.

For Coding scheme 2, setting $X_1=U_1,X_2=U_2, U_1 \backsim Bernoulli(\frac{1}{2})$ and $U_2 \backsim Bernoulli(\frac{1}{2})$, the first constraint on Sender 2 is 
\begin{align*}
  I(U_2;Y|U_1)=I(X_2;Y|X_1)=H(Y|X_1) - H(Y|X_1,X_2)= 1 - h(p),
\end{align*}
The secret key rate of Sender 2 in this case is 
\begin{align*}
  H(S|Z,U_2) - H(S|Y,U_1,U_2) &= H(S) - H(S|Y,X_1,X_2) = h(p).
\end{align*}
When the distribution of channel states satisfies $1-h(p) \geq h(p)$, we have $(0,h(p))$ can be achieved by Coding scheme 2 but cannot be achieved by Coding scheme  1.
\end{proof}
In the rest of this section, we use superscript $`CSI-ED'$ to represent the causal CSI at encoder and decoder sides, use `$OCSI$' for one side causal CSI  and `$NCSI$' for noncausal CSI, and the subscript $`D'$ represents the degraded message sets. 
\subsection{One Sender}\label{sec: one sender case}
In this subsection, we consider the case that there is only one sender in the communication system. In this case, the SD-MAC reduces to an SD-DMC, and we are going to investigate the performance of the two coding schemes discussed in Sections \ref{sec: coding scheme for R11} and \ref{codin scheme for R21}. Without loss of generality, assume Sender 2 is removed and set $U_2=\emptyset$.
Define
\begin{align*}
  &\mathcal{R}(P_{VUU_1X_1|S}) = \min\{I(U_1;Y|V,U)-I(U_1;Z|S,U)+H(V|Z,U)-H(V|Y),I(U_1;Y|V,U),\\
  &\quad\quad\quad\quad\quad\quad\quad\quad I(V,U,U_1;Y)-I(V;S)-I(U_1;Z|S,U)+H(V|Z,U)-H(V|Y)\},
\end{align*} 
with joint distribution such that $P_SP_{VUU_1X_1|S}=P_SP_{V|S}P_UP_{U_1|U}P_{X_1|UU_1S}$.
By removing Sender 2, region $\mathcal{R}_1$ reduces to 
\begin{align*}
  \mathcal{R}_1 = \max_{P_{VUU_1X_1|S}}\mathcal{R}(P_{VUU_1X_1|S}).
\end{align*}
and region $\mathcal{R}_2$ reduces to $R_{CSI-1}(P_U,P_{X|US})$ in \cite{sasaki2019wiretap}. Region $\mathcal{R}_1$ and $\mathcal{R}_2$ in this case do not include each other in general. However, if we reveal the side information to the legitimate receiver and further assume the wiretap channel is a degraded version of the main channel, both coding schemes can achieve the secrecy capacity. Recall that the secrecy capacity of wiretap channels with causal/noncausal state information at both encoder and decoder, with a degraded eavesdropper is given as follows.
\begin{corollary}[Corollary 1 in \cite{sasaki2019wiretap}]\label{coro: single user capacity}
  For a degraded wiretap channel with causal/noncausal state information at both the encoder and decoder, it holds that
  \begin{align*}
    C^{CSI-ED}=C^{NCSI-ED}=\max_{P_{X|S}}\min\{I(X;Y|S)-I(X;Z|S)+H(S|Z),I(X;Y|S)\}
  \end{align*}
\end{corollary}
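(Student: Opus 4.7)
My plan is to establish the corollary in two parts: achievability by reducing the multi-user inner bounds proved earlier in the paper, and converse via Fano's inequality combined with the strong secrecy constraint.

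For achievability, I would invoke the single-sender reduction of Coding Scheme 1 from Section \ref{sec: coding scheme for R11}, treating the augmented observation $(Y,S)$ as the legitimate channel output since the decoder is now assumed to observe $S$. Setting $U=U_2=\emptyset$, $U_1=X$, and crucially $V=S$ so that the lossy description of the state is in fact lossless, the region $\mathcal{R}_{11}$ reduces to rates satisfying
\begin{align*}
R &\leq \min\{I(X;Y|S)-I(X;Z|S)+R_{11},\ I(X;Y|S)\},\\
R_{11} &\leq I(S;Y,S)-I(S;Z) = H(S|Z),
\end{align*}
which gives exactly the inner bound claimed. The Wyner-Ziv common-message overhead that normally constrains $R_0 \geq I(V;S)-I(V;Y)$ disappears because $V=S$ is already at the decoder and need not be transmitted. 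An analogous reduction of Coding Scheme 2 (Section \ref{codin scheme for R21}), using the Slepian-Wolf key derived directly from $\boldsymbol{s}_{b-1}$, yields the same single-letter expression; thus both schemes are optimal in this case.

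For the converse, I would start from Fano's inequality and strong secrecy: on the one hand,
\begin{equation*}
nR \leq I(M;\boldsymbol{Y},\boldsymbol{S}) + n\epsilon_n = I(M;\boldsymbol{Y}|\boldsymbol{S}) + n\epsilon_n,
\end{equation*}
using independence of $M$ and $\boldsymbol{S}$, and this single-letterizes to $R \leq I(X;Y|S)$ by the memoryless property of the channel. On the other hand, subtracting the secrecy bound $I(M;\boldsymbol{Z})\leq \epsilon'_n$ yields
\begin{equation*}
nR \leq I(M;\boldsymbol{Y}|\boldsymbol{S}) - I(M;\boldsymbol{Z}) + n(\epsilon_n+\epsilon'_n).
\end{equation*}
I would then apply the identity
\begin{equation*}
I(M;\boldsymbol{Y}|\boldsymbol{S}) - I(M;\boldsymbol{Z}) = \bigl[I(M;\boldsymbol{Y}|\boldsymbol{S}) - I(M;\boldsymbol{Z}|\boldsymbol{S})\bigr] + I(M;\boldsymbol{S}|\boldsymbol{Z}),
\end{equation*}
bound $I(M;\boldsymbol{S}|\boldsymbol{Z}) \leq H(\boldsymbol{S}|\boldsymbol{Z}) = \sum_i H(S_i|Z_i)$ using the memorylessness of $P_{Z|XS}P_S$, and single-letterize the bracketed wiretap difference via the Csiszár sum identity together with the degradedness $X-Y-Z$ conditional on $S$, which collapses any auxiliary of the form $(M,S^{i-1},\boldsymbol{Y}^{i-1},\boldsymbol{Z}_{i+1}^n)$ to $X_i$ without loss. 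A standard time-sharing auxiliary then produces the single-letter bound $R \leq I(X;Y|S)-I(X;Z|S)+H(S|Z)$ for some $P_{X|S}$. For $C^{CSI-ED}=C^{NCSI-ED}$, observe that the converse above invokes only the joint distribution and never the causality of state access, so it applies verbatim to noncausal CSI, while achievability in the causal case trivially lifts to the noncausal case.

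The main obstacle I anticipate is the single-letterization of $I(M;\boldsymbol{Y}|\boldsymbol{S})-I(M;\boldsymbol{Z}|\boldsymbol{S})$: one must telescope both mutual informations with the \emph{same} auxiliary past, then use the degradedness to identify that auxiliary with the channel input $X_i$ so that the resulting bound depends only on the marginal $P_{X|S}$. Handling $H(S|Z,U_i)$-type leftover terms and verifying that they drop out (either via degradedness or absorption into the maximization over $P_{X|S}$) is the delicate step; everything else is routine.
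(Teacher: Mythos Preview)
Your plan is close to the paper's argument, but there is one concrete gap in the achievability reduction and one small slip in the converse.

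For achievability you write ``set $U_1=X$''. In the one-sender region $\mathcal{R}(P_{VUU_1X_1|S})$ (and in $\mathcal{R}_{11}$) the joint law is $P_SP_{V|S}P_UP_{U_1|U}P_{X_1|UU_1S}$, so $U_1$ must be independent of $S$; you cannot identify $U_1$ with an $X$ whose optimal law $P_{X|S}$ may depend on $S$. The paper handles exactly this point: after substituting $V=S$ and replacing $Y$ by $(Y,S)$ it invokes the functional representation lemma to produce, for any $P_{X|S}$, an auxiliary $U_1$ independent of $S$ together with a deterministic map $x(u_1,s)$, and then uses the Markov chain $U_1-(X,S)-(Y,Z)$ to rewrite $I(U_1;Y|S)=I(X;Y|S)$ and $I(U_1;Z|S)=I(X;Z|S)$. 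Without this step your reduction does not legally instantiate the inner bound. Everything else in your achievability sketch (choosing $V=S$, using the augmented output $(Y,S)$, the key-rate computation $R_{11}\le H(S|Z)$) matches the paper.

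For the converse, your decomposition $I(M;\boldsymbol{Y}|\boldsymbol{S})-I(M;\boldsymbol{Z})=[I(M;\boldsymbol{Y}|\boldsymbol{S})-I(M;\boldsymbol{Z}|\boldsymbol{S})]+I(M;\boldsymbol{S}|\boldsymbol{Z})$ is a legitimate alternative to the paper's route (which instead adds $\boldsymbol{S}$ to $M$ and bounds $I(M,\boldsymbol{S};\boldsymbol{Y},\boldsymbol{S})-I(M,\boldsymbol{S};\boldsymbol{Z})$). One slip: $H(\boldsymbol{S}|\boldsymbol{Z})=\sum_i H(S_i|Z_i)$ is not an equality---$(\boldsymbol{S},\boldsymbol{Z})$ is \emph{not} i.i.d.\ because $\boldsymbol{Z}$ depends on $\boldsymbol{X}$, which is correlated across time through $M$; only the inequality $H(\boldsymbol{S}|\boldsymbol{Z})\le\sum_i H(S_i|Z_i)$ holds, and that is what you need. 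The step you flag as ``delicate'' (collapsing the auxiliary to $X$ and absorbing the $H(S|Z,U)$ leftover) is precisely what the paper defers to \cite[(135)--(139)]{sasaki2021wiretap}; your degradedness heuristic is the right instinct, but be aware that combining the two single-letter bounds under a \emph{single} $P_{X|S}$ is where the work lies, and a naive time-sharing argument does not immediately close it.
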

For this channel model, Coding schemes  1 and 2 are both optimal.
The proof of Coding scheme 2 can be found in \cite{sasaki2019wiretap} and is omitted. Here we give the proof of Coding scheme  1.
\begin{proof}
\emph{Achievability of Coding scheme  1.} Replacing $Y$ in $\mathcal{R}(P_{VUU_1X_1|S})$ by $(Y,S)$ and setting $V=S$ yield
\begin{align*}
  &I(U_1;Y,S|V,U)-I(U_1;Z|S,U)+H(V|Z,U)-H(V|Y,S)\\
  &= I(U_1;Y,S|S,U)-I(U_1;Z|S,U)+H(S|Z,U)-H(S|Y,S)\\
  &= I(U_1;Y|S,U)-I(U_1;Z|S,U)+H(S|Z,U).
\end{align*}
and 
\begin{align*}
  &I(V,U,U_1;Y,S)-I(V;S)-I(U_1;Z|S,U)+H(V|Z,U)-H(V|Y,S)\\
  &=I(S,U,U_1;Y,S)-I(S;S)-I(U_1;Z|S,U)+H(S|Z,U)-H(S|Y,S)\\
  &=I(U,U_1;Y|S) + H(S) - H(S)-I(U_1;Z|S,U)+H(S|Z,U)\\
  &=I(U,U_1;Y|S)-I(U_1;Z|S,U)+H(S|Z,U).
\end{align*}
Note that $I(U,U_1;Y|S)\geq I(U_1;Y|S,U)$.
By functional representation lemma\cite{el2011network}, for any $(X,S)$ there exists a $U_1$ independent of $S$ and a deterministic function $f:\mathcal{U}_1\times\mathcal{S}\to\mathcal{X}$. It follows that
\begin{align*}
  I(U_1;Y|S) = I(U_1,X;Y|S)=I(X;Y|S)
\end{align*}
and $I(U_1;Z|S) = I(U_1,X;Z|S)=I(X;Z|S)$ by the fact that $X$ is a deterministic function of $(U,S)$ and the Markov chain $U-(X,S)-Y$. This completes the proof of achievability.

\emph{Converse of Coding scheme  1. } To prove the converse, set $U_i=(Y^{i-1},S^{i-1},Z_{i+1}^n),U_{1i}=(U_i,M,S_{i+1}^n)$. Let $\delta$ be a positive number. Applying Fano's inequality gives
\begin{align*}
  nR_1 &= H(M) \\
  &=I(M;\boldsymbol{Y},\boldsymbol{S}) - I(M;\boldsymbol{Z}) + \delta\\
  &= I(M,\boldsymbol{S};\boldsymbol{Y},\boldsymbol{S}) - I(\boldsymbol{S};\boldsymbol{Y},\boldsymbol{S}|M) - I(M,\boldsymbol{S};\boldsymbol{Z}) + I(\boldsymbol{S};\boldsymbol{Z}|M) + \delta\\
  &=I(M,\boldsymbol{S};\boldsymbol{Y},\boldsymbol{S})- I(M,\boldsymbol{S};\boldsymbol{Z}) + H(\boldsymbol{S}|\boldsymbol{Y},\boldsymbol{S},M) - H(\boldsymbol{S}|\boldsymbol{Z},M)+\delta\\
  &\leq \sum_{i=1}^n I(M,\boldsymbol{S};Y_i,S_i|Y^{i-1},S^{i-1}) - I(M,\boldsymbol{S};Z_i|Z_{i+1}^n)+\delta\\
&\overset{(a)}{=}\sum_{i=1}^n I(M,\boldsymbol{S};Y_i,S_i|Y^{i-1},S^{i-1},Z_{i+1}^n) - I(M,\boldsymbol{S};Z_i|Y^{i-1},S^{i-1},Z_{i+1}^n)+\delta\\
&\overset{(b)}{=}\sum_{i=1}^n I(U_{1i},S_i;Y_i,S_i;U_i) - I(U_{1i},S_i;Z_i|U_i) + \delta\\
&=\sum_{i=1}^n I(U_{1i};Y_i|U_i,S_i) - I(U_{1i};Z_i|U_i,S_i) + H(S_i|Z_i,U_i) + \delta\\
&\overset{(c)}{=}n(I(U_1;Y|U,S)-I(U_1;Z|U,S)-H(S|Z,U)+\delta),
\end{align*}
where $(a)$ follows by applying Csisz\'ar's sum identity twice, $(b)$ follows by the definition of $U_i$ and $U_{1i}$, $(c)$ follows by introducing a time sharing random variable $J$ and setting $U=(U_J,J),U_1=U_{1J},Y=Y_J,S=S_J,Z=Z_J$ with the fact that $\boldsymbol{S}$ is stationary and memoryless. Now following the argument of formulas (135)-(139) in \cite[Proof of Theorem 4]{sasaki2021wiretap}, we have 
\begin{align*}
  I(U_1;Y|U,S)-I(U_1;Z|U,S)-H(S|Z,U) \leq I(X;Y|S) - I(X;Z|S) + H(S|Z).
\end{align*}
The proof of another bound is the same as that for Coding scheme 2 and can be found in \cite[Theorem 2]{sasaki2019wiretap}.
\end{proof}
As mentioned before, Coding scheme  1 and Coding scheme 2 do not outperform each other in general. It is worth exploring in what cases these two coding schemes have the same performance. One of the main differences between the two coding schemes is the way we construct the secret key, where in Coding scheme 1 Wyner-Ziv coding is used and the secret key is constructed by a lossy description of the state sequence. It is well-known that by setting $V=S$, the Wyner-Ziv coding theorem reduces to corner points of the Slepian-Wolf coding theorem. In this case, the difference in achievable regions $\mathcal{R}_1$ and $\mathcal{R}_2$ stems from the decoding step. Region $\mathcal{R}_1$ is derived by backward decoding, where the decoder first looks for $\boldsymbol{S}$ based on the observed channel output $\boldsymbol{Y}$. Region $\mathcal{R}_2$ adopts forward decoding, and the decoder first finds the transmitted sequence $\boldsymbol{U}_1$ and then the state sequence with the help of $(\boldsymbol{Y},\boldsymbol{U}_1)$. This leads to a lower cost of Slepian-Wolf coding for the decoder using Coding scheme 2. On the other hand, when decoding the transmitted sequence, the decoder in Coding scheme  1 has the state sequence, which can help the decoder find the transmitted sequence $\boldsymbol{U}_1$. This makes the achievable rate of the main channel higher than Coding scheme 2. When state sequences are available at the decoder side, the decoders in both coding schemes do not need to perform the Slepian-Wolf decoding and hence the cost for searching $\boldsymbol{S}$ is zero, and the codewords are found with the observation of channel output $\boldsymbol{Y}$ and state sequence $\boldsymbol{S}$, which makes it possible for both coding schemes to have the same performance.

Now, let's go back to the multi-user setting. In the following subsection, we consider multiple access wiretap channels with degraded messages. 

\subsection{SD-MAWCs with Degraded Message Sets}\label{sec: degraded message sets}
In this subsection, we consider SD-MAWCs with degraded message sets, where Sender 2 only sends a common message and Sender 1 sends both common message and private message. The private message is required to be kept secret from the eavesdropper.
Given $P_S$ and $P_{YZ|X_1X_2S}$, define sets of non-negative real numbers $(R_0,R_1)$ such that
\begin{equation}\label{neq: Coding scheme  1 for mac degraded message sets}
  \mathcal{R}^{CSI-E}_{D,11} = \bigcup_{P_{VUU_1U_2X_1X_2|S}}\left\{
    \begin{aligned}
      &R_1 \leq \min\{I(U_1;Y|U,U_2,V) - I(U_1;Z|U_2,U,S)+R_{SK},I(U_1;Y|U,U_2,V)\},\\
      &R_0 + R_1 \leq \min\{I(V,U,U_1,U_2;Y)-I(V;S)- I(U_1;Z|U_2,U,S)+R_{SK},\\
      &\quad\quad\quad\quad\quad\quad\quad\quad I(V,U,U_1,U_2;Y)-I(V;S)\} ,
    \end{aligned}
  \right.
\end{equation}
with joint distribution such that $P_SP_{VUU_1U_2X_1X_2|S}=P_{VS}P_{U}P_{U_1|U}P_{U_2|U}P_{X_1|UU_1S}P_{X_2|UU_2S}$, where $R_{SK}= I(V;Y)-I(V;Z,U,U_2)$ and
\begin{equation*}
  \mathcal{R}^{CSI-E}_{D,12} = \bigcup_{P_{VUU_1U_2X_1X_2|S}}\left\{
    \begin{aligned}
      &R_1 \leq \min\{R_{SK},I(U_1;Y|U,U_2,V)\},\\
      &R_0 + R_1 \leq I(V,U,U_1,U_2;Y)-I(V;S).
    \end{aligned}
  \right.
\end{equation*}

Region $\mathcal{R}_2$ in this case becomes
\begin{equation*}
  \mathcal{R}^{CSI-E}_{D,21} = \bigcup_{P_{UU_1U_2X_1X_2|S}}\left\{
    \begin{aligned}
      &R_1 \leq \min\left\{I(U_1;Y|U,U_2) - I(U_1;Z|U_2,U,S)-H(S|U,U_1,U_2,Y)+H(S|Z,U,U_2),\right.\\
   &\quad\quad\quad\quad\quad\quad\quad\quad\quad \left. I(U_1;Y|U,U_2)-H(S|U,U_1,U_2,Y)\right\}\\
      &R_0 + R_1 \leq \min\left\{I(U,U_1,U_2;Y) - I(U_1;Z|U_2,U,S)-H(S|U,U_1,U_2,Y)+H(S|Z,U,U_2),\right.\\
      &\quad\quad\quad\quad\quad\quad\quad\quad\quad \left. I(U,U_1,U_2;Y)-H(S|U,U_1,U_2,Y)\right\}
    \end{aligned}
  \right.
\end{equation*}
with joint distribution such that $P_SP_{UU_1U_2X_1X_2|S}=P_SP_{U}P_{U_1|U}P_{U_2|U}P_{X_1|UU_1S}P_{X_2|UU_2S}$
and
\begin{equation*}
  \mathcal{R}^{CSI-E}_{D,22} = \bigcup_{P_{UU_1U_2X_1X_2|S}}\left\{
    \begin{aligned}
    &R_1 \leq \min\{H(S|Z,U,U_2)-H(S|U,U_1,U_2,Y),I(U_1;Y|U,U_2)-H(S|U,U_1,U_2,Y)\}\\
      &R_0 + R_1 \leq I(U,U_1,U_2;Y)-H(S|U,U_1,U_2,Y).
    \end{aligned}
  \right.
\end{equation*}
The wiretap code region is
\begin{equation*}
  \mathcal{R}^{CSI-E}_{D,3} = \bigcup_{P_{VUU_1U_2X_1X_2|S}}\left\{
    \begin{aligned}
      &R_1 \leq I(U_1;Y|U,U_2,V) - I(U_1;Z|U_2,U),\\
      &R_0 + R_1 \leq I(V,U,U_1,U_2;Y)-I(V;S)- I(U_1;Z|U_2,U)
    \end{aligned}
  \right.
\end{equation*}
with joint distribution such that $P_{SV}P_{UU_1U_2}P_{X_1|UU_1S}P_{X_2|UU_2S}$. Let $\mathcal{R}^{CSI-E}_D$ be the convex closure of $\mathcal{R}^{CSI-E}_{D,11} \cup \mathcal{R}^{CSI-E}_{D,12}\cup\mathcal{R}^{CSI-E}_{D,21}\cup\mathcal{R}^{CSI-E}_{D,22}\cup\mathcal{R}^{CSI-E}_{D,3}$. 
\begin{theorem}\label{the: inner bound of degraded message sets}
  For an SD-MAWC with degraded message sets and causal CSI at both encoders, it holds that
  \begin{align*}
    \mathcal{R}^{CSI-E}_{D} \subseteq C^{CSI-E}_{D}.
  \end{align*}
\end{theorem}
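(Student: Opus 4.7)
The plan is to derive $\mathcal{R}^{CSI-E}_D$ by specialising the three coding schemes behind Theorem \ref{the: mawc with casual csi} to the degraded-message-set setting, in which Sender~2 transmits only the common message $M_0$, Sender~1 transmits both $M_0$ and a confidential private message $M_1$, and the secrecy constraint applies only to $M_1$. In particular, no new random-coding argument is needed; the task is to re-allocate the rates of Theorem \ref{the: mawc with casual csi} and then to re-examine the leakage analysis.

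For $\mathcal{R}^{CSI-E}_{D,11}$ and $\mathcal{R}^{CSI-E}_{D,12}$ I would adapt Coding Scheme~1 of Section \ref{sec: coding scheme for R11}. Sender~2's codebook is repurposed so that $\boldsymbol{u}_2$ carries the common index $M_0$, and Sender~1 encodes the same $M_0$ coherently into the matching layer of its own codebook; the confidential message $M_1$ then occupies the two-layer wiretap-plus-key construction of Encoder~1, with all secret-key bits allocated to Sender~1 (so $R_{21}=0$ in the notation of Section \ref{sec: coding scheme for R11}). Invoking the worst-case argument of Remark \ref{rem: worst case assumption}, the eavesdropper is granted the codewords $(\boldsymbol{u},\boldsymbol{u}_2)$ for free, which moves $U_2$ into the conditioning of both the wiretap secrecy cost and the key term, producing exactly $I(U_1;Z|U_2,U,S)$ and $R_{SK}=I(V;Y)-I(V;Z,U,U_2)$. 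Reliability goes through as in Section \ref{sec: coding scheme for R11}. Since Sender~2 has no private rate, Fourier--Motzkin elimination removes the individual $R_2$ inequality and retains only the bounds on $R_1$ and on $R_0+R_1$ that define $\mathcal{R}^{CSI-E}_{D,11}$; the sub-region $\mathcal{R}^{CSI-E}_{D,12}$ is the subcase in which Encoder~1's codebook is too small to sustain wiretap randomisation, so only the key protects $M_1$.

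The analogous specialisation of Coding Scheme~2 of Section \ref{codin scheme for R21} yields $\mathcal{R}^{CSI-E}_{D,21}$ and $\mathcal{R}^{CSI-E}_{D,22}$, with the Slepian--Wolf key-agreement cost $H(S|U,U_1,U_2,Y)$ replacing the Wyner--Ziv one and with $H(S|Z,U,U_2)$ replacing $H(S|Z,U_1)$, reflecting the same promotion of $U_2$ to the conditioning side. Finally $\mathcal{R}^{CSI-E}_{D,3}$ follows by setting the key rates to zero, falling back to the pure wiretap-coding argument of Coding Scheme~3.

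The one genuinely non-routine step is re-running the strong-secrecy analysis of Section \ref{sec: coding scheme for R11} with the extra public side information $\boldsymbol{U}_2$. Concretely, Lemma \ref{lem: wiretap codes} has to be re-invoked with $U_2$ moved from the randomised side to the conditioning side of the target output distribution, which is precisely the symmetric form of Lemma \ref{lem: wiretap codes 2}; and Lemma \ref{lem: secret key} has to be applied with eavesdropper observation $(\boldsymbol{Z},\boldsymbol{U},\boldsymbol{U}_2)$ in place of $(\boldsymbol{Z},\boldsymbol{U})$. Once these two adjustments are in place, the block-wise decomposition $I(M_1^{B+1};\boldsymbol{Z}^{B+1}|\bar{\textbf{C}}^{B+1})\le \sum_{b}(I_1+I_2)$ and the recursive bound \eqref{neq: I_12} on $I(K_{0,b},K_{1,b};\boldsymbol{Z}^{b-1},\boldsymbol{U}^{b-1}|\bar{\textbf{C}}^{B+1})$ carry over verbatim, so the strong-secrecy guarantee for $M_1$ is preserved. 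Taking the convex hull of the five resulting sub-regions then yields $\mathcal{R}^{CSI-E}_D\subseteq C^{CSI-E}_D$.
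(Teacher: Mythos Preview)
Your proposal is correct and matches the paper's approach: the paper likewise obtains $\mathcal{R}^{CSI-E}_{D,11},\mathcal{R}^{CSI-E}_{D,12}$ by specialising Coding Scheme~1 with $R_{20}=R_{21}=0$ and letting the cloud-center codeword $\boldsymbol{u}$ be indexed by both the Wyner--Ziv bin and the common message $m_0$ (your ``repurposing'' of $\boldsymbol{u}_2$ to carry $M_0$), obtains $\mathcal{R}^{CSI-E}_{D,21},\mathcal{R}^{CSI-E}_{D,22}$ by the analogous specialisation of Coding Scheme~2, and recovers $\mathcal{R}^{CSI-E}_{D,3}$ by zeroing the key rate. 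Your explicit identification of the two adjustments to the secrecy analysis---invoking the $U_2$-conditioned form of Lemma~\ref{lem: wiretap codes 2} and applying Lemma~\ref{lem: secret key} with eavesdropper observation $(\boldsymbol{Z},\boldsymbol{U},\boldsymbol{U}_2)$---is exactly what the paper's Appendix~\ref{app: proof of degraded message sets} leaves implicit when it says the leakage analysis ``is almost the same as Section~\ref{sec: coding scheme for R11}.''
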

Region $\mathcal{R}^{CSI-E}_{D,11}$ and $\mathcal{R}^{CSI-E}_{D,12}$ follows by setting $R_{20}=R_{21}=0$ and letting the common codeword $\boldsymbol{U}$ be determined by both $k_{0,b}$ and the common message in Section \ref{sec: coding scheme for R11}. We give the coding scheme in Appendix \ref{app: proof of degraded message sets} for completeness. Region $\mathcal{R}^{CSI-E}_{D,21}$ is obtained by using the block Markov coding scheme and secret key agreement as in Section \ref{codin scheme for R21}. In each block, generate an additional common message codebook $\{\boldsymbol{u}(m_0)\}_{m_0=1}^{2^{nR_0}}$ according to some fixed distribution $P_U$ with size $2^{nR_0}$. For each $\boldsymbol{u}(m_0)$, generate a codebook for Sender 1 according to a fixed distribution $P_{U_1|U}$ with size $2^{n\widetilde{R}_1}$(defined in Section \ref{codin scheme for R21}) and a codeword $\boldsymbol{u}_2(m_0)$ for Sender 2. Partitioning the codebooks for Sender 1 exactly the same as that in Section \ref{codin scheme for R21} and setting $R_{20}=R_{21}=R_{22}=0$ for Sender 2 give the desired region. Region $\mathcal{R}^{CSI-E}_{D,22}$ follows by using the same coding scheme and setting $R_{10}=0$. The wiretap coding region $\mathcal{R}^{CSI-E}_{D,3}$ follows by setting the key rate $R_{11}=0$ in Appendix \ref{app: proof of degraded message sets} and skipping the \emph{Key Message Codebook Generation}.
\begin{remark}
  Removing channel states by setting $V=S=\emptyset$ and $U_2=X_2$, the achievable rate result reduces to the secrecy capacity of multiple access channels with one confidential message\cite[Theorem 5]{liang2008multiple} considering strong secrecy constraint. Further set $R_0=0$ and $U=X_2 = \emptyset$, the result reduces to the secrecy capacity of broadcast channels with confidential messages with common message rate being zero \cite[Corollary 2]{csiszar1978broadcast} considering strong secrecy constraint.
\end{remark}

The degraded message SD-MAC also gives us an example that Coding scheme 1 outperforms Coding scheme 2. Here we consider first the channel model in Example \ref{example: R1 > R_2} with degraded messages.

\begin{example}\label{example: one region includes another}
  For SD-MAC with causal CSI described by 
    \begin{equation*}
      \begin{split}
        &Y=\left\{
          \begin{aligned}
          X_1, \quad\quad\quad\quad \text{if}\;S=0,\\
          X_2, \quad\quad\quad\quad \text{if}\;S=1.\\
        \end{aligned}
          \right.\\
      &Z=X_1.
      \end{split}
    \end{equation*}
with $p=Pr\{S=1\}=0.8$ and degraded message sets, Coding scheme 1 outperforms Coding scheme 2 and its achievable region includes the region of Coding scheme 2.
\end{example}
\begin{proof}
  By the definition of the wiretap channel and \eqref{neq: example ine}, it follows that
  \begin{align}
    I(U_1;Y|V,U,U_2) \leq I(U_1;Y|S,U,U_2)
  \end{align}
and
\begin{align*}
  &I(U_1;Y,X_1|S,U,U_2)\\
  &=I(U_1;X_1|S,U,U_2) + I(U_1;Y|S,U,X_1,U_2)\\
  &\overset{(a)}{=}I(U_1;X_1|S,U,U_2)\\
  &=I(U_1;Y|S,U,U_2) + I(U_1;X_1|S,U,U_2,Y),
\end{align*}
where $(a)$ follows by $I(U_1;Y|S,U,X_1,U_2)=Pr\{S=0\}I(U_1;X_1|S=0,U,X_1,U_2)+Pr\{S=1\}I(U_1;X_2|S=1,U,X_1,U_2)=0$. By $I(U_1;X_1|S,U,U_2)=I(U_1;Z|S,U,U_2)$ and $I(U_1;X_1|S,U,U_2,Y)\geq 0$, we have $I(U_1;Y|S,U,U_2)\leq I(U_1;Z|S,U,U_2)$ and hence, the region for Coding scheme 1 reduces to $\mathcal{R}^{CSI-E}_{D,12}$ and similarly the region for Coding scheme 2 reduces to $\mathcal{R}^{CSI-E}_{D,22}$. 

Now we proceed to compute the extreme points of the regions. For region $\mathcal{R}^{CSI-E}_{D,12}$ we set $V=S,U=\emptyset,U_1=X_1,U_2=X_2$ with $p_1=Pr\{X_1=0\}$ and $p_2=Pr\{X_2=0\}$. It follows that 
\begin{align*}
  I(S;Y)=H(Y)-H(Y|S) = h((1-p)\cdot p_1+p\cdot p_2)-(1-p)h(p_1)-ph(p_2)
\end{align*}
and
\begin{align*}
  I(U_1;Y|U,U_2,V)&=I(X_1;Y|X_2,S)=H(Y|X_2,S)-H(Y|X_1,X_2,S)=(1-p)h(p_1).
\end{align*}
Setting $p_1=0.2$ and $p_2=0.8$, the rate of Sender 1 in Coding scheme 1 satisfies
\begin{align*}
  R_1 \leq \min\{I(S;Y),I(X_1;Y|X_2,S)\} =\min\{0.182,0.144\}=0.144
\end{align*}
For region $\mathcal{R}^{CSI-E}_{D,22}$, we have
\begin{align*}
  I(U_1;Y|U,U_2)-H(S|U,U_1,U_2,Y) \leq \max_{u}I(U_1;Y|U=u,U_2).
\end{align*}
Now similar to Example \ref{example: R1 > R_2} and \cite[Example 4]{lapidoth2012multiple}, the maximum rate is equivalent to the capacity of Z-channel
\begin{align*}
  \log(1+(1-p)p^{\frac{p}{1-p}})=0.114.
\end{align*}
Hence, we have the rate $R_1$ in region $\mathcal{R}^{CSI-E}_{D,22}$ satisfying
\begin{align*}
  R_1 \leq I(U_1;Y|U,U_2)-H(S|U,U_1,U_2,Y) \leq 0.114.
\end{align*}
For the sum rate, by setting $V=S$, it follows that
\begin{align*}
  I(V,U,U_1,U_2;Y)-I(V;S) &= I(S,U,U_1,U_2;Y)-H(S)\\
  &=I(U,U_1,U_2;Y) + I(S;Y|U,U_1,U_2)-H(S)\\
  &=I(U,U_1,U_2;Y) + H(S|U,U_1,U_2) - H(S|Y,U,U_1,U_2)-H(S)\\
  &\overset{(a)}{=}I(U,U_1,U_2;Y)- H(S|Y,U,U_1,U_2),
\end{align*}
where $(a)$ follows by the independence between $S$ and $(U,U_1,U_2)$. It follows that the sum rate bound in region $\mathcal{R}^{CSI-E}_{D,12}$ covers the sum rate bound in region $\mathcal{R}^{CSI-E}_{D,22}$ and when we set the distributions of random variables as above, we have the sum rate
\begin{align*}
  &I(X_1,X_2;Y)- H(S|Y,X_1,X_2)\\
  &=H(Y)-H(Y|X_1,X_2)-H(S|Y,X_1,X_2)\\
  &\overset{(a)}{=}h((1-p)\cdot p_1+p\cdot p_2) -h(p)=0.182,
\end{align*} 
where $(a)$ holds since for $H(Y|X_1,X_2)$, $Y$ is determined if $X_1=X_2$, and when $X_1\neq X_2,Y$ is determined by $S$, which implies $H(Y|X_1,X_2)=(p_1(1-p_2)+(1-p_1)p_2)h(p)$. For $H(S|Y,X_1,X_2)$, $S$ can be determined given $Y$ and $X_1\neq X_2$, and this gives $H(S|Y,X_1,X_2)=(p_1p_2+(1-p_1)(1-p_2))h(p)$.

\begin{figure}
  \centering
  \includegraphics[scale=0.8]{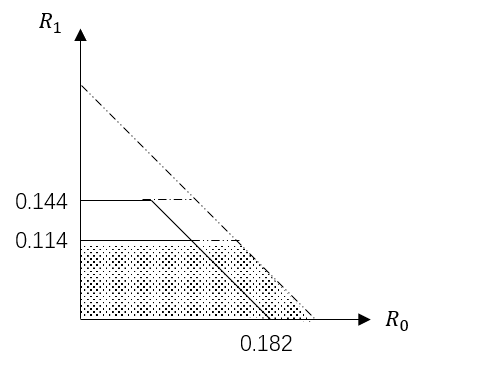}
  \caption{Achievable region of Example \ref{example: one region includes another}}
  \label{fig: one region includes another}
\end{figure}
Fig. \ref{fig: one region includes another} illustrates the achievable region in Example \ref{example: one region includes another}. The region within solid lines represents the rate region we actually achieve in this example by Coding scheme 1 and the shadowed part is the region of Coding scheme 2. The extended region within dotted lines is the rate region that is potentially achievable since we do not know the maximum rate of Sender 1 in Coding scheme 1 and the sum rate in both coding schemes. By changing the distribution, the sum rate may be larger and this gives the dotted line. However, since the maximum rate of Sender 1 in Coding scheme 2 is strictly smaller than that in Coding scheme 1, and the sum rate in Coding scheme 2 is always covered by that in Coding scheme 1, as implied by Fig. \ref{fig: one region includes another}, region $\mathcal{R}^{CSI-E}_{D,22}$ is included in region $\mathcal{R}^{CSI-E}_{D,12}$.
\end{proof}

In the following paragraphs, we consider several variations of the channel model. First, we consider the case that causal CSI is only available at the sender who sends both messages, while another sender knows nothing about the channel states. Then, we take one step further by revealing strictly causal CSI to the uninformed sender. Finally, the channel model that senders have knowledge about the state sequences but the channel is independent of states is discussed.

\subsubsection{One-Side Causal Channel State Information}\label{sec: one side causal CSI case}
Consider the case that the casual side information is only available at the sender that sends both messages. Such a model was studied in \cite{somekh2008cooperative} without secrecy constraint and the corresponding capacity result was given by using Shannon strategy. There is no cooperation about compressing state information between the senders since one of them cannot observe the channel states. Define $\mathcal{R}^{OCSI-E}_{D,21},\mathcal{R}^{OCSI-E}_{D,22},\mathcal{R}^{OCSI-E}_{D,3}$ by setting $U_2=X_2,V=\emptyset$ in $\mathcal{R}^{CSI-E}_{D,21},\mathcal{R}^{CSI-E}_{D,22},\mathcal{R}^{CSI-E}_{D,3}$ and the joint distribution $P_SP_{UU_1X_1X_2|S}=P_SP_{U}P_{U_1|U}P_{X_2|U}P_{X_1|UU_1S}$, and let $\mathcal{R}^{OCSI-E}_{D}$ be the convex closure of 
\begin{align*}
  \mathcal{R}^{OCSI-E}_{D,21}\cup \mathcal{R}^{OCSI-E}_{D,22}\cup\mathcal{R}^{OCSI-E}_{D,3}.
\end{align*}
\begin{proposition}
  For SD-MAWCs with degraded message sets and casual channel state information at the sender who sends both messages, it holds that $\mathcal{R}^{OCSI-E}_D\subseteq C^{OCSI-E}_D$.
\end{proposition}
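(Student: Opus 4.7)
The plan is to derive $\mathcal{R}^{OCSI-E}_D$ as a direct specialization of Theorem \ref{the: inner bound of degraded message sets}, restricting the admissible joint distributions to those compatible with Sender 2 being uninformed of the channel state. Concretely, the constraint that Sender 2 cannot see $S$ forces $X_2$ to be conditionally independent of $S$ given the auxiliaries available at Encoder 2; this is captured by setting $U_2=X_2$ (so that the Markov relation $U_2-U-S$ degenerates to $X_2\perp S\mid U$) and by setting $V=\emptyset$ in $\mathcal{R}^{CSI-E}_{D,3}$, since the Wyner--Ziv-type state compression employed there requires both encoders to observe and share the same lossy description, which is impossible here.

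For $\mathcal{R}^{OCSI-E}_{D,3}$ I would reuse the wiretap-coding scheme underlying $\mathcal{R}^{CSI-E}_{D,3}$ with $V=\emptyset$: the block Markov layer collapses and one is left with a single superposition wiretap code. The cloud centre $\boldsymbol{u}(m_0)$ carries the common message, Sender 2 draws $\boldsymbol{x}_2\sim P^n_{X_2|U}(\cdot\mid\boldsymbol{u}(m_0))$, and Sender 1 draws $\boldsymbol{x}_1$ via Shannon strategy using a satellite codebook generated from $P_{U_1|U}$ and emission $P_{X_1|UU_1S}$. Reliability follows by standard joint-typicality arguments, and strong secrecy of $m_1$ is obtained by applying Lemma \ref{lem: wiretap codes} to the conditional channel $P_{Z|X_1 X_2 S}$ with $(U,U_2)$ replaced by $(U,X_2)$.

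For $\mathcal{R}^{OCSI-E}_{D,21}$ and $\mathcal{R}^{OCSI-E}_{D,22}$ I would specialize Coding scheme 2 of Section \ref{codin scheme for R21} with three adjustments: (i) an outer common-message layer $\boldsymbol{u}(m_0)$ is introduced as a cloud centre so Sender 2 transmits $\boldsymbol{x}_2(m_0)\sim P_{X_2|U}$ and Sender 1 uses a satellite codebook indexed on $\boldsymbol{u}(m_0)$; (ii) only Sender 1 computes the Slepian--Wolf indices $(k_{12,b},k_{22,b})$ and the secret key $(k_{11,b},k_{21,b})$ from $\boldsymbol{s}_{b-1}$, which is consistent with its being the only encoder that observes the state; (iii) since Sender 2 has no private message in the degraded setting, the key is applied only to Sender 1's private message $m_{11,b}$. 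Slepian--Wolf decoding of $\hat{\boldsymbol{s}}_{b-1}$ at the receiver uses side information $(\boldsymbol{y}_{b-1},\boldsymbol{u},\boldsymbol{u}_1,\boldsymbol{x}_2)$, giving the rate condition $R_{11}+R_{12}+R_{21}+R_{22}\geq H(S|Y,U,U_1,X_2)$, and Csisz\'ar--K\"orner's key extraction gives $R_{11}+R_{12}+R_{21}+R_{22}\leq H(S|Z,U,X_2)$; Fourier--Motzkin elimination on the resulting linear system reproduces the defining inequalities of $\mathcal{R}^{OCSI-E}_{D,21}$ (and of $\mathcal{R}^{OCSI-E}_{D,22}$ when $R_{10}=0$).

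The main obstacle is verifying that the error and security analyses of Section \ref{codin scheme for R21} remain intact when state observation, Slepian--Wolf compression, and key generation are all performed solely by Sender 1. This reduces to checking that the restricted joint distribution $P_S P_U P_{U_1|U} P_{X_2|U} P_{X_1|UU_1S}$ continues to support the Markov relations needed for Lemma \ref{lem: wiretap codes} and for the security-index bound of \cite[Lemma~17.5]{csiszar2011information}, in particular $X_2-U-(S,U_1,X_1)$ and $(U,U_1)\perp S$. These are built into the distribution by construction, so the recursion bounding $I(M_1;\boldsymbol{Z}^{B+1}|\bar{\textbf{C}}^{B+1})$ in Section \ref{codin scheme for R21} carries over verbatim with $U_2$ replaced by $X_2$, completing the achievability of $\mathcal{R}^{OCSI-E}_{D}$.
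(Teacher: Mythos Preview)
Your proposal is correct and mirrors the paper's own argument: the paper also derives $\mathcal{R}^{OCSI-E}_D$ by specializing Theorem~\ref{the: inner bound of degraded message sets} via $U_2=X_2$ (and $V=\emptyset$ in $\mathcal{R}^{CSI-E}_{D,3}$), with the joint law collapsing to $P_S P_U P_{U_1|U} P_{X_2|U} P_{X_1|UU_1S}$ precisely because Sender~2 cannot participate in the state compression. One small slip: the Slepian--Wolf constraint at the decoder should involve only the transmitted bin-index rate, i.e.\ $R_{12}\geq H(S|Y,U,U_1,X_2)$ (the key rates $R_{11},R_{21}$ are not conveyed as Slepian--Wolf indices and so do not help state reconstruction), but this does not affect your overall approach.
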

$\mathcal{R}^{OCSI-E}_{D,21},\mathcal{R}^{OCSI-E}_{D,22}$ follows by setting $U_2=X_2$ in Theorem \ref{the: inner bound of degraded message sets}. Region $\mathcal{R}^{OCSI-E}_{D,3}$ follows by setting $V=\emptyset$ in Theorem \ref{the: inner bound of degraded message sets}. The details are omitted here. By setting $U=X_2$ and removing the wiretap channel, the achievable region reduces to the capacity of SD-MACs with degraded message sets and causal channel state information at the encoder sending both messages\cite[Theorem 4]{somekh2008cooperative}.

Next, we consider a special case of the SD-MAWCs with degraded message sets, where the wiretap channel is the degraded version of the main channel, i.e. $W(y,z|x_1,x_2,s)=W(y|x_1,x_2,s)P_{Z|Y}(z|y)$.

Given $P_S$ and $P_{YZ|X_1X_2S}$, define region
\begin{equation*}
  R^{OCSI}_{OUT} = \bigcup_{P_{UX_1X_2|S}} \left\{
    \begin{aligned}
      &R_1 \leq \min\{I(X_1,S;Y|U,X_2) - I(X_1,S;Z|U,X_2), I(X_1;Y|S,U,X_2)\},\\
      &R_0 + R_1 \leq \min\{I(X_1,X_2,S;Y) - I(X_1,S;Z|U,X_2), I(X_1,X_2;Y|S)\}
    \end{aligned}
  \right.
\end{equation*}
with joint distribution such that  $P_SP_{UX_1X_2|S}=P_SP_UP_{X_1|US}P_{X_2|U}$
and
\begin{equation*}
  R^{OCSI}_{IN} = \bigcup_{P_{UX_1X_2|S}}\left\{
    \begin{aligned}
      &R_1 \leq \min\{I(X_1,S;Y|U,X_2) - I(X_1,S;Z|U,X_2), I(X_1,S;Y|U,X_2)-H(S)\},\\
      &R_0 + R_1 \leq \min\{I(X_1,X_2,S;Y) - I(X_1,S;Z|U,X_2), I(X_1,X_2,S;Y)-H(S)\}
    \end{aligned}
  \right.
\end{equation*}
with joint distribution such that $P_{UX_1X_2S}=P_SP_UP_{X_1|US}P_{X_2|U}$.
\begin{theorem}\label{the: degraded mac with one-side causal CSI}
  The capacity of degraded SD-MAWCs with degraded message sets and one-side causal CSI at the encoder who sends both messages is bounded by 
  \begin{align*}
    R^{OCSI}_{IN} \subseteq C^{OCSI-E}_D\subseteq R^{OCSI}_{OUT}.
  \end{align*}
\end{theorem}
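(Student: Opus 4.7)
The plan is to derive the inner bound $R^{OCSI}_{IN}\subseteq C^{OCSI-E}_D$ from the general achievability result of the previous subsection, $\mathcal{R}^{OCSI-E}_D\subseteq C^{OCSI-E}_D$, by specializing the auxiliary random variables and using the degradedness assumption $W(y,z|x_1,x_2,s)=W(y|x_1,x_2,s)P_{Z|Y}(z|y)$. The two terms in the $\min$ of the $R_1$ bound correspond to two different strategies already developed: $I(X_1,S;Y|U,X_2)-I(X_1,S;Z|U,X_2)$ is recovered from $\mathcal{R}^{OCSI-E}_{D,3}$ (pure wiretap coding) and $I(X_1,S;Y|U,X_2)-H(S)$ is recovered from $\mathcal{R}^{OCSI-E}_{D,21}$ (wiretap coding with state-based secret key via Slepian--Wolf). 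In both cases I would invoke the functional representation lemma to write $X_1=g(U_1,U,S)$ with $U_1$ independent of $S$ given $U$, so that the Shannon-strategy auxiliary $U_1$ plays the role needed by the general region; the identity $I(U_1;Y|U,X_2)=I(U_1,S;Y|U,X_2)-I(S;Y|U,X_2)$ together with the degradedness-induced Markov chain $(U_1,U,X_2,S)-Y-Z$ then allows the mutual-information terms on the $Z$ side to collapse into $I(X_1,S;Z|U,X_2)$, and for the key-based branch the Slepian--Wolf cost $H(S|U,U_1,X_2,Y)$ and the key budget $H(S|Z,U,X_2)$ combine with the wiretap cost to yield the claimed $-H(S)$ term.

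For the outer bound $C^{OCSI-E}_D\subseteq R^{OCSI}_{OUT}$, I would combine Fano's inequality with the strong-secrecy constraint to obtain
\begin{align*}
  nR_1 \leq I(M_1;\boldsymbol{Y}|M_0)-I(M_1;\boldsymbol{Z}|M_0)+n\epsilon_n,
\end{align*}
and then use that $\boldsymbol{X}_2$ is a deterministic function of $M_0$ (no CSI at Sender~2) to condition freely on $\boldsymbol{X}_2$. To bring $S$ inside the mutual-information expressions, I exploit the independence of $\boldsymbol{S}$ from $(M_0,M_1,\boldsymbol{X}_2)$ together with degradedness to show
\begin{align*}
  I(M_1;\boldsymbol{Y}|M_0,\boldsymbol{X}_2)-I(M_1;\boldsymbol{Z}|M_0,\boldsymbol{X}_2) \leq I(M_1,\boldsymbol{S};\boldsymbol{Y}|M_0,\boldsymbol{X}_2)-I(M_1,\boldsymbol{S};\boldsymbol{Z}|M_0,\boldsymbol{X}_2).
\end{align*}
Single-letterization then proceeds by Csisz\'ar's sum identity with auxiliaries $U_i=(M_0,Y^{i-1},Z_{i+1}^n,X_{2,i})$, a time-sharing variable $Q$, and the identifications $U=(U_Q,Q)$, $X_1=X_{1,Q}$, $X_2=X_{2,Q}$, $S=S_Q$. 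The second bound $R_1\leq I(X_1;Y|S,U,X_2)$ follows from the standard MAC converse after revealing $\boldsymbol{S}$ to the decoder as a genie (this only enlarges the capacity region), and the two sum-rate bounds combine the private-rate inequalities above with the usual $n(R_0+R_1)=H(M_0,M_1)$ step followed by Fano's inequality.

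The main obstacle I expect is the inner-bound branch producing the $-H(S)$ term: one must track carefully how the Shannon-strategy specialization of $\mathcal{R}^{OCSI-E}_{D,21}$ causes the Slepian--Wolf reconstruction cost $H(S|U,U_1,X_2,Y)$ and the secret-key budget $H(S|Z,U,X_2)$ to cancel against the wiretap term $I(U_1;Z|U,X_2,S)$ into exactly $I(X_1,S;Y|U,X_2)-H(S)$ under degradedness, rather than a strictly smaller quantity. For the outer bound the delicate part is justifying the insertion of $\boldsymbol{S}$ into the private-rate bound so that the resulting single-letter auxiliary $U$ has the factorized distribution $P_S P_U P_{X_1|US}P_{X_2|U}$ claimed in the statement, without introducing spurious dependence between $S$ and $U$; the degradedness and the causality of CSI at Sender~1 are precisely what make this extension essentially free.
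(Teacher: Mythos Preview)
Your inner-bound plan has a genuine gap in the first branch. You propose recovering $I(X_1,S;Y|U,X_2)-I(X_1,S;Z|U,X_2)$ from the pure-wiretap region $\mathcal{R}^{OCSI-E}_{D,3}$, which with $V=\emptyset$, $U_2=X_2$ gives only $I(U_1;Y|U,X_2)-I(U_1;Z|U,X_2)$. Under the functional representation $X_1=g(U,U_1,S)$ with $U_1$ independent of $S$ given $U$, a direct calculation yields
\begin{align*}
\bigl[I(U_1;Y|U,X_2)-I(U_1;Z|U,X_2)\bigr]-\bigl[I(X_1,S;Y|U,X_2)-I(X_1,S;Z|U,X_2)\bigr]
=H(S|U,U_1,X_2,Y)-H(S|U,U_1,X_2,Z),
\end{align*}
which is $\leq 0$ by degradedness and is strictly negative whenever $Y$ carries more information about $S$ than $Z$ does. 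The Markov chain $(U_1,U,X_2,S)-Y-Z$ you invoke does not close this gap (and your stated identity $I(U_1;Y|U,X_2)=I(U_1,S;Y|U,X_2)-I(S;Y|U,X_2)$ is not correct either; the chain rule gives $I(U_1;Y|U,X_2,S)$ on the right). Hence $\mathcal{R}^{OCSI-E}_{D,3}$ is in general too weak to reach the first term of $R^{OCSI}_{IN}$.

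The paper instead obtains \emph{both} branches of the $R_1$ bound (and of the sum-rate bound) from the single region $\mathcal{R}^{OCSI-E}_{D,21}$: its first branch $I(U_1;Y|U,X_2)-I(U_1;Z|U,X_2,S)-H(S|U,U_1,X_2,Y)+H(S|Z,U,X_2)$ simplifies \emph{exactly} to $I(X_1,S;Y|U,X_2)-I(X_1,S;Z|U,X_2)$ under the functional representation, and its second branch $I(U_1;Y|U,X_2)-H(S|U,U_1,X_2,Y)$ simplifies to $I(X_1,S;Y|U,X_2)-H(S)$. The point is that the Slepian--Wolf index together with the key budget $H(S|Z,U,X_2)$ is precisely what buys back the shortfall $H(S|U,U_1,X_2,Z)-H(S|U,U_1,X_2,Y)$ that pure wiretap coding leaves open; degradedness alone does not.

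For the outer bound, your Csisz\'ar-sum route is a reasonable alternative to the paper's direct degradedness manipulation (the paper writes $I(M_1;\boldsymbol{Y}|M_0,\boldsymbol{X}_2)-I(M_1;\boldsymbol{Z}|M_0,\boldsymbol{X}_2)=I(M_1;\boldsymbol{Y}|\boldsymbol{Z},M_0,\boldsymbol{X}_2)$, enlarges to $I(\boldsymbol{X}_1,\boldsymbol{S};\boldsymbol{Y}|\boldsymbol{Z},M_0,\boldsymbol{X}_2)$, and then single-letterizes with $U_i=(Y^{i-1},\boldsymbol{X}_2,M_0)$). But be careful with your auxiliary $U_i=(M_0,Y^{i-1},Z_{i+1}^n,X_{2,i})$: since Sender~1's input at time $j>i$ depends causally on $S^j\ni S_i$, the component $Z_{i+1}^n$ can depend on $S_i$, so the factorization $P_SP_UP_{X_1|US}P_{X_2|U}$ with $U$ independent of $S$ need not hold for your choice. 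The paper's auxiliary, containing only past outputs, avoids this.
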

The proof is given in Appendix \ref{app: proof of the degraded mac with one-side causal CSI}. Note that $I(X_1;Y|S,U,X_2) = I(X_1,S;Y|U,X_2)-I(S;Y|U,X_2)$. The gap between inner and outer bounds is because $H(S|Y)$ and $H(S|Y,U,X_2)$. Hence, the following capacity result is straightforward. For given joint distribution $P_SP_{UX_1X_2|S}=P_SP_UP_{X_1|US}P_{X_2|U}$, define region
\begin{equation*}
  R^{CSI-ED}_D(P_{UX_1X_2|S}) = \left\{
    \begin{aligned}
      &R_1 \leq \min\{I(X_1;Y|U,X_2,S) - I(X_1;Z|U,X_2,S)+H(S|Z,U,X_2), I(X_1;Y|U,X_2,S)\},\\
      &R_0 + R_1 \leq \min\{I(X_1,X_2;Y|S) - I(X_1;Z|U,X_2,S)+H(S|Z,U,X_2), I(X_1,X_2;Y|S)\}
    \end{aligned}
  \right.
\end{equation*}
\begin{corollary}\label{coro: one side capacity}
  The capacity of degraded MACs with degraded message sets and one-side causal CSI and informed decoder is
  \begin{equation*}
    C^{OCSI-ED}_D = \bigcup_{P_{UX_1X_2|S}}R^{CSI-ED}_D(P_{UX_1X_2|S})
  \end{equation*}
where $|\mathcal{U}|$ satisfies $|\mathcal{U}|\leq |\mathcal{S}|\cdot|\mathcal{X}_1|\cdot|\mathcal{X}_2|+1$.
\end{corollary}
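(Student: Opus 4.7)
My plan is to derive the capacity directly from Theorem \ref{the: degraded mac with one-side causal CSI} by reinterpreting the informed decoder as a decoder in the setting of that theorem whose channel output is the pair $(Y,S)$. First I will verify that the required degradation still holds for this extended channel: since $P_{Z|Y,X_1,X_2,S}=P_{Z|Y}$ and $(Y,S)$ trivially determines $Y$, we retain $Z$ conditionally independent of $(X_1,X_2,S)$ given the virtual output $(Y,S)$, so Theorem \ref{the: degraded mac with one-side causal CSI} applies verbatim. This reduces the problem to showing that the inner and outer bounds of that theorem coincide once $Y$ is replaced by $(Y,S)$.

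The second step is to carry out this substitution in both $R^{OCSI}_{IN}$ and $R^{OCSI}_{OUT}$ and show that the resulting regions collapse onto $R^{CSI-ED}_D(P_{UX_1X_2|S})$. Using that $S$ is independent of $(U,X_1,X_2)$ together with the chain rule, the identities I will verify are
\begin{align*}
I(X_1,S;Y,S|U,X_2) - I(X_1,S;Z|U,X_2) &= I(X_1;Y|U,X_2,S) - I(X_1;Z|U,X_2,S) + H(S|Z,U,X_2),\\
I(X_1,S;Y,S|U,X_2) - H(S) &= I(X_1;Y|U,X_2,S) = I(X_1;Y,S|S,U,X_2),\\
I(X_1,X_2,S;Y,S) - H(S) &= I(X_1,X_2;Y|S) = I(X_1,X_2;Y,S|S),
\end{align*}
and the analogous identity for the sum-rate secrecy bound. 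These identities show that the two second-term expressions in the outer bound ($I(X_1;Y|S,U,X_2)$ and $I(X_1,X_2;Y|S)$) match the corresponding second-term expressions in the inner bound after extension, and that the first-term (secrecy-cost) expressions in both bounds coincide with the corollary's formulas term-for-term.

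Lastly, for the cardinality bound I will invoke the standard Fenchel--Eggleston--Carath\'eodory argument: the auxiliary $U$ must preserve the joint distribution $P_{SX_1X_2}$, contributing $|\mathcal{S}||\mathcal{X}_1||\mathcal{X}_2|-1$ linear constraints (which in particular preserves the $U$-independent functional $I(X_1,X_2;Y|S)$), plus two additional constraints to preserve the two distinct $U$-dependent functionals $I(X_1;Y|U,X_2,S)$ and $I(X_1;Z|U,X_2,S)-H(S|Z,U,X_2)$, giving $|\mathcal{U}|\leq|\mathcal{S}|\cdot|\mathcal{X}_1|\cdot|\mathcal{X}_2|+1$. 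The main conceptual point, already flagged in the text preceding the corollary, is that the $H(S|Y)$ versus $H(S|Y,U,X_2)$ mismatch responsible for the gap in Theorem \ref{the: degraded mac with one-side causal CSI} vanishes the moment $S$ is available to the decoder, since both terms become zero; everything else is algebraic verification, so I do not anticipate any substantive obstacle.
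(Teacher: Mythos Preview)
Your proposal is correct and follows essentially the same route as the paper: replace $Y$ by $(Y,S)$ in Theorem~\ref{the: degraded mac with one-side causal CSI}, observe that the inner and outer bounds collapse because the $H(S|Y)$-versus-$H(S|Y,U,X_2)$ gap vanishes, and then apply the support lemma for the cardinality bound. Your identification of the two $U$-dependent functionals $I(X_1;Y|U,X_2,S)$ and $I(X_1;Z|U,X_2,S)-H(S|Z,U,X_2)$ (together with the $|\mathcal{S}||\mathcal{X}_1||\mathcal{X}_2|-1$ constraints fixing $P_{SX_1X_2}$) is in fact stated more precisely than in the paper's brief proof.
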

\begin{proof}
  The region is obtained by setting $Y=(Y,S)$ in Theorem \ref{the: degraded mac with one-side causal CSI}. The cardinality of the auxiliary random variable set is bounded by using the support lemma\cite{el2011network} to preserve the values of  $I(X_1;Y|U,X_2,S)$, $I(X_1;Y|X_2,S)$ and $I(X_1;Z|X_2,S)-H(S|Z,X_2)$, which are $|\mathcal{S}|\cdot|\mathcal{X}_1|\cdot|\mathcal{X}_2|+1$ continuous function of some distributions of $(U,S,X_1,X_2)$. This completes the proof.
\end{proof}
\begin{remark}
  Setting $U=X_2=\emptyset$, the capacity result reduces to Corollary \ref{coro: single user capacity} for the single-user case.
\end{remark}
Similar to the single-user case, the coding schemes achieve channel capacity when state information is also revealed to the decoder. In the following example, we consider a channel model that the capacity can be achieved without an informed decoder. The model originates from the binary example in \cite[Theorem 7]{liang2008multiple}, and hence a similar technique is used in the proof.

\begin{example}\label{example: state reproduced}
  Consider a state-dependent MAC with alphabets $\mathcal{X}_1=\mathcal{X}_2=\mathcal{S}=\mathcal{Z}=\mathcal{N}=\{0,1\}$ and $\mathcal{Y}=\{0,1\}\times\{0,1\}$. The main channel output contains two components, i.e. $Y=(Y_1,Y_2)$. Let $\odot$ be binary multiplication and $\oplus$ be modulo sum. The channel model is described by
  \begin{align*}
    Y_1 &= (X_1 \odot X_2) \oplus S,\\
    Y_2 &= X_1\odot X_2,\\
    Z &= Y_2 \oplus N,
  \end{align*}
  where $S\backsim Bernoulli(q),N\backsim Bernoulli(p), q,p\in [0,\frac{1}{2}]$, and the causal channel state information is available at the sender who sends both messages. The capacity of this channel is
  \begin{equation*}
    \bigcup_{\alpha\in[\frac{1}{2},1]}\left\{ 
      \begin{aligned}
        &R_1 \leq \min\{h(\alpha) + h(q) - h(p*\alpha) + h(p), h(\alpha) \},\\
        &R_1 + R_2 \leq \min\{1 + h(q) - h(\alpha*p) + h(p), 1\}.
      \end{aligned}
    \right.
  \end{equation*}
  where $h(\cdot)$ is the binary entropy such that $h(a)=-a\log(a)-(1-a)\log(1-a)$.
\end{example}
\begin{proof}
  \emph{Converse: } By formula \eqref{neq: R1 upper bound 1} in Appendix \ref{app: proof of the degraded mac with one-side causal CSI}, it follows that (for simplicity, we omit $\delta$.)
  \begin{align*}
    R_1 &\leq I(\boldsymbol{X}_1,\boldsymbol{S};\boldsymbol{Y}|M_0,\boldsymbol{X}_2) - I(\boldsymbol{X}_1,\boldsymbol{S};\boldsymbol{Z}|M_0,\boldsymbol{X}_2)\\
    &=(H(\boldsymbol{Y}|M_0,\boldsymbol{X}_2) - H(\boldsymbol{Y}|M_0,\boldsymbol{X}_2,\boldsymbol{X}_1,\boldsymbol{S})) - (H(\boldsymbol{Z}|M_0,\boldsymbol{X}_2)-H(\boldsymbol{Z}|M_0,\boldsymbol{X}_2,\boldsymbol{X}_1,\boldsymbol{S}))\\
    &=(H(\boldsymbol{Y}_2|M_0,\boldsymbol{X}_2) + H(\boldsymbol{Y}_1|M_0,\boldsymbol{X}_2,\boldsymbol{Y}_2)) - (H(\boldsymbol{Z}|M_0,\boldsymbol{X}_2)-nh(p))\\
    &\overset{(a)}{\leq} n + nh(q) - (H(\boldsymbol{Y}_2 \oplus N|M_0,\boldsymbol{X}_2)-nh(p)),
  \end{align*}
  where $(a)$ follows by $H(\boldsymbol{Y}_2|M_0,\boldsymbol{X}_2)\leq \sum_{i=1}^n H(Y_i)\leq n$.
  Hence, there exists some $\alpha \in [\frac{1}{2},1]$ such that
  \begin{align*}
    nR_1 \leq nh(\alpha) + nh(q) - (H(\boldsymbol{Y}_2 \oplus N|M_0,\boldsymbol{X}_2)-nh(p)).
  \end{align*}
  To bound $H(\boldsymbol{Y}_2 + N|M_0,\boldsymbol{X}_2)$, we invoke Lemmas 5 and 6 in \cite{liang2008multiple} (proposed in \cite{wyner1973theorem}).
  \begin{align*}
    &H(\boldsymbol{Y}_2 \oplus N|M_0,\boldsymbol{X}_2)\\
    &= \mathbb{E}_{M_0,\boldsymbol{X}_2} \left[ H(\boldsymbol{Y}_2 \oplus N|M_0=m_0,\boldsymbol{X}_2=\boldsymbol{x}_2) \right] \\
    &\overset{(a)}{\geq} \mathbb{E}_{M_0,\boldsymbol{X}_2} \left[ h\left( p*h^{-1}\left( \frac{H(\boldsymbol{Y}_2|M_0=m_0,\boldsymbol{X}_2=\boldsymbol{x}_2)}{n} \right) \right) \right]\\
    &\overset{(b)}{\geq} nh\left( p*h^{-1}\left(  \mathbb{E}_{M_0,\boldsymbol{X}_2} \left[\frac{H(\boldsymbol{Y}_2|M_0=m_0,\boldsymbol{X}_2=\boldsymbol{x}_2)}{n} \right]\right) \right) \\
    &=nh\left( p*h^{-1}\left(  \frac{H(\boldsymbol{Y}_2|M_0,\boldsymbol{X}_2)}{n}\right) \right) \\
    &=nh\left( p*h^{-1}\left(  \frac{nh(\alpha)}{n}\right) \right) =n h(p*\alpha).
  \end{align*}
  where $(a)$ follows by Lemma 6 in \cite{liang2008multiple}, $(b)$ follows by Lemma 5 in \cite{liang2008multiple}. Combining previous bounds yields
  \begin{align}
    \label{neq: numerical example upper bound R1 1}nR_1 \leq nh(\alpha) + nh(q) - nh(p*\alpha) + nh(p).
  \end{align}
  For the second constraint, by Appendix \ref{app: proof of the degraded mac with one-side causal CSI} we have
  \begin{align}
    nR_1 &\leq I(\boldsymbol{X}_1;\boldsymbol{Y}|M_0,\boldsymbol{X}_2,\boldsymbol{S})\notag\\
    &=I(\boldsymbol{X}_1,\boldsymbol{S};\boldsymbol{Y}|M_0,\boldsymbol{X}_2) - I(\boldsymbol{S};\boldsymbol{Y}|M_0,\boldsymbol{X}_2)\notag\\
    &\overset{(a)}{=}I(\boldsymbol{X}_1,\boldsymbol{S};\boldsymbol{Y}|M_0,\boldsymbol{X}_2) - H(\boldsymbol{S})\notag\\
    \label{neq: numerical example upper bound R1 2}&=nh(\alpha) + nh(q) - nh(q)=nh(\alpha).
  \end{align}
  where $(a)$ follows by the fact that $\boldsymbol{S}$ is independent to $(M_0,\boldsymbol{X}_2)$ and the receiver can always recover $\boldsymbol{S}$ by $S_i = Y_{1,i} \ominus Y_{2,i}$.
  
  For the first constraint of the sum rate, by Appendix \ref{app: proof of the degraded mac with one-side causal CSI} we have 
  \begin{align}
    n(R_1 + R_2) &\leq I(M_0;\boldsymbol{Y}) + nR_1 \notag\\
    &\leq I(M_0,\boldsymbol{X}_2;\boldsymbol{Y}) + nR_1\notag\\
    &\leq I(M_0,\boldsymbol{X}_1,\boldsymbol{X}_2,\boldsymbol{S};\boldsymbol{Y}) - I(\boldsymbol{X}_1,\boldsymbol{S};\boldsymbol{Z}|M_0,\boldsymbol{X}_2)\notag \\
    \label{neq: numerical example upper bound R1+R2 1}&\leq n + nh(q) - nh(\alpha*p) + nh(p).
  \end{align}
  The second constraint on the sum rate is bounded by 
  \begin{align}
    n(R_1 + R_2) &\leq I(M_0;\boldsymbol{Y}) + nR_1\notag\\
    &\overset{(a)}{\leq} I(M_0,\boldsymbol{X}_2;\boldsymbol{Y}|\boldsymbol{S}) + I(\boldsymbol{X}_1;\boldsymbol{Y}|M_0,\boldsymbol{X}_2,\boldsymbol{S})\notag\\
    &= I(M_0,\boldsymbol{X}_1,\boldsymbol{X}_2;\boldsymbol{Y}|\boldsymbol{S})\notag\\
    &=I(\boldsymbol{X}_1,\boldsymbol{X}_2,\boldsymbol{S};\boldsymbol{Y}) - I(\boldsymbol{S};\boldsymbol{Y})\notag\\
    &\overset{(b)}{=}H(\boldsymbol{Y})-H(\boldsymbol{S})\notag\\
    &=H(\boldsymbol{Y}_1,\boldsymbol{Y}_2)-H(\boldsymbol{S})\notag\\
    &=H(\boldsymbol{Y}_2) + H(\boldsymbol{Y}_1|\boldsymbol{Y}_2)-H(\boldsymbol{S})\notag\\
    \label{neq: numerical example upper bound R1+R2 2}&\leq n + nh(q)-nh(q)=n.
  \end{align}
  where $(a)$ follows by the nonnegativity of mutual information and the independence between $\boldsymbol{S}$ and $(M_0,\boldsymbol{X}_2)$, $(b)$ follows by the fact that the receiver can recover channel state sequences by $S_i = Y_{1,i} \ominus Y_{2,i}$. Combining formulas \eqref{neq: numerical example upper bound R1 1}-\eqref{neq: numerical example upper bound R1+R2 2} gives the outer bound
  \begin{equation*}
    \left\{ 
      \begin{aligned}
        &R_1 \leq \min\{h(\alpha) + h(q) - h(p*\alpha) + h(p), h(\alpha) \},\\
        &R_1 + R_2 \leq \min\{1 + h(q) - h(\alpha*p) + h(p), 1\}.
      \end{aligned}
    \right.
  \end{equation*}
\emph{Direct Part: } To prove the achievability, set $Pr\{U=1\}=\frac{1}{2},Pr\{X_2 = 1\}=1$. Define a random variable $X'$ independent to $(U,X_2,S,N)$ with $Pr\{X'=1\}=\beta$ such that $q*\beta=\alpha$ for some $\alpha\in[\frac{1}{2},1]$. Define $X_1 = U \oplus S \oplus X'$. Applying Theorem \ref{the: degraded mac with one-side causal CSI} to the model, we obtain 
\begin{align}
  R_1 &\leq I(X_1,S;Y|U,X_2) - I(X_1,S;Z|U,X_2)\notag\\
  &=H(Y|U,X_2=1)-H(Y|U,X_1,X_2,S) - (H(Z|U,X_2)-H(Z|U,X_1,X_2,S))\notag\\
  &=\frac{1}{2}H(Y|U=0,X_2=1)+\frac{1}{2}H(Y|U=1,X_2=1) \notag\\
  &\quad\quad\quad\quad\quad - (\frac{1}{2}H(Z|U=0,X_2=1)+\frac{1}{2}H(Z|U=1,X_2=1)-h(p)),\notag
\end{align}
where
\begin{align*}
  H(Y|U=0,X_2=1)=H(Y|U=1,X_2=1)&=H(Y_2|U=0,X_2=1) + H(Y_1|U=0,X_2=1,Y_2)\\
  &=h(S\oplus X'|U=0,X_2=1) + h(q) = h(\alpha) + h(q),
\end{align*}
and 
\begin{align*}
  H(Z|U=0,X_2=1) = H(Z|U=0,X_2=1) = h(\alpha*p).
\end{align*}
Combining the above inequalities together gives
\begin{align*}
  R_1 \leq h(\alpha) + h(q) - h(\alpha*p) + h(p).
\end{align*}\
The second constraint is 
\begin{align*}
  R_1 &\leq I(X,S;Y|U,X_2) - H(S)\\
  &=H(Y|U,X_2) - H(Y|U,X_1,X_2,S) - h(q)\\
  &=h(\alpha).
\end{align*}
The sum rates are bounded by 
\begin{align*}
  R_1 + R_2 &\leq I(X_1,X_2,S;Y) - I(X_1,S;Z|U,X_2)\\
  &=1 + h(q) - h(\alpha*p) + h(p)
\end{align*}
and
\begin{align*}
  R_1 + R_2 &\leq I(X_1,X_2,S;Y) - H(S) = 1 + h(q) - h(q) = 1.
\end{align*}
The proof is completed.
\end{proof}
In the above example, the inner and outer bounds meet each other even when the state information is not revealed to the decoder. This is because the legitimate receiver can always reproduce the channel state based on the channel output by the relation $S = Y_1 \ominus Y_2$. Such a `state-reproducing' scheme was also discussed in \cite{sasaki2019wiretap} for single-user case.
\subsubsection{Causal CSI at One Encoder and Strictly Causal CSI at the Other}\label{sec: causal and strictly causal CSI case}
In this section, we further extend the model to the case that causal CSI is available to the sender that sends both messages and strictly causal CSI is available at another sender. Both senders observe the channel state sequences and hence, they can send the messages by cooperation, which means Coding scheme 1 can be used here. Define $\mathcal{R}^{CSI-SCSI}_{D,11},\mathcal{R}^{CSI-SCSI}_{D,12},\mathcal{R}^{CSI-SCSI}_{D,3}$ by setting $U_2=X_2$ in $\mathcal{R}^{CSI-E}_{D,11},\mathcal{R}^{CSI-E}_{D,12},\mathcal{R}^{CSI-E}_{D,3}$ with joint distribution $P_SP_{VUU_1X_1X_2|S}=P_{VS}P_{UU_1}P_{X_1|UU_1S}P_{X_2|US}$. Let $\mathcal{R}^{CSI-SCSI}_D$ be the convex closure of $\mathcal{R}^{CSI-SCSI}_{D,11}\cup\mathcal{R}^{CSI-SCSI}_{D,12}\cup\mathcal{R}^{OCSI-E}_{D,21}\cup \mathcal{R}^{OCSI-E}_{D,22}\cup\mathcal{R}^{CSI-SCSI}_{D,3}$.
\begin{proposition}
  For SD-MAWCs with degraded message sets and causal CSI at the sender that sends both messages and strictly causal CSI at the other, it holds that $\mathcal{R}^{CSI-SCSI}_D \subseteq \mathcal{C}^{CSI-SCSI}_D$.
\end{proposition}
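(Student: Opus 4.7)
The plan is to establish achievability of each of the five constituent regions $\mathcal{R}^{CSI-SCSI}_{D,11}$, $\mathcal{R}^{CSI-SCSI}_{D,12}$, $\mathcal{R}^{OCSI-E}_{D,21}$, $\mathcal{R}^{OCSI-E}_{D,22}$, and $\mathcal{R}^{CSI-SCSI}_{D,3}$ separately, and then take the convex hull. The enabling observation is that, within the block-Markov framework of block length $n$, strictly causal CSI at Sender 2 already provides the entire previous-block state sequence $\boldsymbol{s}_{b-1}$ by time $(b-1)n+1$. Hence both encoders can independently compute the same lossy description $\boldsymbol{v}_{b-1}$ of $\boldsymbol{s}_{b-1}$ and the corresponding Wyner--Ziv index $k_{0,b}$ at the start of block $b$, and the cooperative conveyance of $k_{0,b}$ through the common-message codeword $\boldsymbol{u}(k_{0,b})$ proceeds exactly as in the fully causal setting.

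For $\mathcal{R}^{CSI-SCSI}_{D,11}$ and $\mathcal{R}^{CSI-SCSI}_{D,12}$ I would reuse the degraded-message coding scheme underlying Theorem \ref{the: inner bound of degraded message sets} (with $R_{20}=R_{21}=0$ and the identification $U_2=X_2$). Under this identification, Sender 2's codeword $\boldsymbol{u}_2(k_{0,b})=\boldsymbol{x}_{2,b}$ is generated once per block from a conditional distribution depending only on the previous-block index $k_{0,b}$, so the per-symbol encoder reduces to $f_{2,j}:\mathcal{M}_0\times\mathcal{S}^{j-1}\to\mathcal{X}_2$, which is fully compatible with strictly causal CSI. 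The backward-decoding reliability argument and the wiretap-resolvability/key-agreement secrecy analysis from Section \ref{sec: coding scheme for R11} then carry over essentially verbatim, since they depend only on marginal codebook statistics and on the memoryless channel law.

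The remaining regions are easier. Achievability of $\mathcal{R}^{OCSI-E}_{D,21}$ and $\mathcal{R}^{OCSI-E}_{D,22}$ was already established in the strictly more restrictive one-side-causal model of Section \ref{sec: one side causal CSI case}, in which Sender 2 has no CSI whatsoever; granting Sender 2 strictly causal CSI can only preserve achievability, so those regions transfer without modification. Finally, $\mathcal{R}^{CSI-SCSI}_{D,3}$ is obtained from the adapted Coding scheme 1 above by setting the secret-key rate to zero and omitting the \emph{Key Message Codebook Generation}, in direct analogy with how $\mathcal{R}^{CSI-E}_{D,3}$ arises from $\mathcal{R}^{CSI-E}_{D,11}$.

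The step I expect to require the most care is the bookkeeping for the induced input distribution at Sender 2. The region definitions use a factorization $P_{X_2\mid US}$, yet Sender 2's encoder cannot physically access $S_b$ within block $b$; one must verify that the intended joint distribution is realized by letting Sender 2 use strictly causal state samples in place of current ones, and that the mutual-information quantities in the region definition (e.g.\ $I(U_1;Y|U,U_2,V)$, $I(V,U,U_1,U_2;Y)$, $I(U_1;Z|U_2,U,S)$) evaluate consistently under this construction. Once this accounting is settled, no new resolvability or typicality bounds are needed beyond Lemmas \ref{lem: wiretap codes} and \ref{lem: secret key}, and the achievability proof is complete.
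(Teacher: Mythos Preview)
Your proposal is correct and follows the paper's approach: the paper's one-sentence proof simply notes that the scheme of Appendix~\ref{app: proof of degraded message sets} for Theorem~\ref{the: inner bound of degraded message sets} carries over once Sender~2's codewords are generated according to $P_{X_2|U}$, the current state being unavailable. Your anticipated difficulty with the factorization $P_{X_2\mid US}$ therefore dissolves---setting $U_2=X_2$ in $P_{U_2|U}P_{X_2|UU_2S}$ already renders $X_2$ independent of $S$ given $U$, so no delayed-state accounting is needed and all mutual-information terms in the region evaluate under a distribution Sender~2 can realize with strictly causal CSI.
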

\begin{proof}
  The proof of $\mathcal{R}^{CSI-SCSI}_{D,11},\mathcal{R}^{CSI-SCSI}_{D,12}$ remains exactly as the coding scheme in Appendix \ref{app: proof of degraded message sets} for Theorem \ref{the: inner bound of degraded message sets} except the codewords of Sender 2 are generated according to distribution $P_{X_2|U}$ since the channel state of current transmission is not available.
\end{proof}
Now we prove both coding schemes are optimal for this model with some additional conditions. Suppose the wiretap channel is a degraded version of the main channel, and the state sequence is also available at the legitimate receiver. For simplicity, we name the sender that sends both messages as the main sender.
\begin{corollary}\label{coro: causal and strictly causal capacity}
    The capacity of degraded multiple access wiretap channels with degraded message sets, causal CSI at the main sender and strictly causal CSI at the other, and an informed decoder, is $$C^{CSI-SCSI-ED}_D = C^{OCSI-ED}_D.$$
\end{corollary}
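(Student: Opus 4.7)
The plan is to prove the two inclusions separately. The direction $C^{OCSI-ED}_D \subseteq C^{CSI-SCSI-ED}_D$ is immediate, since any Sender~2 encoder that ignores its state observations altogether is a special case of one that may consult strictly causal CSI; hence every rate pair achievable in the one-sided model is achievable in the enlarged model, and Corollary~\ref{coro: one side capacity} is directly available.

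For the nontrivial direction $C^{CSI-SCSI-ED}_D \subseteq C^{OCSI-ED}_D$, the plan is to recycle the converse argument behind Corollary~\ref{coro: one side capacity} (which appeals to the outer bound of Theorem~\ref{the: degraded mac with one-side causal CSI} proved in Appendix~\ref{app: proof of the degraded mac with one-side causal CSI}) after a careful identification of the auxiliary variable $U$ that absorbs Sender~2's strictly causal state information. Writing the encoders explicitly as $X_{1,i}=g_{1,i}(M_0,M_1,S^i)$ and $X_{2,i}=g_{2,i}(M_0,S^{i-1})$, I would set
\[
U_i \;:=\; (M_0,\;S^{i-1}),
\]
followed by a standard time-sharing step. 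Three structural properties must then be verified: (i)~$X_{2,i}$ is a deterministic function of $U_i$, matching the kernel $P_{X_2|U}$ in Corollary~\ref{coro: one side capacity}; (ii)~$X_{1,i}$ is a deterministic function of $(U_i,M_1,S_i)$, matching $P_{X_1|US}$ with $M_1$ playing the role of the private input; and (iii)~$S_i$ is independent of $U_i$ since $\boldsymbol{S}$ is i.i.d.\ and independent of the messages. Together these place the induced single-letter distribution in the product form $P_S P_U P_{X_1|US} P_{X_2|U}$ required by Corollary~\ref{coro: one side capacity}.

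Given this identification, I would run the converse exactly as in the OCSI-ED case. Fano's inequality with an informed decoder gives $nR_1 \le I(M_1;\boldsymbol{Y}|M_0,\boldsymbol{S})+n\epsilon_n$ and $n(R_0+R_1)\le I(M_0,M_1;\boldsymbol{Y}|\boldsymbol{S})+n\epsilon_n$; combining with the strong secrecy constraint $I(M_1;\boldsymbol{Z})\le n\epsilon'_n$ and applying the Csisz\'ar sum identity to exchange conditioning between $Y^{i-1}$ and $Z_{i+1}^n$, the degradedness $Z-Y-(X_1,X_2,S)$ yields the single-letter expressions $I(X_1;Y|U,X_2,S)-I(X_1;Z|U,X_2,S)+H(S|Z,U,X_2)$ and $I(X_1;Y|U,X_2,S)$ on $R_1$, together with the corresponding sum-rate bounds. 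Every step relies only on $S_i \perp (U_i,M_1)$ and on the kernel factorization $P_{YZ|X_1X_2S}$, both of which survive the enlargement of $U_i$.

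The step I expect to require the most care is a bookkeeping check: one must confirm that no step of the OCSI-ED converse silently used that $X_{2,i}$ is a function of $M_0$ alone rather than of $(M_0,S^{i-1})$, and that after inserting $S^{i-1}$ into $U_i$ the conditional entropy $H(S|Z,U,X_2)$ still single-letterizes correctly (in particular that $S_i$ does not leak into the conditioning set in some unintended way). Once this is settled by invoking the i.i.d.\ structure of $\boldsymbol{S}$ together with the independence of $\boldsymbol{S}$ from $(M_0,M_1)$, the two bounds of Corollary~\ref{coro: one side capacity} are recovered and the proposition follows without the need to derive any new inequalities.
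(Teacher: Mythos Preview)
Your plan is essentially correct and aligns with the paper's approach. The paper's own proof is extremely terse: for the converse it says only ``set $Y=(Y,S)$ in $R^{OCSI}_{OUT}$ and use that $S$ is independent of $(U,X_2)$,'' and for achievability it specializes the regions $\mathcal{R}^{CSI-SCSI}_{D,11}$ (with $V=S$, $Y\to(Y,S)$) and $R^{OCSI}_{IN}$. Your achievability argument via inclusion of encoder classes is cleaner than the paper's and entirely sufficient for the corollary.

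One technical point worth flagging: your stated auxiliary $U_i=(M_0,S^{i-1})$ is not by itself enough to single-letterize. Expanding $I(M_1;\boldsymbol{Y},\boldsymbol{S}|M_0)$ by the chain rule produces conditioning on $(Y,S)^{i-1}$, not just $S^{i-1}$, so the working identification (which is what the paper implicitly uses after substituting $Y\to(Y,S)$ into the Appendix~\ref{app: proof of the degraded mac with one-side causal CSI} converse) is $U_i=(Y^{i-1},S^{i-1},M_0)$. Your three structural properties survive: $X_{2,i}$ is still a function of $(M_0,S^{i-1})\subset U_i$; $S_i\perp U_i$ still holds because $Y^{i-1}$ depends only on $(M_0,M_1,S^{i-1})$ and channel noise, all independent of $S_i$; and the product form $P_SP_UP_{X_1|US}P_{X_2|U}$ is recovered. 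You seem aware of the need for $Y^{i-1}$ from your later remarks, so this is a bookkeeping correction rather than a gap in the argument. Note also that the paper's converse in Appendix~\ref{app: proof of the degraded mac with one-side causal CSI} does not actually invoke the Csisz\'ar sum identity; it uses degradedness directly to collapse $I(M_1;\boldsymbol{Y}|\cdot)-I(M_1;\boldsymbol{Z}|\cdot)$ into $I(M_1;\boldsymbol{Y}|\boldsymbol{Z},\cdot)$, which is the route you will want to follow.
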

The corollary proves that for degraded multiple access wiretap channels with informed decoder, degraded messages sets and causal CSI at the main sender, revealing strictly causal CSI to another sender does not improve the capacity.
\begin{proof}
  \emph{Converse:} The converse proof follows by setting $Y=(Y,S)$ in $R^{OCSI}_{OUT}$ with the fact that $S$ is independent to $(U,X_2)$.

  \emph{Achievability: } We first prove the capacity can be achieved by Coding scheme  1. This is shown by setting $V=S,Y=(Y,S)$ in $\mathcal{R}^{CSI-SCSI}_{D,11}$ and using strong functional representation lemma again. The achievability of Coding scheme 2 follows directly by setting $Y=(Y,S)$ in $R^{OCSI}_{IN}$ with the fact that $S$ is independent of $(U,X_2)$.
\end{proof}

\subsubsection{Channel Is Independent of the States}\label{sec: channel independent of state case}

Following \cite{chia2012wiretap}\cite{sasaki2019wiretap}, we consider a special case of multiple access wiretap channels such that $P_{YZ|X_1X_2S}=P_{YZ|X_1X_2}$. We further assume the wiretap channel is a degraded version of the main channel. 
For such a channel model, the following capacity result holds.
\begin{corollary}
  The capacity of state-independent degraded multiple access wiretap channels with degraded message sets is
  \begin{equation*}
    C^{CSI-E}_D = \bigcup_{P_UP_{X_1|U}P_{X_2|U}}\left\{
      \begin{aligned}
        &R_1 \leq I(X_1;Y|U,X_2) - I(X_1;Z|U,X_2),\\
        &R_0 + R_1 \leq I(X_1,X_2;Y) - I(X_1;Z|U,X_2)
      \end{aligned}
    \right.
  \end{equation*}
where $|\mathcal{U}|$ satisfies $|\mathcal{U}|\leq |\mathcal{X}_1|\cdot|\mathcal{X}_2|+1$.
\end{corollary}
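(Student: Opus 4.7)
The plan is to specialize the general inner bound in Theorem \ref{the: inner bound of degraded message sets} and pair it with a converse tailored to the state-independent degraded wiretap setting.

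For the \emph{achievability}, I will invoke region $\mathcal{R}^{CSI-E}_{D,3}$ (wiretap coding only, no secret key agreement), which suffices because the state $S$ is independent of $(Y,Z)$ given $(X_1,X_2)$ and hence cannot contribute to a useful key shared with the receiver. Concretely, I set $V=\emptyset$ (so $I(V;S)=0$) and identify $U_1=X_1$, $U_2=X_2$. Using the Markov chain $U\to(X_1,X_2)\to Y$ to simplify $I(V,U,U_1,U_2;Y)=I(X_1,X_2;Y)$, the two constraints of $\mathcal{R}^{CSI-E}_{D,3}$ reduce to
\begin{align*}
R_1 &\leq I(X_1;Y|U,X_2)-I(X_1;Z|U,X_2),\\
R_0+R_1 &\leq I(X_1,X_2;Y)-I(X_1;Z|U,X_2),
\end{align*}
which is exactly the claimed region.

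For the \emph{converse}, my first step is to reduce to the stateless case. When $P_{YZ|X_1X_2S}=P_{YZ|X_1X_2}$, the joint distribution $P_{M_0 M_1 Y^n Z^n}$ induced by any causal-CSI code coincides with that of a stochastic code that ignores the state: the sequence $S^n$ acts only as correlated randomness at the encoders, and such common randomness can be replaced by a time-sharing variable without changing $P_{Y^n Z^n|M_0,M_1}$ because the channel is independent of $S$. Thus $C^{CSI-E}_D$ equals the capacity of the standard degraded MAC wiretap channel with degraded message sets. I would then single-letterize with $U_i=(M_0,Y^{i-1},Z_{i+1}^n)$: Fano's inequality yields $nR_0\leq I(M_0;Y^n)+n\epsilon$ and $nR_1\leq I(M_1;Y^n|M_0)-I(M_1;Z^n|M_0)+n\epsilon$, and applying Csisz\'ar's sum identity together with the Markov chain $M_1\to X_{1,i}\to(Y_i,Z_i)$ given $(U_i,X_{2,i})$ and the degradedness $Y\to Z$ produces the two single-letter bounds.

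The cardinality bound $|\mathcal{U}|\leq|\mathcal{X}_1|\cdot|\mathcal{X}_2|+1$ will follow from the support lemma, by requiring $U$ to preserve the $|\mathcal{X}_1|\cdot|\mathcal{X}_2|-1$ free entries of the marginal $P_{X_1X_2}$ together with the two functionals $I(X_1;Y|U,X_2)-I(X_1;Z|U,X_2)$ and $I(X_1;Z|U,X_2)$. The hardest step will be making the reduction in the converse fully rigorous: the intuition that causal CSI is useless when the channel is state-independent is clear, but a formal proof may need a direct single-letterization handling the auxiliary's dependence on $S_i$, rather than the informal stateless reduction sketched above.
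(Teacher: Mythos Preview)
Your achievability is identical to the paper's: both set $V=\emptyset$ and $U_1=X_1$, $U_2=X_2$ independent of $S$ in region $\mathcal{R}^{CSI-E}_{D,3}$.

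For the converse you take a different route. The paper does not single-letterize from scratch; it simply specializes the outer bound $R^{OCSI}_{OUT}$ of Theorem~\ref{the: degraded mac with one-side causal CSI}, noting that when $P_{YZ|X_1X_2S}=P_{YZ|X_1X_2}$ the state can be taken independent of all other variables in the bound and every $S$-dependent term collapses (so $I(X_1,S;Y|U,X_2)=I(X_1;Y|U,X_2)$, $I(X_1;Y|S,U,X_2)=I(X_1;Y|U,X_2)$, $I(X_1,X_2,S;Y)=I(X_1,X_2;Y)$, etc.). Your plan---reduce to a stateless MAC wiretap with degraded messages and then run a fresh single-letterization with $U_i=(M_0,Y^{i-1},Z_{i+1}^n)$---is viable, but the sentence ``such common randomness can be replaced by a time-sharing variable'' is not the right mechanism: time-sharing convexifies a region, it does not de-randomize a code. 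The clean way to justify your reduction is to observe that $M_1\perp S^n$ gives $I(M_1;Z^n)\leq I(M_1;Z^n|S^n)$ while $P_e=\mathbb{E}_{S^n}[P_e(S^n)]$, and then fix a realization $s^n$ for which both are small; after that $X_{2,i}$ becomes a function of $M_0$ alone and your auxiliary identification goes through. The paper's route is more economical because it recycles the already-proved Theorem~\ref{the: degraded mac with one-side causal CSI}; yours is more self-contained but effectively reproves the converse of \cite[Corollary~3]{liang2008multiple}, which is exactly what the paper points to in the paragraph following the corollary.
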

\begin{proof}
  The achievability follows by setting $V=\emptyset$ and $U_1=X_1,U_2=X_2$ independent of $S$ in region $\mathcal{R}^{CSI-E}_{D,3}$. The converse is proved by letting $S$ independent of all the other random variables in $R^{OCSI}_{OUT}$.
\end{proof}
Note that similar to \cite{sasaki2019wiretap}, when the channel is independent of the states, although senders observe the channel state sequences as an additional source, the secret key constructions in both coding schemes are invalid since our key constructions are based on the noisy observation of state sequences at the decoder side (channel output). In the case that the channel is independent of the states, there is no correlation between the states and the channel output. Hence, the model behaves like a normal multiple access wiretap channel, and the capacity result reduces to the capacity of multiple access channels with one confidential message in \cite[Corollary 3]{liang2008multiple} considering strong secrecy constraint.

\section{Conclusion}
Although a series of works about point-to-point wiretap channels with causal CSI under different secrecy constraints have been reported,  the secure transmission over the SD-MAC with causal CSI has not been fully explored yet. The most related result to this paper was given in \cite{dai2014multiple}, where a single letter characterization of SD-MAWCs with causal CSI under weak secrecy constraint was provided as follows.
\begin{theorem}{\cite[Theorem 4]{dai2014multiple}}\label{the: existing result}
  A single-letter characterization of the capacity region of SD-MAWCs with causal CSI is as follows. Let $\mathcal{R}^{CSI-E} = \mathcal{R}_1 \cup \mathcal{R}_2,$
  where 
  \begin{equation*}
    \mathcal{R}_1 = \bigcup_{P_{U_1U_2X_1X_2|S}}\left\{  
      \begin{aligned}
        &R_1 \leq I(U_1;Y) - I(U_1;Z);\\
        &R_2 \leq I(U_2;Y|U_1) - I(U_2;Z|U_1)
      \end{aligned}
    \right.,
    \mathcal{R}_2 = \bigcup_{P_{U_1U_2X_1X_2|S}} \left\{  
      \begin{aligned}
        &R_1 \leq I(U_1;Y|U_2) - I(U_1;Z|U_2);\\
        &R_2 \leq I(U_2;Y) - I(U_2;Z),
      \end{aligned}
    \right.
  \end{equation*}

  with joint distribution $P_SP_{U_1U_2X_1X_2|S}=P_SP_{U_1}P_{U_2}P_{X_1|U_1S}P_{X_2|U_2S}$. 
  It follows that $\mathcal{R}^{CSI-E}\subseteq C^{CSI-E}$.
\end{theorem}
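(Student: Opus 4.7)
The plan is to prove $\mathcal{R}^{CSI-E}\subseteq C^{CSI-E}$ by exhibiting a single-block Shannon-strategy wiretap coding scheme combined with successive cancellation decoding at the legitimate receiver. Since only causal CSI is available, I would fix auxiliary random variables $(U_1,U_2)$ independent of $S$ together with deterministic Shannon-strategy maps $f_i:\mathcal{U}_i\times\mathcal{S}\to\mathcal{X}_i$ realizing the conditional distributions $P_{X_i|U_iS}$. The two regions $\mathcal{R}_1$ and $\mathcal{R}_2$ are then just the two corner points of the MAC achievable region under opposite successive-cancellation orders; by symmetry it suffices to handle $\mathcal{R}_1$.

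For $\mathcal{R}_1$, the codebook construction follows the wiretap paradigm of \cite{wyner1975wire} adapted to two users. Generate $2^{n\widetilde{R}_1}$ sequences $\boldsymbol{u}_1$ i.i.d.\ according to $P_{U_1}^n$ and partition them into $2^{nR_1}$ bins $\mathcal{C}_1(m_1)$ with $\widetilde{R}_1-R_1=I(U_1;Z)+\tau$ extra randomness; analogously generate $2^{n\widetilde{R}_2}$ sequences i.i.d.\ $P_{U_2}^n$ partitioned into $2^{nR_2}$ bins with $\widetilde{R}_2-R_2=I(U_2;Z|U_1)+\tau$. To transmit $(m_1,m_2)$, each encoder picks $\boldsymbol{u}_i$ uniformly at random from bin $\mathcal{C}_i(m_i)$ and feeds $X_{i,k}=f_i(U_{i,k},S_k)$ into the channel at every time $k$. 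The legitimate decoder first decodes $\hat{\boldsymbol{u}}_1$ from $\boldsymbol{y}$ treating $U_2$ as noise (requiring $\widetilde{R}_1\le I(U_1;Y)$), then decodes $\hat{\boldsymbol{u}}_2$ jointly with $\hat{\boldsymbol{u}}_1$ (requiring $\widetilde{R}_2\le I(U_2;Y|U_1)$).

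Reliability follows from a standard joint-typicality analysis together with the packing lemma, exactly as in the derivation of the corner point $(I(U_1;Y),I(U_2;Y|U_1))$ of the Shannon-strategy MAC region. For the weak secrecy constraint, the plan is to bound
\begin{align*}
\tfrac{1}{n}I(M_1,M_2;\boldsymbol{Z}) &= \tfrac{1}{n}\bigl[H(M_1,M_2)-H(M_1,M_2\mid\boldsymbol{Z})\bigr] \\
&\le R_1+R_2-\tfrac{1}{n}\bigl[H(\boldsymbol{U}_1,\boldsymbol{U}_2\mid\boldsymbol{Z})-H(\boldsymbol{U}_1,\boldsymbol{U}_2\mid M_1,M_2,\boldsymbol{Z})\bigr]
\end{align*}
and invoke a two-user list/decoding argument: because $\widetilde{R}_1-R_1>I(U_1;Z)$ and $(\widetilde{R}_1-R_1)+(\widetilde{R}_2-R_2)>I(U_1,U_2;Z)$, an eavesdropper who is additionally told $(M_1,M_2)$ can identify $(\boldsymbol{U}_1,\boldsymbol{U}_2)$ correctly with high probability by joint typicality with $\boldsymbol{Z}$, so $\tfrac{1}{n}H(\boldsymbol{U}_1,\boldsymbol{U}_2\mid M_1,M_2,\boldsymbol{Z})\to 0$ by Fano. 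On the other hand, by the chain rule and independence of codewords, $\tfrac{1}{n}H(\boldsymbol{U}_1,\boldsymbol{U}_2\mid\boldsymbol{Z})\ge \widetilde{R}_1+\widetilde{R}_2-I(U_1,U_2;Z)-o(1)$, so the two terms cancel up to $o(1)$ and $\tfrac{1}{n}I(M_1,M_2;\boldsymbol{Z})\to 0$.

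The main obstacle is the secrecy step, specifically verifying that the eavesdropper, once told both messages, can indeed list-decode the inner codewords despite observing only the degraded output $\boldsymbol{Z}$ of a state-dependent channel whose state is shared between encoders. The subtlety is that $\boldsymbol{Z}$ is not i.i.d.\ given $(\boldsymbol{U}_1,\boldsymbol{U}_2)$ in the conventional form because $S$ affects both transmissions simultaneously; however, under the Shannon-strategy assumption that $U_1,U_2$ are independent of $S$ and $X_i=f_i(U_i,S)$, the induced channel $P_{Z\mid U_1U_2}(z\mid u_1,u_2)=\sum_{s,x_1,x_2}P_S(s)\mathbf{1}\{x_i=f_i(u_i,s)\}P_{Z\mid X_1X_2S}(z\mid x_1,x_2,s)$ is memoryless, so the usual MAC wiretap packing/joint-typicality arguments apply verbatim. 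Fourier-Motzkin elimination of $(\widetilde{R}_1,\widetilde{R}_2)$ subject to the two reliability bounds and the two wiretap-redundancy bounds then yields the region $\mathcal{R}_1$; swapping the decoding order yields $\mathcal{R}_2$ by symmetry.
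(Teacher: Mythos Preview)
Your overall plan---single-block Shannon strategies plus wiretap binning, with the two regions arising as the two successive-cancellation corners---is the right one and essentially reproduces the argument in \cite{dai2014multiple}. There is, however, a genuine error in your secrecy step: you set the bin redundancies to $I(U_1;Z)+\tau$ and $I(U_2;Z|U_1)+\tau$ and then assert that because these are \emph{larger} than the relevant mutual-information terms, an eavesdropper told $(M_1,M_2)$ can decode $(\boldsymbol{U}_1,\boldsymbol{U}_2)$. The implication goes the other way. For the Fano step $\frac{1}{n}H(\boldsymbol{U}_1,\boldsymbol{U}_2\mid M_1,M_2,\boldsymbol{Z})\to 0$ to hold, the sub-codebook rates must lie \emph{inside} the eavesdropper's MAC region, i.e.\ you need $\widetilde{R}_1-R_1\le I(U_1;Z|U_2)$, $\widetilde{R}_2-R_2\le I(U_2;Z|U_1)$, and the sum $\le I(U_1,U_2;Z)$. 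With your ``$+\tau$'' choice the sum constraint is violated, so the Fano argument collapses. The fix is to take $\widetilde{R}_1-R_1=I(U_1;Z)-\tau$ and $\widetilde{R}_2-R_2=I(U_2;Z|U_1)-\tau$; then all three eavesdropper-MAC constraints hold (using $I(U_1;Z)\le I(U_1;Z|U_2)$ from $U_1\perp U_2$), and the equivocation bookkeeping yields $\frac{1}{n}I(M_1,M_2;\boldsymbol{Z})\le 2\tau+o(1)$.

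Even with that repair, note that your argument delivers only weak secrecy, whereas in this paper $C^{CSI-E}$ is defined under the strong-secrecy constraint $I(M_1,M_2;\boldsymbol{Z})\le\epsilon$. The paper does not re-prove Theorem~\ref{the: existing result} via the equivocation route; instead it observes that the region is contained in $\mathcal{R}_3$ (take $V=U=\emptyset$ there) and invokes the channel-resolvability Lemma~\ref{lem: wiretap codes} to bound $I(M_{10},M_{20};\boldsymbol{Z},\boldsymbol{S}\mid\boldsymbol{U})$ directly, which yields strong secrecy. So your approach matches the original \cite{dai2014multiple} proof but is strictly weaker than what the paper establishes; to place the region inside the paper's $C^{CSI-E}$ you would need to replace the Fano/list-decoding step by a resolvability (or soft-covering) argument with the redundancies set \emph{above} the eavesdropper's mutual-information terms---which is exactly the inequality direction you wrote, but paired with the wrong proof technique.
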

This region is included in our inner bound (region $\mathcal{R}_{3}$). Theorem \ref{the: existing result} relies on wiretap coding, which was then combined with block Markov coding, channel resolvability and secret key construction to establish the lower bound of SD-WTCs with causal CSI at encoder under strong secrecy constraint in \cite{sasaki2019wiretap}. The motivation of this paper is the fact that Shannon strategy is not optimal for SD-MACs with causal CSI at encoders, and the secret key agreement by Csisz\'ar and Narayan\cite{csiszar2000common}. The secret key rate in the new coding scheme (referred to as Coding scheme  1) in general does not outperform that in the coding scheme of \cite{sasaki2019wiretap} (referred to as Coding scheme 2 in this paper). However, the new coding scheme achieves a larger achievable region for the main channel compared to the Coding scheme 2 using Shannon strategy. The regions by different coding schemes do not include each other in general. We prove in some cases, both coding schemes are optimal and achieve the channel capacities. Although it was discussed in \cite{sasaki2019wiretap} that forward decoding has a lower decoding delay when both coding schemes achieve the same fundamental limits, our numerical examples show both coding schemes achieve some rate pairs that cannot be achieved by another scheme, and region of Coding scheme 1 can be strictly larger than that of Coding scheme 2. Hence, we cannot drop any of them. Our results are in accordance with existing results for multi-user channels and point-to-point channels.


%

\vspace{-0.25cm}
\appendices
\section{Typical Sequences}\label{app: properties of typical sequences}
This section provides some properties of typical sequences defined in Section \ref{sec: notations}. These properties are used throughout the paper and may not be invoked explicitly. The proof of the following lemmas can be found in \cite{csiszar2011information}.
\begin{lemma}{\cite[Lemma 2.10]{csiszar2011information}}\label{lem: typicality}
  If $\boldsymbol{x}\in T^n_{P_X,\delta},\boldsymbol{y}\in T^n_{P_{Y|X},\delta'}[\boldsymbol{x}]$, then $(\boldsymbol{x},\boldsymbol{y})\in T^n_{P_{XY},\delta+\delta'}$ and $\boldsymbol{y}\in T^n_{P_Y,|\mathcal{X}|(\delta+\delta')}$. 
\end{lemma}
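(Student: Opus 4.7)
The plan is to prove both conclusions directly from the definitions of $\delta$-typical and $\delta'$-conditionally typical sequences given in Section~\ref{sec: notations}, using only the triangle inequality and the elementary identity $P_{XY}(x,y)=P_X(x)P_{Y|X}(y|x)$.

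For the first claim, I would fix an arbitrary pair $(x,y)\in\mathcal{X}\times\mathcal{Y}$ and decompose the deviation
\begin{align*}
\left|\tfrac{1}{n}N(x,y|\boldsymbol{x},\boldsymbol{y})-P_{XY}(x,y)\right|
\end{align*}
by adding and subtracting $\tfrac{1}{n}N(x|\boldsymbol{x})P_{Y|X}(y|x)$. The triangle inequality then yields two pieces: the first is bounded by $\delta'$ directly by the assumption $\boldsymbol{y}\in T^n_{P_{Y|X},\delta'}[\boldsymbol{x}]$, while the second factors as $P_{Y|X}(y|x)\cdot\bigl|\tfrac{1}{n}N(x|\boldsymbol{x})-P_X(x)\bigr|$ and is bounded by $\delta$ using $\boldsymbol{x}\in T^n_{P_X,\delta}$ together with $P_{Y|X}(y|x)\le 1$. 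Summing the two bounds gives $\delta+\delta'$, which is exactly the required condition for $(\boldsymbol{x},\boldsymbol{y})\in T^n_{P_{XY},\delta+\delta'}$.

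For the second claim, I would marginalize over $x$. Since $N(y|\boldsymbol{y})=\sum_{x\in\mathcal{X}} N(x,y|\boldsymbol{x},\boldsymbol{y})$ and $P_Y(y)=\sum_{x\in\mathcal{X}} P_{XY}(x,y)$, the triangle inequality gives
\begin{align*}
\left|\tfrac{1}{n}N(y|\boldsymbol{y})-P_Y(y)\right|\le\sum_{x\in\mathcal{X}}\left|\tfrac{1}{n}N(x,y|\boldsymbol{x},\boldsymbol{y})-P_{XY}(x,y)\right|,
\end{align*}
and each summand is at most $\delta+\delta'$ by the first claim, producing the factor $|\mathcal{X}|$. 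Since this holds for every $y\in\mathcal{Y}$, we obtain $\boldsymbol{y}\in T^n_{P_Y,|\mathcal{X}|(\delta+\delta')}$.

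There is no substantive obstacle here; the argument is a two-step triangle-inequality computation, and the main thing to be careful about is writing the decomposition with the correct intermediate quantity $\tfrac{1}{n}N(x|\boldsymbol{x})P_{Y|X}(y|x)$ so that the bound on $P_{Y|X}(y|x)\le 1$ is available. Because the lemma is cited verbatim from \cite[Lemma 2.10]{csiszar2011information}, a brief self-contained proof along these lines would suffice, or the reference alone could be cited.
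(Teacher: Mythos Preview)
Your proof is correct and is exactly the standard argument for this result. The paper does not actually supply its own proof: it states the lemma with the citation \cite[Lemma 2.10]{csiszar2011information} and remarks beforehand that ``The proof of the following lemmas can be found in \cite{csiszar2011information}.'' Your self-contained triangle-inequality derivation is the usual one and would be perfectly acceptable if the paper chose to include a proof.
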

Implied by Lemma \ref{lem: typicality}, it follows that
\begin{lemma}{\cite[Eq. 17.15]{csiszar2011information}}\label{lem: typicality 2}
  For positive numbers $\zeta,\delta,\sigma$ such that $\zeta<\sigma$, if $\boldsymbol{x}\in T^n_{P_X,\zeta}$, then $T^n_{P_{Y|X},\sigma-\zeta}[\boldsymbol{x}]\subseteq T^n_{P_{XY},\sigma}[\boldsymbol{x}]$.
\end{lemma}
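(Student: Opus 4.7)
The statement is an immediate corollary of the preceding Lemma \ref{lem: typicality}, which asserts that conditional typicality composes additively with marginal typicality: if $\boldsymbol{x}\in T^n_{P_X,\delta}$ and $\boldsymbol{y}\in T^n_{P_{Y|X},\delta'}[\boldsymbol{x}]$, then $(\boldsymbol{x},\boldsymbol{y})\in T^n_{P_{XY},\delta+\delta'}$. The plan is to apply this composition with the specific choice $\delta=\zeta$ and $\delta'=\sigma-\zeta$ (which is admissible precisely because the hypothesis $\zeta<\sigma$ guarantees $\sigma-\zeta>0$, so the conditional typical set with tolerance $\sigma-\zeta$ is well-defined), and then re-interpret the conclusion via the definition of the $\boldsymbol{x}$-section $T^n_{P_{XY},\sigma}[\boldsymbol{x}]$.

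Concretely, I would fix an arbitrary $\boldsymbol{y}\in T^n_{P_{Y|X},\sigma-\zeta}[\boldsymbol{x}]$. Combined with the standing assumption $\boldsymbol{x}\in T^n_{P_X,\zeta}$, Lemma \ref{lem: typicality} applied with $(\delta,\delta')=(\zeta,\sigma-\zeta)$ yields $(\boldsymbol{x},\boldsymbol{y})\in T^n_{P_{XY},\zeta+(\sigma-\zeta)}=T^n_{P_{XY},\sigma}$. By the section-set definition recalled in Section \ref{sec: notations}, namely $T^n_{P_{XY},\sigma}[\boldsymbol{x}]=\{\boldsymbol{y}:(\boldsymbol{x},\boldsymbol{y})\in T^n_{P_{XY},\sigma}\}$, this is exactly the statement $\boldsymbol{y}\in T^n_{P_{XY},\sigma}[\boldsymbol{x}]$. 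Since $\boldsymbol{y}$ was arbitrary in $T^n_{P_{Y|X},\sigma-\zeta}[\boldsymbol{x}]$, the claimed inclusion follows.

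There is no real obstacle here; the only point meriting attention is the bookkeeping on the tolerance parameter. The role of the strict inequality $\zeta<\sigma$ is solely to ensure that $\sigma-\zeta$ is a legitimate (positive) typicality slack, so that the set $T^n_{P_{Y|X},\sigma-\zeta}[\boldsymbol{x}]$ on the left-hand side is nonempty in general and the hypotheses of Lemma \ref{lem: typicality} are met. No further estimates, cardinality arguments, or probabilistic considerations are needed.
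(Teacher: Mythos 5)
Your proposal is correct and matches the paper's own justification, which simply notes that the lemma is implied by Lemma \ref{lem: typicality}; you have spelled out exactly that implication with the choice $(\delta,\delta')=(\zeta,\sigma-\zeta)$ and the section-set definition. No issues.
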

For joint distribution $P_{XY}$ with $P_X$ being the marginal distribution, the following lemma holds.
\begin{lemma}{\cite[Corollary 17.9A]{csiszar2011information}}\label{lem: typical sequences independent sequences}
  Consider $N=2^{nR}$ sequences $\boldsymbol{x}\in\mathcal{X}^n$, each i.i.d. generated according to distribution $P_X$. If $I(X;Y)<R$, to any $\tau>0$ there exists $\zeta>0$ such that 
  \begin{align*}
    \left| \frac{1}{n}\log \left| \{i:\boldsymbol{x}(i)\in T^n_{P_{XY},\zeta}[\boldsymbol{y}]\} \right| - (R-I(X;Y)) \right| < \tau
  \end{align*}
  simultaneously for all $\boldsymbol{y}\in \mathcal{Y}^n$ with $T^n_{P_{XY},\zeta}[\boldsymbol{y}]\neq \emptyset$.
\end{lemma}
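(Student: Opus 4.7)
The plan is to analyze, for each fixed $\boldsymbol{y}$, the random counting variable $N_{\boldsymbol{y}} := |\{i : \boldsymbol{x}(i)\in T^n_{P_{XY},\zeta}[\boldsymbol{y}]\}|$, first computing its expectation and then establishing exponential concentration around this expectation uniformly in $\boldsymbol{y}$. I would view $N_{\boldsymbol{y}}$ as a sum of $N=2^{nR}$ i.i.d.\ indicator variables $\mathbf{1}\{\boldsymbol{X}(i)\in T^n_{P_{XY},\zeta}[\boldsymbol{y}]\}$.

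First I would invoke the standard size bounds on conditionally typical sets (Lemma \ref{lem: typicality} combined with the usual estimates in \cite[Ch.~2]{csiszar2011information}) to show that for any $\boldsymbol{y}$ with $T^n_{P_{XY},\zeta}[\boldsymbol{y}]\neq\emptyset$,
\begin{equation*}
  2^{-n(I(X;Y)+\eta(\zeta))} \;\leq\; \Pr\{\boldsymbol{X}\in T^n_{P_{XY},\zeta}[\boldsymbol{y}]\} \;\leq\; 2^{-n(I(X;Y)-\eta(\zeta))},
\end{equation*}
where $\eta(\zeta)\to 0$ as $\zeta\to 0$. By linearity of expectation, $\mathbb{E}[N_{\boldsymbol{y}}]$ thus lies in the interval $[2^{n(R-I(X;Y)-\eta(\zeta))},\,2^{n(R-I(X;Y)+\eta(\zeta))}]$. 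Choosing $\zeta$ so that $\eta(\zeta)<\tau/2$ already pins down the ``expected value'' side of the claim.

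Next I would establish concentration: since the indicators are independent Bernoulli$(p_{\boldsymbol{y}})$ with mean $p_{\boldsymbol{y}}\geq 2^{-n(I(X;Y)+\eta(\zeta))}$ and $R>I(X;Y)$, the expected count $Np_{\boldsymbol{y}}$ grows exponentially in $n$. A Chernoff bound then gives
\begin{equation*}
  \Pr\Bigl\{|N_{\boldsymbol{y}}-\mathbb{E}[N_{\boldsymbol{y}}]| > \tfrac{1}{2}\mathbb{E}[N_{\boldsymbol{y}}]\Bigr\} \;\leq\; 2\exp\bigl(-c\, 2^{n(R-I(X;Y)-\eta(\zeta))}\bigr),
\end{equation*}
i.e.\ a doubly exponential decay.

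The final and most delicate step is to upgrade this pointwise estimate to the ``simultaneously for all $\boldsymbol{y}$'' statement. A naive union bound over $|\mathcal{Y}|^n$ sequences is far too coarse; the key observation is that $N_{\boldsymbol{y}}$ depends on $\boldsymbol{y}$ only through the joint types that $\boldsymbol{y}$ forms with codewords, and the number of $n$-types on $\mathcal{X}\times\mathcal{Y}$ is polynomial in $n$. Hence a union bound over types, or equivalently over type classes of $\boldsymbol{y}$, is absorbed by the doubly exponential Chernoff bound, and the event holds with probability tending to one. The main obstacle is precisely this uniformity in $\boldsymbol{y}$, which is what distinguishes the statement from a routine first-moment calculation; once the type-counting reduction is made, the rest is standard.
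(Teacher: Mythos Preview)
The paper does not supply its own proof of this lemma; it is quoted from \cite{csiszar2011information} (Corollary~17.9A) and the appendix simply refers the reader there. So there is no in-paper argument to compare against, and your Chernoff-plus-union-bound outline is indeed the standard route used in that reference.

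Your expectation and concentration steps are fine. The issue is in your ``most delicate step.'' You claim that $N_{\boldsymbol{y}}$ depends on $\boldsymbol{y}$ only through joint types, so a union bound over polynomially many types suffices. That is not correct: for a \emph{fixed realization} of the codebook, two sequences $\boldsymbol{y},\boldsymbol{y}'$ of the same type can yield different counts $N_{\boldsymbol{y}}\neq N_{\boldsymbol{y}'}$. What is type-invariant is only the \emph{distribution} of $N_{\boldsymbol{y}}$ over the random codebook, which tells you the Chernoff bound is the same for each $\boldsymbol{y}$ of a given type, but the bad events for distinct $\boldsymbol{y}$ of the same type are still distinct events and must each be accounted for.

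Fortunately, the concern that motivated your type reduction is unfounded: the Chernoff bound you wrote is \emph{doubly} exponential in $n$, while $|\mathcal{Y}|^n$ is only singly exponential. Hence the naive union bound over all $\boldsymbol{y}\in\mathcal{Y}^n$ already gives
\[
|\mathcal{Y}|^n\cdot 2\exp\bigl(-c\,2^{n(R-I(X;Y)-\eta(\zeta))}\bigr)\;\longrightarrow\;0,
\]
and no type-counting refinement is needed. With that correction your argument goes through.
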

\section{Proof of lemma \ref{lem: wiretap codes}}\label{app: proof of wiretap codes}
 The proof here uses similar arguments as that in \cite{helal2020cooperative}. Consider distribution
\begin{align*}
  P_{U_1U_2SZ|U}(u_1,u_2,s,z|u)=\sum_{x_1,x_2}P_S(s)P_{U_1|U}(u_1|u)P_{U_2|U}(u_2|u)P_{X_1|U_1US}(x_1|u,u_1,s)P_{X_2|U_2US}(x_2|u,u_2,s)P_{Z|X_1X_2S}(z|x_1,x_2,s)
\end{align*}
such that $Q_{SZ|U}(s,z|u)=\sum_{u_1,u_2}P_{U_1U_2SZ|U}(u_1,u_2,s,z|u)$.
Now it follows that
\begin{align*}
  &\mathbb{E}_{\textbf{C}\textbf{C}_1\textbf{C}_2}\left[ D(P_{\boldsymbol{Z}\boldsymbol{S}|\boldsymbol{U}\textbf{C}\textbf{C}_1\textbf{C}_2}||Q_{\boldsymbol{Z}\boldsymbol{S}|\boldsymbol{U}}) \right]\\
  &\overset{(a)}{=}\mathbb{E}_{\textbf{C}\textbf{C}_1\textbf{C}_2}\left[ \sum_{\boldsymbol{z},\boldsymbol{s}}\sum_{m}\bar{P}(m)P_{\boldsymbol{Z}\boldsymbol{S}|\boldsymbol{U}\textbf{C}\textbf{C}_1\textbf{C}_2}(\boldsymbol{z},\boldsymbol{s}|\boldsymbol{U}(m))\log\frac{P_{\boldsymbol{Z}\boldsymbol{S}|\boldsymbol{U}\textbf{C}\textbf{C}_1\textbf{C}_2}(\boldsymbol{z},\boldsymbol{s}|\boldsymbol{U}(m))}{Q_{\boldsymbol{Z}\boldsymbol{S}|\boldsymbol{U}}(\boldsymbol{z},\boldsymbol{s}|\boldsymbol{U}(m))} \right]\\
  &=\mathbb{E}_{\textbf{C}\textbf{C}_1\textbf{C}_2}\left[ \sum_{\boldsymbol{z},\boldsymbol{s}}\sum_{m}\bar{P}(m)\sum_{m_1,m_2}2^{-n(R_1+R_2)}P_{\boldsymbol{Z}\boldsymbol{S}|\boldsymbol{U}\boldsymbol{U}_1\boldsymbol{U}_2}(\boldsymbol{z},\boldsymbol{s}|\boldsymbol{U}(m),\boldsymbol{U}_1(m,m_1),\boldsymbol{U}_2(m,m_2))\right.\\
  &\quad\quad\quad\quad\quad \left. \log\frac{\sum_{l_1',l_2'}2^{-n(R_1+R_2)}P_{\boldsymbol{Z}\boldsymbol{S}|\boldsymbol{U}\boldsymbol{U}_1\boldsymbol{U}_2}(\boldsymbol{z},\boldsymbol{s}|\boldsymbol{U}(m),\boldsymbol{U}_1(m,l_1'),\boldsymbol{U}_2(m,l_2'))}{Q_{\boldsymbol{Z}\boldsymbol{S}|\boldsymbol{U}}(\boldsymbol{z},\boldsymbol{s}|\boldsymbol{U}(m))} \right]\\
  &=\sum_{\substack{\boldsymbol{u}(1),\boldsymbol{u}_1(1,1)\dots,\boldsymbol{u}_1(1,2^{nR_1})\\\boldsymbol{u}_2(1,1)\dots,\boldsymbol{u}_2(1,2^{nR_2})}}\sum_{\substack{\boldsymbol{u}(2),\boldsymbol{u}_1(2,1)\dots,\boldsymbol{u}_1(2,2^{nR_1})\\\boldsymbol{u}_2(2,1)\dots,\boldsymbol{u}_2(2,2^{nR_2})}}\dots \sum_{\substack{\boldsymbol{u}(2^{nR_0}),\boldsymbol{u}_1(2^{nR_0},1)\dots,\boldsymbol{u}_1(2^{nR_0},2^{nR_1})\\\boldsymbol{u}_2(2^{nR_0},1)\dots,\boldsymbol{u}_2(2^{nR_0},2^{nR_2})}}\\
  &\quad\quad\quad\quad\quad\quad\quad\quad\quad\quad\cdot\prod_{(l,l_1,l_2)=(1,1,1)}^{2^{nR_0},2^{nR_1},2^{nR_2}}P_{UU_1U_2}(\boldsymbol{u}(l),\boldsymbol{u}_1(l,l_1),\boldsymbol{u}_2(l,l_2))\\
  &\quad\quad\quad\quad\quad \left[ \sum_{\boldsymbol{z},\boldsymbol{s}}\sum_{m}\bar{P}(m)\sum_{m_1,m_2}2^{-n(R_1+R_2)}P_{\boldsymbol{Z}\boldsymbol{S}|\boldsymbol{U}\boldsymbol{U}_1\boldsymbol{U}_2}(\boldsymbol{z},\boldsymbol{s}|\boldsymbol{u}(m),\boldsymbol{u}_1(m,m_1),\boldsymbol{u}_2(m,m_2))\right.\\
  &\quad\quad\quad\quad\quad \left. \log\frac{\sum_{l_1',l_2'}2^{-n(R_1+R_2)}P_{\boldsymbol{Z}\boldsymbol{S}|\boldsymbol{U}\boldsymbol{U}_1\boldsymbol{U}_2}(\boldsymbol{z},\boldsymbol{s}|\boldsymbol{u}(m),\boldsymbol{u}_1(m,l_1'),\boldsymbol{u}_2(m,l_2'))}{Q_{\boldsymbol{Z}\boldsymbol{S}|\boldsymbol{U}}(\boldsymbol{z},\boldsymbol{s}|\boldsymbol{u}(m))} \right]\\
  &=2^{-n(R_1+R_2)}\sum_{\boldsymbol{z},\boldsymbol{s}}\sum_{m}\sum_{m_1,m_2}\sum_{\boldsymbol{u}(m)}\sum_{\boldsymbol{u}_1(m,m_1)}\sum_{\boldsymbol{u}_2(m,m_2)}\bar{P}(m)P_{\boldsymbol{Z}\boldsymbol{S}\boldsymbol{U}\boldsymbol{U}_1\boldsymbol{U}_2}(\boldsymbol{z},\boldsymbol{s},\boldsymbol{u}(m),\boldsymbol{u}_1(m,m_1),\boldsymbol{u}_2(m,m_2))\\
  &\quad\quad\quad\quad\quad\sum_{\substack{(k,k_1,k_2)\\\neq(m,m_1,m_2)}}\sum_{\boldsymbol{u}(k)}\sum_{\boldsymbol{u}_1(k,k_1)}\sum_{\boldsymbol{u}_2(k,k_2)}\prod_{(l,l_1,l_2)\neq(m,m_1,m_2)} P_{UU_1U_2}(\boldsymbol{u}(l),\boldsymbol{u}_1(l,l_1),\boldsymbol{u}_2(l,l_2))\\
  &\quad\quad\quad\quad\quad \log\frac{\sum_{l_1',l_2'}2^{-n(R_1+R_2)}P_{\boldsymbol{Z}\boldsymbol{S}|\boldsymbol{U}\boldsymbol{U}_1\boldsymbol{U}_2}(\boldsymbol{z},\boldsymbol{s}|\boldsymbol{u}(m),\boldsymbol{u}_1(m,l_1'),\boldsymbol{u}_2(m,l_2'))}{Q_{\boldsymbol{Z}\boldsymbol{S}|\boldsymbol{U}}(\boldsymbol{z},\boldsymbol{s}|\boldsymbol{u}(m))}\\
  &=2^{-n(R_1+R_2)}\sum_{\boldsymbol{z},\boldsymbol{s}}\sum_{m}\sum_{m_1,m_2}\sum_{\boldsymbol{u}(m)}\sum_{\boldsymbol{u}_1(m,m_1)}\sum_{\boldsymbol{u}_2(m,m_2)}\bar{P}(m)P_{\boldsymbol{Z}\boldsymbol{S}\boldsymbol{U}\boldsymbol{U}_1\boldsymbol{U}_2}(\boldsymbol{z},\boldsymbol{s},\boldsymbol{u}(m),\boldsymbol{u}_1(m,m_1),\boldsymbol{u}_2(m,m_2))\\
  &\quad\quad\quad\quad\quad \mathbb{E}_{\backslash(m,m_1,m_2)}\log\frac{\sum_{l_1',l_2'}2^{-n(R_1+R_2)}P_{\boldsymbol{Z}\boldsymbol{S}|\boldsymbol{U}\boldsymbol{U}_1\boldsymbol{U}_2}(\boldsymbol{z},\boldsymbol{s}|\boldsymbol{U}(m),\boldsymbol{U}_1(m,l_1'),\boldsymbol{U}_2(m,l_2'))}{Q_{\boldsymbol{Z}\boldsymbol{S}|\boldsymbol{U}}(\boldsymbol{z},\boldsymbol{s}|\boldsymbol{U}(m))}\\
  &=2^{-n(R_1+R_2)}\sum_{\boldsymbol{z},\boldsymbol{s}}\sum_{m}\sum_{m_1,m_2}\sum_{\boldsymbol{u}(m)}\sum_{\boldsymbol{u}_1(m,m_1)}\sum_{\boldsymbol{u}_2(m,m_2)}\bar{P}(m)P_{\boldsymbol{Z}\boldsymbol{S}\boldsymbol{U}\boldsymbol{U}_1\boldsymbol{U}_2}(\boldsymbol{z},\boldsymbol{s},\boldsymbol{u}(m),\boldsymbol{u}_1(m,m_1),\boldsymbol{u}_2(m,m_2))\\
  &\quad\quad\quad\quad\quad \mathbb{E}_{\backslash(m,m_1,m_2)}\log\left[\frac{2^{-n(R_1+R_2)}P_{\boldsymbol{Z}\boldsymbol{S}|\boldsymbol{U}\boldsymbol{U}_1\boldsymbol{U}_2}(\boldsymbol{z},\boldsymbol{s}|\boldsymbol{u}(m),\boldsymbol{u}_1(m,m_1),\boldsymbol{u}_2(m,m_2))}{Q_{\boldsymbol{Z}\boldsymbol{S}|\boldsymbol{U}}(\boldsymbol{z},\boldsymbol{s}|\boldsymbol{u}(m))}\right.\\
  &\quad\quad\quad\quad\quad\quad\quad\quad\quad\quad +\frac{\sum_{l_1'\neq m_1}2^{-n(R_1+R_2)}P_{\boldsymbol{Z}\boldsymbol{S}|\boldsymbol{U}\boldsymbol{U}_1\boldsymbol{U}_2}(\boldsymbol{z},\boldsymbol{s}|\boldsymbol{u}(m),\boldsymbol{U}_1(m,l_1'),\boldsymbol{u}_2(m,m_2))}{Q_{\boldsymbol{Z}\boldsymbol{S}|\boldsymbol{U}}(\boldsymbol{z},\boldsymbol{s}|\boldsymbol{u}(m))}\\
  &\quad\quad\quad\quad\quad\quad\quad\quad\quad\quad  +\frac{\sum_{l_2'\neq m_2}2^{-n(R_1+R_2)}P_{\boldsymbol{Z}\boldsymbol{S}|\boldsymbol{U}\boldsymbol{U}_1\boldsymbol{U}_2}(\boldsymbol{z},\boldsymbol{s}|\boldsymbol{u}(m),\boldsymbol{u}_1(m,m_1),\boldsymbol{U}_2(m,l_2'))}{Q_{\boldsymbol{Z}\boldsymbol{S}|\boldsymbol{U}}(\boldsymbol{z},\boldsymbol{s}|\boldsymbol{u}(m))} \\
  &\quad\quad\quad\quad\quad\quad\quad\quad\quad\quad \left. +\frac{\sum_{l_1'\neq m_1,l_2'\neq m_2}2^{-n(R_1+R_2)}P_{\boldsymbol{Z}\boldsymbol{S}|\boldsymbol{U}\boldsymbol{U}_1\boldsymbol{U}_2}(\boldsymbol{z},\boldsymbol{s}|\boldsymbol{u}(m),\boldsymbol{U}_1(m,l_1'),\boldsymbol{U}_2(m,l_2'))}{Q_{\boldsymbol{Z}\boldsymbol{S}|\boldsymbol{U}}(\boldsymbol{z},\boldsymbol{s}|\boldsymbol{u}(m))} \right]\\
\end{align*}
\begin{align*}
  &\overset{(b)}{\leq} 2^{-n(R_1+R_2)}\sum_{\boldsymbol{z},\boldsymbol{s}}\sum_{m}\sum_{m_1,m_2}\sum_{\boldsymbol{u}(m)}\sum_{\boldsymbol{u}_1(m,m_1)}\sum_{\boldsymbol{u}_2(m,m_2)}\bar{P}(m)P_{\boldsymbol{Z}\boldsymbol{S}\boldsymbol{U}\boldsymbol{U}_1\boldsymbol{U}_2}(\boldsymbol{z},\boldsymbol{s},\boldsymbol{u}(m),\boldsymbol{u}_1(m,m_1),\boldsymbol{u}_2(m,m_2))\\
  &\quad\quad\quad\quad\quad \log \mathbb{E}_{\backslash(m,m_1,m_2)} \left[\frac{2^{-n(R_1+R_2)}P_{\boldsymbol{Z}\boldsymbol{S}|\boldsymbol{U}\boldsymbol{U}_1\boldsymbol{U}_2}(\boldsymbol{z},\boldsymbol{s}|\boldsymbol{u}(m),\boldsymbol{u}_1(m,m_1),\boldsymbol{u}_2(m,m_2))}{Q_{\boldsymbol{Z}\boldsymbol{S}|\boldsymbol{U}}(\boldsymbol{z},\boldsymbol{s}|\boldsymbol{u}(m))}\right.\\
  &\quad\quad\quad\quad\quad\quad\quad\quad\quad\quad +\frac{\sum_{l_1'\neq m_1}2^{-n(R_1+R_2)}P_{\boldsymbol{Z}\boldsymbol{S}|\boldsymbol{U}\boldsymbol{U}_1\boldsymbol{U}_2}(\boldsymbol{z},\boldsymbol{s}|\boldsymbol{u}(m),\boldsymbol{U}_1(m,l_1'),\boldsymbol{u}_2(m,m_2))}{Q_{\boldsymbol{Z}\boldsymbol{S}|\boldsymbol{U}}(\boldsymbol{z},\boldsymbol{s}|\boldsymbol{u}(m))}\\
  &\quad\quad\quad\quad\quad\quad\quad\quad\quad\quad  +\frac{\sum_{l_2'\neq m_2}2^{-n(R_1+R_2)}P_{\boldsymbol{Z}\boldsymbol{S}|\boldsymbol{U}\boldsymbol{U}_1\boldsymbol{U}_2}(\boldsymbol{z},\boldsymbol{s}|\boldsymbol{u}(m),\boldsymbol{u}_1(m,m_1),\boldsymbol{U}_2(m,l_2'))}{Q_{\boldsymbol{Z}\boldsymbol{S}|\boldsymbol{U}}(\boldsymbol{z},\boldsymbol{s}|\boldsymbol{u}(m))} \\
  &\quad\quad\quad\quad\quad\quad\quad\quad\quad\quad \left. +\frac{\sum_{l_1'\neq m_1,l_2'\neq m_2}2^{-n(R_1+R_2)}P_{\boldsymbol{Z}\boldsymbol{S}|\boldsymbol{U}\boldsymbol{U}_1\boldsymbol{U}_2}(\boldsymbol{z},\boldsymbol{s}|\boldsymbol{u}(m),\boldsymbol{U}_1(m,l_1'),\boldsymbol{U}_2(m,l_2'))}{Q_{\boldsymbol{Z}\boldsymbol{S}|\boldsymbol{U}}(\boldsymbol{z},\boldsymbol{s}|\boldsymbol{u}(m))} \right]\\
  &\leq 2^{-n(R_1+R_2)}\sum_{\boldsymbol{z},\boldsymbol{s}}\sum_{m}\bar{P}(m)\sum_{m_1,m_2}\sum_{\boldsymbol{u}(m)}\sum_{\boldsymbol{u}_1(m,m_1)}\sum_{\boldsymbol{u}_2(m,m_2)}P_{\boldsymbol{Z}\boldsymbol{S}\boldsymbol{U}_1\boldsymbol{U}_2\boldsymbol{U}}(\boldsymbol{z},\boldsymbol{s},\boldsymbol{u}_1(m,m_1),\boldsymbol{u}_2(m,m_2),\boldsymbol{u}(m))\\
  &\quad\quad\quad\quad\quad \log  \left[\frac{2^{-n(R_1+R_2)}P_{\boldsymbol{Z}\boldsymbol{S}|\boldsymbol{U}\boldsymbol{U}_1\boldsymbol{U}_2}(\boldsymbol{z},\boldsymbol{s}|\boldsymbol{u}(m),\boldsymbol{u}_1(m,m_1),\boldsymbol{u}_2(m,m_2))}{Q_{\boldsymbol{Z}\boldsymbol{S}|\boldsymbol{U}}(\boldsymbol{z},\boldsymbol{s}|\boldsymbol{u}(m))}\right.\\
  &\quad\quad\quad\quad\quad\quad\quad\quad\quad\quad +\frac{2^{-nR_2}P_{\boldsymbol{Z}\boldsymbol{S}|\boldsymbol{U}\boldsymbol{U}_2}(\boldsymbol{z},\boldsymbol{s}|\boldsymbol{u}(m),\boldsymbol{u}_2(m,m_2))}{Q_{\boldsymbol{Z}\boldsymbol{S}|\boldsymbol{U}}(\boldsymbol{z},\boldsymbol{s}|\boldsymbol{u}(m))}\\
  &\quad\quad\quad\quad\quad\quad\quad\quad\quad\quad \left. +\frac{2^{-nR_1}P_{\boldsymbol{Z}\boldsymbol{S}|\boldsymbol{U}\boldsymbol{U}_1}(\boldsymbol{z},\boldsymbol{s}|\boldsymbol{u}(m),\boldsymbol{u}_1(m,m_1))}{Q_{\boldsymbol{Z}\boldsymbol{S}|\boldsymbol{U}}(\boldsymbol{z},\boldsymbol{s}|\boldsymbol{u}(m))} +1 \right]\\
  &=\Psi_1 + \Psi_2,
\end{align*}
where $\bar{P}$ in $(a)$ is some distribution over the set $[1:|\textbf{C}|]$, $(b)$ follows by Jenson's inequality,
\begin{align*}
  \Psi_1 &=  2^{-n(R_1+R_2)}\sum_{m}\bar{P}(m)\sum_{m_1,m_2}\sum_{\substack{(\boldsymbol{u}(m),\boldsymbol{u}_1(m,m_1),\boldsymbol{u}_2(m,m_2),\boldsymbol{z},\boldsymbol{s})\\\notin T^n_{P_{UU_1U_2SZ},\delta}}}P_{\boldsymbol{Z}\boldsymbol{S}\boldsymbol{U}_1\boldsymbol{U}_2\boldsymbol{U}}(\boldsymbol{z},\boldsymbol{s},\boldsymbol{u}_1(m,m_1),\boldsymbol{u}_2(m,m_2),\boldsymbol{u}(m))\\
  &\quad\quad\quad\quad\quad \log  \left[\frac{2^{-n(R_1+R_2)}P_{\boldsymbol{Z}\boldsymbol{S}|\boldsymbol{U}\boldsymbol{U}_1\boldsymbol{U}_2}(\boldsymbol{z},\boldsymbol{s}|\boldsymbol{u}(m),\boldsymbol{u}_1(m,m_1),\boldsymbol{u}_2(m,m_2))}{Q_{\boldsymbol{Z}\boldsymbol{S}|\boldsymbol{U}}(\boldsymbol{z},\boldsymbol{s}|\boldsymbol{u}(m))}\right.\\
  &\quad\quad\quad\quad\quad\quad\quad\quad\quad\quad +\frac{2^{-nR_2}P_{\boldsymbol{Z}\boldsymbol{S}|\boldsymbol{U}\boldsymbol{U}_1\boldsymbol{U}_2}(\boldsymbol{z},\boldsymbol{s}|\boldsymbol{u}(m),\boldsymbol{u}_2(m,m_2))}{Q_{\boldsymbol{Z}\boldsymbol{S}|\boldsymbol{U}}(\boldsymbol{z},\boldsymbol{s}|\boldsymbol{u}(m))}\\
  &\quad\quad\quad\quad\quad\quad\quad\quad\quad\quad \left. +\frac{2^{-nR_1}P_{\boldsymbol{Z}\boldsymbol{S}|\boldsymbol{U}\boldsymbol{U}_1\boldsymbol{U}_2}(\boldsymbol{z},\boldsymbol{s}|\boldsymbol{u}(m),\boldsymbol{u}_1(m,m_1))}{Q_{\boldsymbol{Z}\boldsymbol{S}|\boldsymbol{U}}(\boldsymbol{z},\boldsymbol{s}|\boldsymbol{u}(m))} +1 \right]
\end{align*}
\begin{align*}
  \Psi_2 &=  2^{-n(R_1+R_2)}\sum_{m}\bar{P}(m)\sum_{m_1,m_2}\sum_{\substack{(\boldsymbol{u}(m),\boldsymbol{u}_1(m_1),\boldsymbol{u}_2(m_2),\boldsymbol{z},\boldsymbol{s})\\\in T^n_{P_{UU_1U_2SZ},\delta}}}P_{\boldsymbol{Z}\boldsymbol{S}\boldsymbol{U}_1\boldsymbol{U}_2\boldsymbol{U}}(\boldsymbol{z},\boldsymbol{s},\boldsymbol{u}_1(m,m_1),\boldsymbol{u}_2(m,m_2),\boldsymbol{u}(m))\\
  &\quad\quad\quad\quad\quad \log  \left[\frac{2^{-n(R_1+R_2)}P_{\boldsymbol{Z}\boldsymbol{S}|\boldsymbol{U}\boldsymbol{U}_1\boldsymbol{U}_2}(\boldsymbol{z},\boldsymbol{s}|\boldsymbol{u}(m),\boldsymbol{u}_1(m,m_1),\boldsymbol{u}_2(m,m_2))}{Q_{\boldsymbol{Z}\boldsymbol{S}|\boldsymbol{U}}(\boldsymbol{z},\boldsymbol{s}|\boldsymbol{u}(m))}\right.\\
  &\quad\quad\quad\quad\quad\quad\quad\quad\quad\quad +\frac{2^{-nR_2}P_{\boldsymbol{Z}\boldsymbol{S}|\boldsymbol{U}\boldsymbol{U}_1\boldsymbol{U}_2}(\boldsymbol{z},\boldsymbol{s}|\boldsymbol{u}(m),\boldsymbol{u}_2(m,m_2))}{Q_{\boldsymbol{Z}\boldsymbol{S}|\boldsymbol{U}}(\boldsymbol{z},\boldsymbol{s}|\boldsymbol{u}(m))}\\
  &\quad\quad\quad\quad\quad\quad\quad\quad\quad\quad \left. +\frac{2^{-nR_1}P_{\boldsymbol{Z}\boldsymbol{S}|\boldsymbol{U}\boldsymbol{U}_1\boldsymbol{U}_2}(\boldsymbol{z},\boldsymbol{s}|\boldsymbol{u}(m),\boldsymbol{u}_1(m,m_1))}{Q_{\boldsymbol{Z}\boldsymbol{S}|\boldsymbol{U}}(\boldsymbol{z},\boldsymbol{s}|\boldsymbol{u}(m))} +1 \right].
\end{align*}
By \cite[Theorem 1.1]{kramer2008topics}
\begin{align*}
  \sum_{\substack{(\boldsymbol{u}(m),\boldsymbol{u}_1(m,m_1),\boldsymbol{u}_2(m,m_2),\boldsymbol{z},\boldsymbol{s})\\\notin T^n_{P_{UU_1U_2SZ},\delta}}}P_{\boldsymbol{Z}\boldsymbol{S}\boldsymbol{U}_1\boldsymbol{U}_2\boldsymbol{U}}(\boldsymbol{z},\boldsymbol{s},\boldsymbol{u}_1(m,m_1),\boldsymbol{u}_2(m,m_2),\boldsymbol{u}(m))\leq 2|\mathcal{U}||\mathcal{U}_1||\mathcal{U}_2||\mathcal{S}||\mathcal{Z}|e^{-2n\delta_1^2\mu_{UU_1U_2SZ}}
\end{align*}
for some $\delta_1>0$, where $\mu_{UU_1U_2SZ}=\min_{(u,u_1,u_2,s,z)\in supp(P_{UU_1U_2SZ})}P_{UU_1U_2SZ}(u,u_1,u_2,s,z)$. Hence, $$\Psi_1\leq 2|\mathcal{U}||\mathcal{U}_1||\mathcal{U}_2||\mathcal{S}||\mathcal{Z}|e^{-2n\delta_1^2\mu_{UU_1U_2SZ}}\log\left( \frac{3}{\mu_{ZS|U}} + 1 \right),$$
where $\mu_{ZS|U}=\min_{(u,s,z)\in supp(P_{ZS|U})}P_{ZS|U}(z,s|u)$.
We further have
\begin{align*}
  \Psi_2 &\leq \log  \left[\frac{2^{-n(R_1+R_2)}P_{\boldsymbol{Z}\boldsymbol{S}|\boldsymbol{U}\boldsymbol{U}_1\boldsymbol{U}_2}(\boldsymbol{z},\boldsymbol{s}|\boldsymbol{u}(m),\boldsymbol{u}_1(m,m_1),\boldsymbol{u}_2(m,m_2))}{Q_{\boldsymbol{Z}\boldsymbol{S}|\boldsymbol{U}}(\boldsymbol{z},\boldsymbol{s}|\boldsymbol{u}(m))}\right.\\
  &\quad\quad\quad\quad\quad +\frac{2^{-nR_2}P_{\boldsymbol{Z}\boldsymbol{S}|\boldsymbol{U}\boldsymbol{U}_1\boldsymbol{U}_2}(\boldsymbol{z},\boldsymbol{s}|\boldsymbol{u}(m),\boldsymbol{u}_2(m,m_2))}{Q_{\boldsymbol{Z}\boldsymbol{S}|\boldsymbol{U}}(\boldsymbol{z},\boldsymbol{s}|\boldsymbol{u}(m))}\\
  &\quad\quad\quad\quad\quad \left. +\frac{2^{-nR_1}P_{\boldsymbol{Z}\boldsymbol{S}|\boldsymbol{U}\boldsymbol{U}_1\boldsymbol{U}_2}(\boldsymbol{z},\boldsymbol{s}|\boldsymbol{u}(m),\boldsymbol{u}_1(m,m_1))}{Q_{\boldsymbol{Z}\boldsymbol{S}|\boldsymbol{U}}(\boldsymbol{z},\boldsymbol{s}|\boldsymbol{u}(m))} +1 \right]\\
  &\leq \log  \left[2^{-n(R_1+R_2)}2^{n(I(U_1,U_2;Z,S|U)+\epsilon)}\right.\\
  &\quad\quad\quad\quad\quad +2^{-nR_2} 2^{n(I(U_2;Z,S|U)+\epsilon)}\\
  &\quad\quad\quad\quad\quad \left. +2^{-nR_1} 2^{n(I(U_1;Z,S|U)+\epsilon)} +1 \right].
\end{align*}
Combining $\Psi_1$ and $\Psi_2$ we have $\mathbb{E}_{\textbf{C}\textbf{C}_1\textbf{C}_2}\left[ D(P_{\boldsymbol{Z}\boldsymbol{S}|\boldsymbol{U}\textbf{C}\textbf{C}_1\textbf{C}_2}||Q_{\boldsymbol{Z}\boldsymbol{S}|\boldsymbol{U}}) \right]\to 0$ exponentially fast.
\section{Proof of lemma \ref{lem: secret key}}\label{app: proof of secret key}
In this section, we prove Lemma \ref{lem: secret key}. The proof is divided into two steps. In Step 1, we prove there exists an equal partition on $\mathcal{C}_\mathcal{V}=\{\mathcal{C}_\mathcal{V}(k_0)\}_{k_0=1}^k$, as required by the Wyner-Ziv Theorem, such that $I(K_0;\boldsymbol{Z})\leq\epsilon$. In Step 2, based on the partition constructed in Step 1, we further construct a secret key mapping such that the constructed secret key $K_1$ satisfies
\begin{align*}
  \mathbb{S}(K_1|\boldsymbol{Z},K_0) \leq \epsilon.
\end{align*}

\subsubsection*{Step 1}
The constructions of the partition and the secret key are based on the following extractor lemma by Csisz\'ar and K\"orner\cite{csiszar2011information}.
\begin{lemma}[Lemma 17.3 in \cite{csiszar2011information}]\label{lem:extractor}
  Let $P$ be a distribution on a finite set $\mathcal{V}$ and $\mathcal{F}$ be a subset of $\mathcal{V}$ such that $\mathcal{F}=\{v\in\mathcal{V}:  P(v)\leq 1/d\}$.
For some positive number $\epsilon$, if $P(\mathcal{F})\geq 1- \varepsilon$, then there exists a randomly selected mapping $G: \mathcal{V}\to\{1,\dots,k\}$ satisfies
\begin{align}
  \label{ineq:extractor}\sum_{m=1}^{k}\left| P(G^{-1}(m))-\frac{1}{k} \right| \leq 3\varepsilon
 \end{align}
with probability at least $1-2ke^{-\varepsilon^2(1-\varepsilon)d/2k(1+\varepsilon)}$.

Moreover, if each $P$ in a family $\mathcal{P}$ satisfies the hypothesis, then the probability that formula \eqref{ineq:extractor} holds for all $P\in\mathcal{P}$ is at least $1-2k|\mathcal{P}|e^{-\varepsilon^2(1-\varepsilon)d/2k(1+\varepsilon)}$.

Thus, the desired mapping exists if $k\log k < \frac{\varepsilon^2(1-\varepsilon)d\log e}{2(1+\varepsilon)\log 2|\mathcal{P}|}$. This realization of $G$ is denoted by $g$.
\end{lemma}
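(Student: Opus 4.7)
The plan is to prove the two assertions sequentially, each by a direct application of the Csisz\'ar-K\"orner extractor (Lemma~\ref{lem:extractor}) to a conditional distribution whose flatness is controlled by joint typicality.

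\emph{Stage 1 (partition.)} Let $L\in[1:2^{n\widetilde R}]$ denote the random index the encoder picks from $\mathcal{C}_\mathcal{V}$ upon observing $\boldsymbol{S}$. Using the Markov chain $V-S-Z$ and the covering lemma (applicable since $\widetilde R=I(V;S)+\tau>I(V;S)$), I would first argue that for every $\delta$-typical $\boldsymbol z$ and with high probability over the random codebook, the conditional distribution $P_{L\mid\boldsymbol Z=\boldsymbol z}$ is essentially concentrated on the $\approx 2^{n(\widetilde R-I(V;Z))}$ codeword indices jointly $\delta$-typical with $\boldsymbol z$ (by Lemma~\ref{lem: typical sequences independent sequences}), and each such index carries mass at most $1/d$ with $d=2^{n(\widetilde R-I(V;Z)-\gamma)}$ for $\gamma>0$ arbitrarily small. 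A reduction to joint types bounds $|\mathcal{P}|$ by a polynomial in $n$, and Lemma~\ref{lem:extractor} then produces a mapping $g_1:[1:2^{n\widetilde R}]\to[1:2^{nR_1}]$ satisfying
\begin{align*}
\sum_{i=1}^{2^{nR_1}}\left|P_{L\mid\boldsymbol Z=\boldsymbol z}(g_1^{-1}(i))-2^{-nR_1}\right|\le 3\varepsilon
\end{align*}
simultaneously for every typical $\boldsymbol z$. The extractor hypothesis $k\log k<\varepsilon^{2}(1-\varepsilon)d\log e/(2(1+\varepsilon)\log 2|\mathcal{P}|)$ reduces after taking logarithms to $R_1<\widetilde R-I(V;Z)-o(1)$, i.e.\ to $I(V;Z)<I(V;Y)-\tau+o(1)$, which holds by our parameter choice. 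Converting the variational bound via Pinsker's inequality and continuity of entropy, and averaging over $\boldsymbol z\sim P_{\boldsymbol Z}$, yields $\mathbb{S}(I\mid\boldsymbol Z)\le\epsilon$ for $I:=g_1(L)$.

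\emph{Stage 2 (secret key $\kappa$.)} Fix a partition from Stage~1 and a sub-codebook $\mathcal{C}_\mathcal{V}(i)$ of size $2^{nR_2}$; let $J\in[1:2^{nR_2}]$ index the selected codeword within it. Because the sub-codebook inherits the i.i.d.\ structure of $\mathcal{C}_\mathcal{V}$, repeating the Stage~1 analysis inside $\mathcal{C}_\mathcal{V}(i)$ shows that $P_{J\mid\boldsymbol Z=\boldsymbol z,I=i}$ is essentially uniform on the $\approx 2^{n(I(V;Y)-I(V;Z)-\tau)}$ inner indices whose codeword is jointly typical with $\boldsymbol z$. Invoke Lemma~\ref{lem:extractor} a second time, now with $k=\lfloor 2^{n(I(V;Y)-I(V;Z)-\tau')}\rfloor$ for some $\tau'>\tau$ and with the family again indexed via types by $(\boldsymbol z,i)$. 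This produces $\kappa:[1:2^{nR_2}]\to[1:k]$ such that
\begin{align*}
\sum_{m=1}^{k}\left|P_{\kappa(J)\mid\boldsymbol Z,I}(m\mid\boldsymbol z,i)-1/k\right|\le 3\varepsilon
\end{align*}
for all typical $(\boldsymbol z,i)$, and Pinsker plus entropy continuity then give $\mathbb{S}(\kappa(J)\mid\boldsymbol Z,I)\le\epsilon$.

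The main obstacle is making the extractor succeed simultaneously over the exponentially many typical $\boldsymbol z$. The resolution, standard for this kind of lemma, is to replace the family $\mathcal{P}$ of conditional distributions indexed by individual $\boldsymbol z$ with one indexed by joint types in $T^n_{P_{VZ},\delta}$: all $\boldsymbol z$ of a fixed type induce essentially the same distribution on $L$ up to a permutation of codeword labels, so $|\mathcal{P}|$ is polynomial in $n$ and $\log|\mathcal{P}|=O(\log n)$. This is what allows the extractor hypothesis to collapse to the single strict rate inequality $I(V;Z)<I(V;Y)-\tau$ that our parameter choice builds in. Technical care is also needed in chaining the two stages: Stage~2 must be carried out on the intersection of the codebook event on which Stage~1 succeeded and the Stage-2 extractor event, with the two small-probability error terms combined by a union bound before taking $n\to\infty$.
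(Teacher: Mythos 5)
Your proposal does not prove the statement in question. The statement is the extractor lemma itself (Lemma~17.3 of Csisz\'ar--K\"orner): a purely probabilistic claim that a uniformly random coloring $G:\mathcal{V}\to\{1,\dots,k\}$ nearly balances the mass of $P$ across the $k$ color classes with the stated failure probability, provided $P$ places mass at least $1-\varepsilon$ on the set $\mathcal{F}$ of ``small'' atoms. Your argument instead takes this lemma as a black box (``each by a direct application of the Csisz\'ar--K\"orner extractor'') and uses it to establish the downstream secret-key result of the paper. That is circular relative to the assigned task: what you have sketched is essentially the proof of the paper's Lemma~2 (the Wyner--Ziv partition and key mapping), not of the extractor lemma. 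In the paper the extractor lemma carries no proof at all --- it is imported by citation --- so the correct target here is the concentration statement itself.

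A proof of the actual statement would go as follows: for each fixed $m$, the quantity $P\bigl(G^{-1}(m)\cap\mathcal{F}\bigr)=\sum_{v\in\mathcal{F}}P(v)\,\mathbf{1}[G(v)=m]$ is a sum of independent random variables, each bounded by $1/d$ by the definition of $\mathcal{F}$, with mean $P(\mathcal{F})/k$. A Bernstein/Chernoff bound gives that this sum deviates from its mean by more than a $\varepsilon$-fraction with probability at most $2e^{-\varepsilon^{2}(1-\varepsilon)d/2k(1+\varepsilon)}$; a union bound over the $k$ colors yields the factor $2k$, and the residual mass $P(\mathcal{F}^{c})\le\varepsilon$ accounts for the constant $3$ in $3\varepsilon$. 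The ``moreover'' part is a further union bound over $|\mathcal{P}|$, and the final existence claim is just the condition under which $2k|\mathcal{P}|e^{-\varepsilon^{2}(1-\varepsilon)d/2k(1+\varepsilon)}<1$. None of this appears in your write-up. Your observations about reducing the family $\mathcal{P}$ to joint types and about chaining the two extraction stages are sensible remarks about how the lemma is \emph{applied} in Appendix~C, but they belong to the proof of the secret-key lemma, not to the proof of this one.
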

The proof in this step uses the technique in the proof of Lemma 17.5 and Theorem 17.21 in \cite{csiszar2011information} where the key idea is to construct set $\mathcal{F}$ and distribution family $\mathcal{P}$ satisfying conditions in Lemma \ref{lem:extractor}. We set parameters as follows. For some small positive $\tau$,
\begin{equation}
  \begin{split}
    d = I(V;S) - I(V;Y),\\
    k = I(V;S) - I(V;Y) - \tau \\
  \mathcal{P} = P_{V^n} \bigcup \{P^n_{V|\boldsymbol{z}}:\boldsymbol{z}\notin \mathcal{E} \},
  \end{split}
\end{equation}
where $\mathcal{E}$ is a subset of $\mathcal{Z}^n$ with exponentially small probability and will be defined later, and $P_{V^n}$ is a uniform distribution on the codebook $\mathcal{C}_{\mathcal{V}}$.
Let $\epsilon,\sigma,\delta,\zeta$ be positive real numbers such that $\zeta<\delta<\sigma$. Define set
\begin{align*}
  \mathcal{T}_1 = \{\boldsymbol{s}: T^n_{P_{SV},\delta}[\boldsymbol{s}]\neq\emptyset\}
\end{align*}
and let $f$ be a function on $\mathcal{T}_1$ such that $(f(\boldsymbol{S}),\boldsymbol{S})\in T_{P_{SV},\delta}, f(\boldsymbol{S})\in\mathcal{C}_{\mathcal{V}}$. Further extend $f$ to function on $\mathcal{S}^n$ by setting $f(\boldsymbol{s})$ to some fixed sequence in $\mathcal{V}^n$ for $\boldsymbol{s}\notin\mathcal{T}_1$. By Lemma \ref{lem: typical sequences independent sequences}, such a function $f$ always exists by setting the size of $|\mathcal{C}_{\mathcal{V}}|=2^{n(I(V;S)+\tau)}$. Define set
\begin{align*}
  \mathcal{T}_2 = \{(\boldsymbol{s},\boldsymbol{z}):\boldsymbol{s}\in \mathcal{T}_1,(f(\boldsymbol{s}),\boldsymbol{s},\boldsymbol{z})\in T^n_{P_{VSZ},\sigma}\}
\end{align*}
and let $\chi$ be the indicator function of set $\mathcal{T}_2$. Then, the joint distribution of $(\boldsymbol{v},\boldsymbol{s},\mu)$ is given by 
\begin{align}
  &P(\boldsymbol{v},\boldsymbol{z},\mu) = Pr\{f(\boldsymbol{S})=\boldsymbol{v},\boldsymbol{Z}=\boldsymbol{z},\chi(\boldsymbol{S},\boldsymbol{Z})=\mu\}\notag\\
  \label{neq: secret key lemma proof 1}&=\sum_{\boldsymbol{s}:f(\boldsymbol{s})=\boldsymbol{v},\chi(\boldsymbol{s},\boldsymbol{z})=\mu}P^n_{SZ}(\boldsymbol{s},\boldsymbol{z})
\end{align}
Define set $\mathcal{B}=\{(\boldsymbol{v},\boldsymbol{z},1):\boldsymbol{v}\in\mathcal{C}_{\mathcal{V}},\boldsymbol{z}\in T^n_{P_Z,\zeta},T^n_{P_{VSZ},\sigma}[\boldsymbol{v},\boldsymbol{z}]\neq\emptyset\}$. It follows that $P(\mathcal{B})\geq P^n_{SZ}(\mathcal{T}_2) - P^n_{Z}((T^n_{P_Z,\zeta})^c)\geq 1 - \eta^2$ for some exponentially small number $\eta$ and 
\begin{align*}
  |\mathcal{B}| &\overset{(a)}{\leq} \sum_{\boldsymbol{z}\in T^n_{P_Z,\zeta}} \left|\{\boldsymbol{v}:\boldsymbol{v}\in T^n_{P_{VZ},\sigma|\mathcal{S}|}[\boldsymbol{z}]\}\right|\\
  &\leq 2^{n(H(Z)+\epsilon)}2^{n(I(V;S) + \tau -I(V;Z) + \epsilon)},
\end{align*}
where $(a)$ follows by Lemma \ref{lem: typicality}.
For any $(\boldsymbol{v},\boldsymbol{z},1)\in\mathcal{B}$, by \eqref{neq: secret key lemma proof 1} and the definition of $\mathcal{T}_2$ we have
\begin{align*}
  P(\boldsymbol{v},\boldsymbol{z},1) &\leq \sum_{\boldsymbol{s}\in T^n_{P_{VSZ},\sigma}[\boldsymbol{v},\boldsymbol{z}]} P^n_{SZ}(\boldsymbol{s},\boldsymbol{z})\\
  &\leq 2^{n(H(S|VZ)+\epsilon)}2^{-n(H(SZ)-\epsilon)}<\frac{1}{\alpha|\mathcal{B}|},
\end{align*}
where $\alpha=2^{-n(5\epsilon + \tau)}$. Define $\mathcal{B}_{\boldsymbol{z},1}:= \{\boldsymbol{v} : (\boldsymbol{v},\boldsymbol{z},1)\in\mathcal{B}\}$. By Lemma \ref{lem: typicality 2}, $\boldsymbol{z}\in T^n_{P_Z,\zeta}$ implies $T^n_{P_{VZ},\delta}[\boldsymbol{z}]\neq \emptyset$, and $\boldsymbol{v}\in T^n_{P_{VZ},\delta}[\boldsymbol{z}]$ is a sufficient condition of $T^n_{P_{VSZ},\sigma}[\boldsymbol{v},\boldsymbol{z}]\neq\emptyset$. Thus, the size of $\mathcal{B}_{\boldsymbol{z},1}$ can be lower bounded by
\begin{align*}
  \left| \mathcal{B}_{\boldsymbol{z},1} \right| \geq \left| \{\boldsymbol{v}: \boldsymbol{v}\in T^n_{P_{VZ},\delta}[\boldsymbol{z}],\boldsymbol{v}\in\mathcal{C}_{\mathcal{V}}\} \right| \geq 2^{n(I(V;S)-I(V;Z)+\tau-\epsilon)}.
\end{align*}
Now define $\mathcal{D}:= \{\boldsymbol{z}: P_{Z}^n(\boldsymbol{z})\geq \frac{\alpha^2|\mathcal{B}_{\boldsymbol{z},1}|}{|\mathcal{B}|}\}$ and $\mathcal{B}':= \mathcal{B}\bigcap \{\mathcal{C}_{\mathcal{V}} \times \mathcal{D}\}$.  By our setting that $I(V;Y)-I(V;Z)>\tau_1>2\tau+16\epsilon>0$, for any $(\boldsymbol{v},\boldsymbol{z},1)\in \mathcal{B}'$ with $1$ being the value of the indicator function $\chi$,
\begin{align*}
  P_{\boldsymbol{V}|\boldsymbol{Z},1}(\boldsymbol{v}|\boldsymbol{z},1)\leq \frac{P(\boldsymbol{v},\boldsymbol{z},1)}{P_{\boldsymbol{Z},1}}(\boldsymbol{z},1)\leq \frac{1}{\alpha^3\min|\mathcal{B}_{\boldsymbol{z},1}|} < \frac{1}{d}
\end{align*}
and
\begin{align*}
  P_{\boldsymbol{V}\boldsymbol{Z}1}(\mathcal{B}')\geq P_{\boldsymbol{V}\boldsymbol{Z}1}(\mathcal{B})-P_{\boldsymbol{Z}1}(\mathcal{D}^c) \geq P_{\boldsymbol{V}\boldsymbol{Z}1}(\mathcal{B})-\alpha^2 \geq 1-\eta_1^2
\end{align*}
for exponentially small number $\eta_1$. Now for each $\boldsymbol{z}$, denote $\mathcal{B}_{\boldsymbol{z},1}'$ the set $\{\boldsymbol{v}:(\boldsymbol{v},\boldsymbol{z},1)\in \mathcal{B}'\}$ and $\mathcal{E}:=\{(\boldsymbol{z},1): P_{\boldsymbol{V}|\boldsymbol{Z},1}(\mathcal{B}_{\boldsymbol{z},1}'|\boldsymbol{z},1)\leq 1-\eta_1\}$, we have $P_{\boldsymbol{Z},1}(\mathcal{E})<\eta_1$. For each $\boldsymbol{z}\notin\mathcal{E}$, the distribution $P_{\boldsymbol{V}|\boldsymbol{Z},1}(\cdot|\boldsymbol{z},1)$ satisfies the condition in Lemma \ref{lem:extractor} with set $\mathcal{F}=\mathcal{B}_{\boldsymbol{z},1}'(\mathcal{F} \text{is defined in Lemma \ref{lem:extractor}})$ and hence there exists a mapping satisfying \eqref{ineq:extractor} for all $P_{\boldsymbol{V}|\boldsymbol{Z},1}(\cdot|\boldsymbol{z},1),(\boldsymbol{z},1)\notin \mathcal{E}$.

For the uniform distribution, we define the set $\mathcal{F}=\mathcal{C}_{\mathcal{V}}$. Since $P_{\boldsymbol{V}}(\boldsymbol{v})=\frac{1}{|\mathcal{C}_{\mathcal{V}|}}<\frac{1}{d}$ for any $\boldsymbol{v}\in\mathcal{C}_{\mathcal{V}}$ and $P_{\boldsymbol{V}}(\mathcal{C}_{\mathcal{V}})=1$, the conditions in Lemma \ref{lem:extractor} are also satisfied. The constructed mapping satisfies
\begin{align}
  \label{ineq: partition 1}&\sum_{m=1}^{k}\left| P_{\boldsymbol{V}}(g^{-1}(m))-\frac{1}{k} \right| \leq 3\varepsilon,\\
  \label{ineq: partition 2}&\sum_{m=1}^{k}\left| P_{\boldsymbol{V}|\boldsymbol{Z},1}(g^{-1}(m)|\boldsymbol{z},1)-\frac{1}{k} \right| \leq 3\varepsilon \;\; \text{for $(\boldsymbol{z},1)\notin\mathcal{E}$}
\end{align}
for some exponentially small number $\varepsilon$. By \eqref{ineq: partition 2} and the definition of the secure index $\mathbb{S}$, it follows that
\begin{align}
  \label{ineq: secret result of wyner-ziv index}I(g(\boldsymbol{V});\boldsymbol{Z})\leq\mathbb{S}(g(\boldsymbol{V})|\boldsymbol{Z}) \leq \varepsilon'
\end{align}
for some exponentially small number $\varepsilon'$.
The partition on the codebook arises from the mapping $g$. A codeword $\boldsymbol{v}$ belongs to bin $k$ if $g(\boldsymbol{v})=k$. Notice here by \eqref{ineq: partition 1}, the partition is not necessarily an equi-partition. Now let $g(\boldsymbol{V})$ be a random variable on $\mathcal{C}_{\mathcal{V}}$ following a nearly uniform distribution defined by the mapping $g$.

\begin{lemma}{\cite[Lemma 4]{he2016strong}}
  For any given codebook $\mathcal{C}$, if the function $g: \mathcal{C} \to [1:k]$ satisfies (\ref{ineq: partition 1}), there exists a partition  $\{\mathcal{C}_{m}\}_{m=1}^{k}$on $\mathcal{C}$ such that
  \begin{enumerate}
    \item \(|\mathcal{C}_{m}| = \frac{|\mathcal{C}|}{k}\) for all \(m \in [1: k]\), 
    \item \(H(K_0|g(\boldsymbol{V})) < 4\sqrt{\epsilon} \log{k}\),
  \end{enumerate}
where $K_0=g^{equal}(\boldsymbol{V})$ is the index of the bin containing $\boldsymbol{V}$, and $g^{equal}$ is the new mapping inducing the equal partition.
\end{lemma}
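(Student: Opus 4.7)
The plan is to construct the equal partition $g^{equal}$ by a rebalancing procedure on the partition induced by $g$, and then bound $H(K_0\mid g(\boldsymbol{V}))$ by a Fano-type argument using the fact that the rebalancing changes the bin assignment on only a small fraction of $\mathcal{C}$.

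First I would translate condition (\ref{ineq: partition 1}) into a bound on the size imbalance of the bins. Let $n_m:=|g^{-1}(m)|$. Since $\boldsymbol{V}$ is uniform on $\mathcal{C}$, $P_{\boldsymbol{V}}(g^{-1}(m))=n_m/|\mathcal{C}|$, and (\ref{ineq: partition 1}) yields $\sum_{m=1}^{k}\left|n_m-\tfrac{|\mathcal{C}|}{k}\right|\le 3\epsilon |\mathcal{C}|$. Splitting this sum into the surplus part $D^{+}=\sum_{m:n_m> |\mathcal{C}|/k}(n_m-|\mathcal{C}|/k)$ and the deficit part $D^{-}=\sum_{m:n_m< |\mathcal{C}|/k}(|\mathcal{C}|/k-n_m)$, the identity $\sum_m n_m=|\mathcal{C}|$ forces $D^{+}=D^{-}\le \tfrac{3}{2}\epsilon|\mathcal{C}|$.

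Then I would define $g^{equal}$: for every overcrowded bin $m$ (i.e. $n_m>|\mathcal{C}|/k$), choose $n_m-|\mathcal{C}|/k$ of its elements and reassign them arbitrarily to undercrowded bins until each bin contains exactly $|\mathcal{C}|/k$ elements. This is possible because $D^{+}=D^{-}$. All remaining elements keep $g^{equal}(\boldsymbol{v})=g(\boldsymbol{v})$. Property 1 of the lemma is automatic by construction, and the set $\mathcal{M}$ of \emph{moved} elements satisfies $|\mathcal{M}|=D^{+}\le \tfrac{3}{2}\epsilon|\mathcal{C}|$.

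To bound the conditional entropy, let $E:=\mathbf{1}\{K_0\neq g(\boldsymbol{V})\}=\mathbf{1}\{\boldsymbol{V}\in\mathcal{M}\}$. Since $\boldsymbol{V}$ is uniform on $\mathcal{C}$,
\begin{equation*}
\Pr\{E=1\}=\frac{|\mathcal{M}|}{|\mathcal{C}|}\le \tfrac{3}{2}\epsilon.
\end{equation*}
Conditional on $E=0$ and $g(\boldsymbol{V})=m$, the value of $K_0$ is deterministically equal to $m$, so the standard Fano-type inequality gives
\begin{equation*}
H(K_0\mid g(\boldsymbol{V}))\le H(E)+\Pr\{E=1\}\log k \le h(\tfrac{3}{2}\epsilon)+\tfrac{3}{2}\epsilon\log k,
\end{equation*}
which for sufficiently small $\epsilon$ is upper bounded by $4\sqrt{\epsilon}\log k$ (using $h(p)\le 2\sqrt{p}\log k$ for $p\le \tfrac{3}{2}\epsilon$ and $k\ge 2$). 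This establishes Property 2.

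The only delicate point is the passage from the entropy-style deviation bound $h(\tfrac{3}{2}\epsilon)+\tfrac{3}{2}\epsilon\log k$ to the clean expression $4\sqrt{\epsilon}\log k$ given in the lemma; everything else is a direct counting argument. That conversion is a matter of choosing a uniform upper bound on $h(\cdot)$ that dominates both contributions and is valid in the regime of small $\epsilon$, so I do not anticipate any structural obstacle beyond routine inequality manipulation.
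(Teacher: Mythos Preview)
Your proof is correct and follows the natural route for this kind of statement: translate the near-uniformity bound (\ref{ineq: partition 1}) into a total surplus/deficit of at most $\tfrac{3}{2}\epsilon|\mathcal{C}|$ codewords, rebalance by moving that many codewords, and then apply a Fano-type bound via the indicator $E=\mathbf{1}\{K_0\neq g(\boldsymbol{V})\}$. The final numerical step also goes through: for $\epsilon\le 4/9$ one has $\tfrac{3}{2}\epsilon\le\sqrt{\epsilon}$, and the standard inequality $h(p)\le 2\sqrt{p(1-p)}$ gives $h(\tfrac{3}{2}\epsilon)\le\sqrt{6\epsilon}$, so $h(\tfrac{3}{2}\epsilon)+\tfrac{3}{2}\epsilon\log k\le(\sqrt{6}+1)\sqrt{\epsilon}\log k<4\sqrt{\epsilon}\log k$ when $k\ge 2$.

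Note, however, that the paper does not prove this lemma at all: it is quoted verbatim from \cite{he2016strong} and used as a black box in Appendix~\ref{app: proof of secret key}. So there is no in-paper proof to compare against. Your argument is precisely the standard one for this result (and is essentially the argument in the cited reference): the only implicit assumption you should make explicit is that $k$ divides $|\mathcal{C}|$, which is needed for Property~1 to make sense and which holds in the paper's setting since both quantities are chosen as powers of two.
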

Now according to \eqref{ineq: secret result of wyner-ziv index} and the above lemma, we have
\begin{align*}
  I(K_0;\boldsymbol{Z}) &\leq I(K_0,g(\boldsymbol{V});\boldsymbol{Z})\\
  &=I(g(\boldsymbol{V});\boldsymbol{Z}) + I(K_0,\boldsymbol{Z}|g(\boldsymbol{V}))\\
  &=I(g(\boldsymbol{V});\boldsymbol{Z}) + H(K_0|g(\boldsymbol{V}))\\
  &\leq 3\varepsilon + 4\sqrt{\varepsilon} \log{k}.
\end{align*}
Note that $\varepsilon$ is an exponentially small number and hence, the information leakage is also exponentially small.

\subsubsection*{Step 2} The proof in Step 2 is almost the same as that in Step 1 and is in fact the direct part proof of Theorem 17.21 in \cite{csiszar2011information}. Here we replace the function $f$ in Step 1 by a pair of function $\phi_1 \times \phi_2: \mathcal{S}\to \mathcal{K}_0 \times \mathcal{K}_1$ on $\mathcal{T}_1$ such that $(\boldsymbol{V}(\phi_1(\boldsymbol{S}),\phi_2(\boldsymbol{S})),\boldsymbol{S})\in T^n_{P_{VS},\delta}$, where $\phi_1(\boldsymbol{S})=g^{equal}(f(\boldsymbol{S}))$ and $g^{equal}$ is the final mapping we construct in Step 1. The result of $\phi_2$ is the index of $\boldsymbol{V}$ in sub-bin $\mathcal{C}_{\mathcal{V}}(\phi_1(\boldsymbol{S}))$. Rewriting the set $\mathcal{B}$ as $\mathcal{B}=\{(k_0,k_1,\boldsymbol{z},1): k_0\in\mathcal{K}_0,k_1\in\mathcal{K}_1,\boldsymbol{z}\in T^n_{P_Z,\zeta},T^n_{P_{VSZ},\sigma}[\boldsymbol{v}(k_0,k_1),\boldsymbol{z}]\neq\emptyset\}$ and set $\mathcal{B}_{\boldsymbol{z},1}$ as $\mathcal{B}_{k_0,\boldsymbol{z},1}=\{k_1: (k_0,k_1,\boldsymbol{z},1)\in\mathcal{B}\}$, the remaining proof is the same as Step 1 and can be found in \cite[Proof of Theorem 17.21]{csiszar2011information}. The finally constructed mapping $\kappa$ satisfies
\begin{align*}
  \mathbb{S}(\kappa(\phi_2(\boldsymbol{S}))|\boldsymbol{Z},\phi_1(\boldsymbol{S}))\leq\epsilon
\end{align*}
and the proof is completed.

\section{proof of theorem \ref{the: outer bound}}\label{app: proof of outer bound}
In this section, we prove the outer bounds of the channel model. Starting with Sender 1 and Fano's inequality,
\begin{align*}
  nR_1 &= H(M_1)\\
  &=H(M_1|M_2)\\
  &\overset{(a)}{\leq} I(M_1;\boldsymbol{Y}|M_2) - I(M_1;\boldsymbol{Z}|M_2) + \delta\\
  &= \sum_{i=1}^n I(M_1;Y_i|M_2,Y_{i+1}^n) - I(M_1;Z_i|M_2,Z^{i-1}) + \delta\\
  &\overset{(b)}{=}\sum_{i=1}^n I(M_1,Z^{i-1};Y_i|M_2,Y_{i+1}^n) - I(M_1,Y_{i+1}^n;Z_i|M_2,Z^{i-1}) + \delta\\
  &\overset{(c)}{=}\sum_{i=1}^n I(M_1;Y_i|M_2,Y_{i+1}^n,Z^{i-1}) - I(M_1;Z_i|M_2,Y_{i+1}^n,Z^{i-1}) + \delta\\
  &\overset{(d)}{=}\sum_{i=1}^n I(U_{1i};Y_i|U_{2i},U_i) - I(U_{1i};Z_i|U_{2i},U_i) + \delta\\
  &\overset{(e)}{=} n(I(U_{1};Y|U_{2},U) - I(U_{1};Z|U_{2},U) + \delta),
\end{align*} 
where $(a)$ follows from the fact that $I(M_1;\boldsymbol{Z}|M_2)\leq I(M_1,M_2;\boldsymbol{Z})\leq \epsilon$, $(b)$ and $(c)$ follows by Csisz\'ar's sum identity, $(d)$ follows by setting $U_i=(Y_{i+1}^n,Z^{i-1}),U_{ji}=(M_j,U_i)$ for $j=1,2$, $(e)$ follows by introducing a time-sharing random variable $Q$ and setting $U=(U_Q,Q),U_1=U_{1Q},U_2=U_{2Q},Y=Y_Q,Z=Z_Q$. The bounds of $R_2$ and $R_1+R_2$ follow similarly and we have
\begin{align*}
  R_2 &\leq I(U_{2};Y|U_{1},U) - I(U_{2};Z|U_{1},U)\\
  R_1 + R_2 &\leq I(U_{1},U_{2};Y|U) - I(U_{1},U_{2};Z|U),
\end{align*}
with random variables $U,U_1,U_2,X_1,X_2,S,Y,Z$ satisfying a Markov chain relation $(U,U_1,U_2)-(X_1,X_2,S)-(Y,Z)$.

\section{Proof of Region $\mathcal{R}^{CSI-E}_{D,11}$ in theorem \ref{the: inner bound of degraded message sets}}\label{app: proof of degraded message sets}
In this section, we give the coding scheme for SD-MAWC with degraded message sets. Compared with the coding scheme in Section \ref{sec: coding scheme for R11}, secrecy of Sender 2 is not required and only Sender 1's codebook is partitioned as in \emph{Message Codebook Generation}, Section \ref{sec: coding scheme for R11}. The common message is encoded together with the Wyner-Ziv index, and the constructed secret key is only used to encrypt Sender 1's private message. Given $P_S$ and $P_{YZ|X_1X_2S}$, consider the positive real numbers $\widetilde{R}_1,R_0,R_{10},R_{11},R_{K_1}$ such that 
\begin{align*}
  &\widetilde{R}_1 \leq I(U_1;Y|U,U_2,V),\\
  &R_0 + \widetilde{R}_1 \leq I(V,U,U_1,U_2;Y)-I(V;S),\\
  &\widetilde{R}_1 - R_{10} > I(U_1;Z|U,U_2,S),\\
  &R_{K_1}=R_{11} \leq I(V;Y) - I(V;Z,U,U_2)
\end{align*}
under fixed joint distribution $$P_{S}P_{V|S}P_UP_{U_1|U}P_{U_2|U}P_{X_1|UU_1S}P_{X_2|UU_2S}P_{YZ|X_1X_2S}.$$

\emph{Key Message Codebook Generation: }  In each block $1\leq b \leq B$, the sender generates a codebook $\mathcal{C}_{K_b}=\{\boldsymbol{v}(l)\}_{l=1}^{2^{nR_{K}}}$ consisting of $2^{nR_K}$ codewords, each i.i.d. generated according to distribution $P_V$ such that $P_V(v)=\sum_{s\in\mathcal{S}}q(s)P_{V|S}(v|s)$ for any $v\in\mathcal{V}$ and $R_K=I(V;S)+\tau$. Partition the codebook $\mathcal{C}_{K_b}$ into $2^{nR_{K_0}}$ sub-codebooks $\mathcal{C}_{K_b}(k_{0,b})$, where $k_{0,b}\in[1:2^{nR_{K_0}}]$ and $R_{K_0}=I(V;S)-I(V;Y)+2\tau$. Let $\mathcal{T}$ be the index set of codewords in each subcodebook $\mathcal{C}_{K_b}(k_{0,b})$ such that $|\mathcal{T}|=|\mathcal{C}_{K_b}(k_{0,b})|$ for any $k_{0,b}\in[1:2^{nR_{K_0}}]$. For each codebook $\mathcal{C}_{K_b},$ construct a secret key mapping $\kappa:\mathcal{T}\to [1:2^{nR_{K_1}}].$ Denote the resulted secret key by $K_{1,b}$.

\emph{Common Message Codebook Generation: }  For each block $b$, generate common message codebook $\mathcal{C}_b=\{\boldsymbol{u}(m_A,m_0)\}_{m_A=1,m_0=1}^{2^{nR_A},2^{nR_0}}$ i.i.d. according to distribution $P_{U}$, where $R_A=I(V;S)-I(V;Y)+3\tau$, $m_A$ is used to convey the Wyner-Ziv index and $m_0$ is the common message.

\emph{Message Codebook Generation:} 
\begin{enumerate}
  \item Block $b, b\in[1:B]$. For each pair $(m_A,m_0)$, generate codebook $\mathcal{C}_{1,b}(m_A,m_0)=\{\boldsymbol{u}_1(m_A,m_0,l)\}_{l=1}^{2^{n\widetilde{R}_1}}$ containing $2^{n\widetilde{R}_1}$ codewords, each i.i.d. generated according to distribution $P^n_{U_1|U}(\cdot | \boldsymbol{u}(m_A,m_0))$. Partition each $\mathcal{C}_{1b}(m_A,m_0)$ into two-layer subcodebooks $\mathcal{C}_{1,b}(m_A,m_0,m_{10},m_{11})=\{\boldsymbol{u}_1(m_A,m_0,m_{10},m_{11},l_1)\}_{l_1=1}^{2^{nR_1'}}$ as in \emph{Message Codebook Generation}, Section \ref{sec: coding scheme for R11}, where $m_{10}\in[1:2^{nR_{10}}],m_{11}\in[1:2^{nR_{11}}],R_1'=\widetilde{R}_1 - R_{10} - R_{11}$. Generate codebook $\mathcal{C}_{2,b}=\{\boldsymbol{u}_2(m_A,m_0)\}_{m_A=1,m_0=1}^{2^{nR_A},2^{nR_0}}$ with each codeword generated by $P^n_{U_2|U}(\boldsymbol{u}_2(m_A,m_0)|\boldsymbol{u}(m_A,m_0))$. 
  \item Block $B+1$. For $k=1,2$, generate codebooks $\mathcal{C}_{k,B+1}$ as above with length $\widetilde{n}$ defined in \eqref{eq: codeword length of last block}.
\end{enumerate}


The above codebooks are all generated randomly and independently. Denote the set of random codebooks in each block $b$ by $\bar{\textbf{C}}_b$.

\emph{Encoding: } 
\begin{enumerate}
  \item Block 1. Setting $m_{A,1}=m_{0,1}=m_{10,1}=m_{11,1}=1$, Sender 1 selects $l_1\in[1:2^{nR_1'}]$ uniformly at random and Sender 2 selects codeword $\boldsymbol{u}_2(1,1)$. The codeword $\boldsymbol{x}_1$ is generated by $(\boldsymbol{u}(1,1),\boldsymbol{u}_1(1,1,1,1,l_1),\boldsymbol{s})$ according to $P^n_{X_1|UU_1S}$ and $\boldsymbol{x}_2$ is generated by $(\boldsymbol{u}(1,1),\boldsymbol{u}_2(1,1),\boldsymbol{s})$ according to $P^n_{X_2|UU_2S}$.
  \item Blocks $b\in[2:B]$. Upon observing the state sequence $\boldsymbol{s}_{b-1}$ in the last block, the encoders find a sequence $\boldsymbol{v}_{b-1}$ such that $(\boldsymbol{s}_{b-1},\boldsymbol{v}_{b-1})\in T^n_{P_{SV},\delta}$ and set $m_{A,b}=k_{0,b}$, where $k_{0,b}$ is the index of subcodebook $\mathcal{C}_{k_b}(k_{0,b})$ containing $\boldsymbol{v}_{b-1}$. We also write sequence $\boldsymbol{v}_{b-1}$ as $\boldsymbol{v}(k_{0,b},t_b)$ if $\boldsymbol{v}_{b-1}$ is the $t_b$ th sequence in sub-codebook $\mathcal{C}_{K_b}(k_{0,b})$. To transmit message $(m_{0,b},m_{1,b})$, the encoder $1$ splits $m_{1,b}$ into two independent parts $(m_{10,b},m_{11,b})$ and compute $c_{11,b} = m_{11,b} \oplus k_{1,b} \pmod{2^{nR_{11}}}$, where $k_{1,b}=\kappa(t_{b})$. The encoder selects $l_{1,b}\in[1:2^{nR_1'}]$ uniformly at random and generates the codeword $\boldsymbol{x}_1$ by
  \begin{align*}
    P^n_{X_1|UU_1S}(\boldsymbol{x}_1|\boldsymbol{u}(k_{0,b},m_{0,b}),\boldsymbol{u}_1(k_{0,b},m_{0,b},m_{10,b},c_{11,b},l_{1,b}),\boldsymbol{s}_b)=\prod_{i=1}^n P_{X_1|UU_1S}(x_{1i}|u_i,u_{1i},s_{i,b}).
  \end{align*}
  The codeword $\boldsymbol{x}_2$ for Sender 2 is generated by 
  \begin{align*}
    P^n_{X_2|UU_2S}(\boldsymbol{x}_2|\boldsymbol{u}(k_{0,b},m_{0,b}),\boldsymbol{u}_2(k_{0,b},m_{0,b}),\boldsymbol{s}_b)=\prod_{i=1}^n P_{X_2|UU_2S}(x_{2i}|u_i,u_{2i},s_{i,b}).
  \end{align*}
  \item Block $B+1$. Upon observing the state sequence $\boldsymbol{s}_{B}$ in the last block, the encoders find a sequence $\boldsymbol{v}(k_{0,B+1},t_{B+1})$ such that $(\boldsymbol{s}_{B},\boldsymbol{v}(k_{0,B+1},t_{B+1}))\in T^n_{P_{SV},\delta}$. The encoders then set $m_{0,B+1}=m_{10,B+1}=m_{11,B+1}=1$ and generate codeword $\boldsymbol{x}_1$ and $\boldsymbol{x}_2$ according to distributions
  \begin{align*}
    P^n_{X_1|UU_1S}(\boldsymbol{x}_1|\boldsymbol{u}(k_{0,B+1},1),\boldsymbol{u}_1(k_{0,B+1},1,1,1,1),\boldsymbol{s}_{B+1})=\prod_{i=1}^n P_{X_1|UU_1S}(x_{1i}|u_i,u_{1i},s_{i,B+1}),\\
    P^n_{X_2|UU_2S}(\boldsymbol{x}_2|\boldsymbol{u}(k_{0,B+1},1),\boldsymbol{u}_2(k_{0,B+1},1),\boldsymbol{s}_{B+1})=\prod_{i=1}^n P_{X_2|UU_2S}(x_{2i}|u_{i},u_{2i},s_{i,B+1}).
  \end{align*}
\end{enumerate}

\emph{Backward Decoding: } 
\begin{enumerate}
  \item Block $B+1$. The decoder tries to find a unique $\hat{k}_{0,B+1}$ such that $(\boldsymbol{u}_1(\hat{k}_{0,B+1},1,1,1,1),\boldsymbol{u}_2(\hat{k}_{0,B+1},1),\boldsymbol{y}_{B+1})\in T^n_{P_{U_1U_2Y},\delta}$ for some $\delta>0$.
  \item Blocks $b\in[1:B]$. The decoder has the knowledge about $\hat{k}_{0,b+1}$ from the last block. It tries to find a unique $\boldsymbol{v}_{b}=\boldsymbol{v}(\hat{k}_{0,b+1},\hat{t}_{b+1})$ such that $(\boldsymbol{v}(\hat{k}_{0,b+1},\hat{t}_{b+1}),\boldsymbol{y}_b)\in T^n_{P_{VY},\delta}$. Now the decoder computes $\hat{m}_{11,b+1}=\hat{c}_{11,b+1} \ominus \hat{k}_{1,b+1} \pmod{2^{nR_{11}}}$, where $\hat{k}_{1,b+1}=\kappa(\hat{t}_{b+1})$. 
  In Block $b, b\in[2:B]$, with the help of $\boldsymbol{v}(\hat{k}_{0,b+1},\hat{t}_{b+1})$, the decoder looks for a unique tuple $(\hat{k}_{0,b},\hat{m}_{0,b},\hat{m}_{10,b},\hat{c}_{11,b},\hat{l}_{1,b})$ such that 
  \begin{align*}
    (\boldsymbol{v}_b,\boldsymbol{u}(\hat{k}_{0,b},\hat{m}_{0,b}),\boldsymbol{u}_1(\hat{k}_{0,b},\hat{m}_{0,b},\hat{m}_{10,b},\hat{c}_{11,b},\hat{l}_{1,b}),\boldsymbol{u}_2(\hat{k}_{0,b},\hat{m}_{0,b}),\boldsymbol{y}_b)\in T^n_{P_{VUU_1U_2Y},\delta}.
  \end{align*}
  for some $\delta>0$.
  \item Block 1. The messages transmitted in Block 1 are dummy messages. Hence, the decoding of Block 1 need not be performed.
\end{enumerate}
The error analysis and information leakage analysis are almost the same as Section \ref{sec: coding scheme for R11} and is omitted here.

\section{proof of theorem \ref{the: degraded mac with one-side causal CSI}}\label{app: proof of the degraded mac with one-side causal CSI}
\subsection{Proof of Lower Bound}

By strong functional representation lemma\cite{li2018strong}, for any random variables $(X_1,S,U)$ with conditional distribution $P_{X_1S|U}$, one can construct a random variable $U_1$ such that $X_1$ can be specified by a deterministic function $x_1(u,u_1,s)$ and $U_1$ is independent of $S$ given $U$. Together with the fact that $S$ is independent of $U$, we have
\begin{align*}
  &I(U_1;Y|U,X_2) - I(U_1;Z|X_2,U,S) - H(S|U,U_1,X_2,Y) + H(S|Z,U,X_2)\\
  &\overset{(a)}{=}I(U_1;Y,S|U,X_2) - I(U_1;S|U,X_2,Y) - I(U_1;Z,S|X_2,U) - H(S|U,U_1,X_2,Y) + H(S|Z,U,X_2)\\
  &=I(U_1;Y|U,X_2,S) - I(U_1;Z|U,X_2,S) - H(S|U,X_2,Y)+ H(S|U,X_2,Z)\\
  &\overset{(b)}{=}I(U_1,X_1;Y|U,X_2,S) - I(U_1,X_1;Z|U,X_2,S) - H(S|U,X_2,Y)+ H(S|U,X_2,Z)\\
  &=I(X_1;Y|U,X_2,S) - I(X_1;Z|U,X_2,S) - H(S|U,X_2,Y)+ H(S|U,X_2,Z)\\
  &=I(X_1,S;Y|U,X_2) - I(X_1,S;Z|U,X_2).
\end{align*}
where $(a)$ follows by the fact that $S$ is independent of $(U,U_1,X_2)$, $(b)$ follows since $X_1$ is determined by $(U,U_1,S)$. For the second constraint on Sender 1,
\begin{align*}
  &I(U_1;Y|U,X_2)-H(S|U,U_1,X_2,Y)\\
  &=I(U_1;S,Y|U,X_2) - I(U_1;S|U,X_2,Y) - H(S|U,U_1,X_2,Y)\\
  &=I(U_1;Y|U,X_2,S)-H(S|U,X_2,Y)\\
  &\overset{(a)}{=}I(U_1,X_1;Y|U,X_2,S)-H(S|U,X_2,Y)\\
  &\overset{(b)}{=}I(X_1;Y|U,X_2,S)-H(S|U,X_2,Y)\\
  &=I(X_1,S;Y|U,X_2) - H(S|U,X_2)\\
  &\overset{(c)}{=}I(X_1,S;Y|U,X_2) - H(S)
\end{align*}
where $(a)$ follows since $X_1$ is determined by $(U,U_1,S)$, $(b)$ follows since $(U,U_1)-(X_1,X_2,S)-(Y,Z)$ form a Markov chain, $(c)$ follows by the independence between $S$ and $(U,X_2)$. Now we proceed to the sum rate. For the first constraint on the sum rate,
\begin{align*}
  &I(U,U_1,X_2;Y) - I(U_1;Z|X_2,U,S)-H(S|U,U_1,X_2,Y)+H(S|Z,U,X_2)\\
  &=I(U,U_1,X_2;S,Y) -I(U,U_1,X_2;S|Y) - I(U_1;Z|X_2,U,S)-H(S|U,U_1,X_2,Y)+H(S|Z,U,X_2)\\
  &=I(U,U_1,X_1,X_2;Y|S) - I(U_1,X_1;Z|X_2,U,S)-H(S|Y)+H(S|Z,U,X_2)\\
  &=I(X_1,X_2;Y|S) - I(X_1;Z|X_2,U,S)-H(S|Y)+H(S|Z,U,X_2)\\
  &\overset{(a)}{=}I(X_1,X_2,S;Y) - I(X_1,S;Z|U,X_2)
\end{align*}
where $(a)$ follows by the Markov chain $(U,U_1)-(X_1,X_2,S)-(Y,Z)$ and the fact that $S$ is independent of $U,X_2$. For the second achievable sum rate, it follows that
\begin{align*}
  &I(U,U_1,X_2;Y)-H(S|U,U_1,X_2,Y)\\
  &=I(U,U_1,X_2;S,Y)-I(U,U_1,X_2;S|Y)-H(S|U,U_1,X_2,Y)\\
  &=I(U,U_1,X_2;Y|S)-I(U,U_1,X_2;S|Y)-H(S|U,U_1,X_2,Y)\\
  &\overset{(a)}{=}I(U,U_1,X_1,X_2;Y|S)-I(U,U_1,X_2;S|Y)-H(S|U,U_1,X_2,Y)\\
  &=I(X_1,X_2;Y|S)-H(S|Y)=I(X_1,X_2,S;Y)-H(S).
\end{align*}
where $(a)$ follows since $X_1$ is determined by $(U,U_1,S)$.

\subsection{Proof of Upper Bound}
To prove the upper bound of Sender 1, by Fano's inequality and the fact that $\boldsymbol{X}_2$ is determined by $M_0$,
\begin{align}
  nR_1 &= H(M_1) \notag \\
  &=I(M_1;\boldsymbol{Y}|M_0,\boldsymbol{X}_2) - I(M_1;\boldsymbol{Z}|M_0,\boldsymbol{X}_2) + \delta \notag\\
  &\overset{(a)}{=}I(M_1;\boldsymbol{Y}|\boldsymbol{Z},M_0,\boldsymbol{X}_2) + \delta \notag\\
  &\leq I(M_1,\boldsymbol{X}_1,\boldsymbol{S};\boldsymbol{Y}|\boldsymbol{Z},M_0,\boldsymbol{X}_2) + \delta \notag\\
  &= I(\boldsymbol{X}_1,\boldsymbol{S};\boldsymbol{Y}|\boldsymbol{Z},M_0,\boldsymbol{X}_2) + I(M_1;\boldsymbol{Y}|\boldsymbol{Z},M_0,\boldsymbol{X}_2,\boldsymbol{X}_1,\boldsymbol{S}) + \delta \notag\\
  &\overset{(b)}{=} I(\boldsymbol{X}_1,\boldsymbol{S};\boldsymbol{Y}|\boldsymbol{Z},M_0,\boldsymbol{X}_2) + \delta \notag\\
  &=H(\boldsymbol{X}_1,\boldsymbol{S}|\boldsymbol{Z},M_0,\boldsymbol{X}_2) - H(\boldsymbol{X}_1,\boldsymbol{S}|\boldsymbol{Z},M_0,\boldsymbol{X}_2,\boldsymbol{Y}) + \delta \notag\\
  &\overset{(c)}{=}H(\boldsymbol{X}_1,\boldsymbol{S}|\boldsymbol{Z},M_0,\boldsymbol{X}_2) - H(\boldsymbol{X}_1,\boldsymbol{S}|M_0,\boldsymbol{X}_2,\boldsymbol{Y}) + \delta \notag\\
  \label{neq: R1 upper bound 1}&=I(\boldsymbol{X}_1,\boldsymbol{S};\boldsymbol{Y}|M_0,\boldsymbol{X}_2) - I(\boldsymbol{X}_1,\boldsymbol{S};\boldsymbol{Z}|M_0,\boldsymbol{X}_2) + \delta
\end{align}
where $(a),(c)$ follows by the degradedness of the wiretap channel, $(b)$ follows from the Markov chain $(M_1,M_0)-(\boldsymbol{X}_1,\boldsymbol{X}_2,\boldsymbol{S})-(\boldsymbol{Y},\boldsymbol{Z})$. We continue following the proof in \cite{liang2008multiple}. Setting
\begin{align*}
  U_i=(Y^{i-1},\boldsymbol{X}_2,M_0),
\end{align*}
it follows that $U_i$ is independent of $S_i$. Then, we have
\begin{align}
  &I(\boldsymbol{X}_1,\boldsymbol{S};\boldsymbol{Y}|M_0,\boldsymbol{X}_2) - I(\boldsymbol{X}_1,\boldsymbol{S};\boldsymbol{Z}|M_0,\boldsymbol{X}_2)\notag\\
  &= \sum_{i=1}^n I(\boldsymbol{X}_1,\boldsymbol{S};Y_i|Y^{i-1},M_0,\boldsymbol{X}_2) - I(\boldsymbol{X}_1,\boldsymbol{S};Z_i|Z^{i-1},M_0,\boldsymbol{X}_2)\notag\\
  &=\sum_{i=1}^n H(Y_i|Y^{i-1},M_0,\boldsymbol{X}_2) - H(Y_i|Y^{i-1},M_0,\boldsymbol{X}_2,\boldsymbol{X}_1,\boldsymbol{S})\notag\\
  &\quad\quad\quad\quad\quad - H(Z_i|Z^{i-1},M_0,\boldsymbol{X}_2) + H(Z_i|Z^{i-1},M_0,\boldsymbol{X}_2,\boldsymbol{X}_1,\boldsymbol{S})\notag\\
  &\overset{(a)}{\leq}\sum_{i=1}^n H(Y_i|U_i,X_{2,i}) - H(Y_i|U_i,X_{2,i},X_{1,i},S_i)\notag\\
  &\quad\quad\quad\quad\quad - H(Z_i|Y^{i-1},Z^{i-1},M_0,\boldsymbol{X}_2) + H(Z_i|Y^{i-1},M_0,\boldsymbol{X}_2,X_{1,i},S_{i})\notag\\
  &\overset{(b)}{=}\sum_{i=1}^n H(Y_i|U_i,X_{2,i}) - H(Y_i|U_i,X_{2,i},X_{1,i},S_i)\notag\\
  &\quad\quad\quad\quad\quad - H(Z_i|Y^{i-1},M_0,\boldsymbol{X}_2) + H(Z_i|Y^{i-1},M_0,\boldsymbol{X}_2,X_{1,i},S_{i})\notag\\
  \label{app: outer bound eq 1}&=\sum_{i=1}^n I(X_{1,i},S_i;Y_i|U_i,X_{2,i}) - I(X_{1,i},S_i;Z_i|U_i,X_{2,i})\\
  &=n(I(X_1,S;Y|U,X_2) - I(X_1,S;Z|U,X_2))\notag
\end{align}
where $(a)$ and $(b)$ follows by the Markov chain $(U_i,M_0)-(X_{1,i},X_{2,i},S_i)-(Y_i,Z_i)$ and the degradedness of the wiretap channel.  By Fano's inequality and $H(M_1|\boldsymbol{Y},M_0,\boldsymbol{X}_2,\boldsymbol{S})\leq H(M_1|\boldsymbol{Y})$, it follows that
\begin{align}
  nR_1 &= H(M_1) \notag\\
  &\leq I(M_1;\boldsymbol{Y}|M_0,\boldsymbol{X}_2,\boldsymbol{S}) + \delta\notag\\
  &= \sum_{i=1}^n I(M_1;Y_i|Y^{i-1},M_0,\boldsymbol{X}_2,\boldsymbol{S})+ \delta\notag\\
  &= \sum_{i=1}^n H(Y_i|Y^{i-1},M_0,\boldsymbol{X}_2,\boldsymbol{S}) - H(Y_i|Y^{i-1},M_0,M_1,\boldsymbol{X}_2,\boldsymbol{S})+ \delta\notag\\
  &\leq \sum_{i=1}^n H(Y_i|Y^{i-1},M_0,\boldsymbol{X}_2,X_{2,i},S_i) - H(Y_i|Y^{i-1},M_0,M_1,\boldsymbol{X}_1,\boldsymbol{X}_2,\boldsymbol{S})+ \delta\notag\\
  &\overset{(a)}{=} \sum_{i=1}^n H(Y_i|Y^{i-1},M_0,\boldsymbol{X}_2,X_{2,i},S_i) - H(Y_i|Y^{i-1},M_0,\boldsymbol{X}_2,X_{1,i},X_{2,i},S_i)+ \delta\notag\\
  \label{app: outer bound eq 2}&= \sum_{i=1}^n I(X_{1,i};Y_i|U_i,X_{2,i},S_i) + \delta\\
  &=n(I(X_1;Y|U,X_2,S)+\delta).\notag
\end{align}
where $(a)$ follows by the fact that conditions do not increase entropy and the Markov chain $(M_0,M_1)-(X_{1,i},X_{2,i},S_i)-(Y_i,Z_i)$. For the sum rate, we have
\begin{align*}
  n(R_0+R_1) &= H(M_0) + nR_1\\
  &\leq I(M_0;\boldsymbol{Y}) + nR_1 +\delta\\
  &\overset{(a)}{\leq} \sum_{i=1}^n I(M_0;Y_i|Y^{i-1}) + I(X_{1,i},S_i;Y_i|U_i,X_{2,i}) - I(X_{1,i},S_i;Z_i|U_i,X_{2,i}) + \delta \\
  &\leq \sum_{i=1}^n I(M_0,Y^{i-1},\boldsymbol{X}_2;Y_i)  + I(X_{1,i},S_i;Y_i|U_i,X_{2,i}) - I(X_{1,i},S_i;Z_i|U_i,X_{2,i}) + \delta \\
  &\overset{(b)}{=} \sum_{i=1}^n I(U_i,X_{1,i},X_{2,i},S_i;Y_i)- I(X_{1,i},S_i;Z_i|U_i,X_{2,i}) + \delta\\
  &\overset{(c)}{=}I(X_{1,i},X_{2,i},S_i;Y_i)- I(X_{1,i},S_i;Z_i|U_i,X_{2,i}) + \delta\\
  &=n(I(X_1,X_2,S;Y)-I(X_1,S;Z|U,X_2)+\delta).
\end{align*}
where $(a)$ follows by \eqref{app: outer bound eq 1}, $(b)$ follows by the definition of $U_i$, $(c)$ follows by the Markov chain $U_i-(X_{1,i},X_{2,i},S_i)-Y_i$,
and
\begin{align*}
  n(R_0+R_1)&= H(M_0) + nR_1\\
  &\leq I(M_0;\boldsymbol{Y}) + nR_1 +\delta\\
  &\overset{(a)}{\leq} \sum_{i=1}^n I(M_0;Y_i|Y^{i-1},S_i) + I(X_{1,i};Y_i|U_i,X_{2,i},S_i) + \delta\\
  &\leq \sum_{i=1}^n I(M_0,Y^{i-1},\boldsymbol{X}_2,X_{2,i};Y_i|S_i) + I(X_{1,i};Y_i|U_i,X_{2,i},S_i) + \delta\\
  &=\sum_{i=1}^n I(U_i,X_{1,i},X_{2,i};Y_i|S_i)+\delta\\
  &=\sum_{i=1}^n I(X_{1,i},X_{2,i};Y_i|S_i)+\delta = n(I(X_1,X_2;Y|S) +\delta)
\end{align*}
where $(a)$ follows by the independence of $S_i$ and $(M_0,Y^{i-1})$ and \eqref{app: outer bound eq 2}. The proof is completed.

\ifCLASSOPTIONcaptionsoff
  \newpage
\fi

\bibliographystyle{ieeetr} 
\bibliography{ref}
\end{document}